\def\forc{0}
\def\neurips{0}
\def\notes{0}
\title{Counting Distinct Elements in the Turnstile Model\\ with Differential Privacy under Continual Observation
\ifnum\forc=1
\footnote{Non-archival track submission. Previously appeared at NeurIPS 2023. 
\href{https://proceedings.neurips.cc/paper_files/paper/2023/hash/0ef1afa0daa888d695dcd5e9513bafa3-Abstract-Conference.html}{[Link to publication]}
}
\else\fi
}
\author{Palak Jain\thanks{Department of Computer Science, Boston University. \texttt{\{palakj,ikalemaj,satchit,sofya,ads22\}@bu.edu}.} 
\and Iden Kalemaj\footnotemark[2]
\and Sofya Raskhodnikova\footnotemark[2]
\and Satchit Sivakumar\footnotemark[2]
\and Adam Smith\footnotemark[2]
 \date{\vspace{-5ex}}
 \date{}}
\newcommand\footnoteref[1]{\protected@xdef\@thefnmark{\ref{#1}}\@footnotemark}
\newcommand{\ignore}[1]{}
\newcommand{\R}{\mathbb{R}}
\newcommand{\N}{\mathbb N}
\newcommand{\eps}{\varepsilon}
\newcommand{\poly}{\mathrm{poly}}
\def\polylog{\operatorname{polylog}}
\newcommand{\cN}{\mathcal{N}}
\newcommand{\cA}{\mathcal{A}}
\newcommand{\cB}{\mathcal{B}}
\newcommand{\cY}{\mathcal{Y}}
\DeclareMathSymbol{\mdot}{\mathord}{symbols}{"01}
\newcommand{\abs}[1]{\left\lvert #1 \right\rvert}
\newcommand{\E}{\mathbb{E}}
\newcommand{\cM}{\mathcal{M}}
\algrenewcommand\algorithmicrequire{\textbf{Input:}}
\algrenewcommand\algorithmicensure{\textbf{Output:}}
\newtheorem{theorem}{Theorem}[section]
\newtheorem{definition}[theorem]{Definition}
\newtheorem{lemma}[theorem]{Lemma}
\newtheorem{corollary}[theorem]{Corollary}
\newtheorem{remark}[theorem]{Remark}
\newtheorem{claim}[theorem]{Claim}
\newcommand{\Sec}[1]{\hyperref[sec:#1]{Section~\ref*{sec:#1}}} 
\newcommand{\Eqn}[1]{\hyperref[eq:#1]{(\ref*{eq:#1})}} 
\newcommand{\Fig}[1]{\hyperref[fig:#1]{Fig.\,\ref*{fig:#1}}} 
\newcommand{\Tab}[1]{\hyperref[tab:#1]{Tab.\,\ref*{tab:#1}}} 
\newcommand{\Thm}[1]{\hyperref[thm:#1]{Theorem\,\ref*{thm:#1}}} 
\newcommand{\Fact}[1]{\hyperref[fact:#1]{Fact\,\ref*{fact:#1}}} 
\newcommand{\Lem}[1]{\hyperref[lem:#1]{Lemma\,\ref*{lem:#1}}} 
\newcommand{\Lems}[2]{\hyperref[lem:#1]{Lemmas\,\ref*{lem:#1}} and~\hyperref[lem:#2]{\ref*{lem:#2}}} 
\newcommand{\Prop}[1]{\hyperref[prop:#1]{Prop.\,\ref*{prop:#1}}} 
\newcommand{\Cor}[1]{\hyperref[cor:#1]{Corollary~\ref*{cor:#1}}} 
\newcommand{\Conj}[1]{\hyperref[conj:#1]{Conjecture~\ref*{conj:#1}}} 
\newcommand{\Def}[1]{\hyperref[def:#1]{Definition~\ref*{def:#1}}} 
\newcommand{\Alg}[1]{\hyperref[alg:#1]{Algorithm~\ref*{alg:#1}}} 
\newcommand{\Proc}[1]{\hyperref[proc:#1]{Procedure~\ref*{proc:#1}}} 
\newcommand{\Step}[1]{\hyperref[step:#1]{Step~\ref*{step:#1}}} 
\newcommand{\Steps}[2]{\hyperref[step:#1]{Steps~\ref*{step:#1}} and~\hyperref[step:#2]{\ref*{step:#2}}} 
\newcommand{\Stepss}[3]{\hyperref[step:#1]{Steps~\ref*{step:#1}},~\hyperref[step:#2]{\ref*{step:#2}}, and~\hyperref[step:#3]{\ref*{step:#3}}} 
\newcommand{\Ex}[1]{\hyperref[ex:#1]{Ex.~\ref*{ex:#1}}} 
\newcommand{\Clm}[1]{\hyperref[clm:#1]{Claim~\ref*{clm:#1}}} 
\newcommand{\Inv}[1]{\hyperref[inv:#1]{Invariant~\ref*{inv:#1}}} 
\newcommand{\Rem}[1]{\hyperref[rem:#1]{Remark~\ref*{rem:#1}}} 
\newcommand{\Obs}[1]{\hyperref[obs:#1]{Observation~\ref*{obs:#1}}} 
\newcommand{\Definition}[1]{\hyperref[definition:#1]{Definition~\ref*{definition:#1}}} 
    \definecolor{mygreen}{RGB}{34,164,31}
    \newcommand{\ssnote}[1]{{\color{violet}\footnote{{\color{violet} {\bf SS:} #1}}}}
    \newcommand{\srnote}[1]{{\color{blue}\footnote{{\color{blue} {\bf SR:} #1}}}}
    \newcommand{\pjnote}[1]{{\color{mygreen}\footnote{{\color{mygreen} {\bf PJ:} #1}}}}
    \newcommand{\inote}[1]{{\color{brown}\footnote{{\color{brown} {\bf IK:} #1}}}}
    \newcommand{\asnote}[1]{{\color{RedOrange}\footnote{{\color{RedOrange} {\bf AS:} #1}}}}
    \newcommand{\sstext}[1]{{\color{violet} #1}}
    \newcommand{\ssnote}[1]{}
    \newcommand{\srnote}[1]{}
    \newcommand{\pjnote}[1]{}
    \newcommand{\inote}[1]{}
    \newcommand{\asnote}[1]{}
    \newcommand{\sstext}[1]{{#1}}
    \renewcommand{\sout}[1]{}
\newcommand{\CR}{continual release model} 
\renewcommand{\epsilon}{\varepsilon} 
\newcommand{\X}{\mathcal{X}} 
\newcommand{\BigO}[1]{\ensuremath{\operatorname{O}\left(#1\right)}} 
\newcommand{\dset}{y} 
\newcommand{\dstream}{x} 
\renewcommand{\vec}[1]{} 
\newcommand{\wmax}{w_{\max}}
\newcommand{\mech}{\ensuremath{\mathcal{M}}\xspace} 
\newcommand{\alg}{\ensuremath{\mathcal{A}}\xspace} 
\newcommand{\marginals}[1]{\ensuremath{\mathrm{\sf Marginals}_{#1}}} 
\newcommand{\err}{\text{\sf ERR}} 
\newcommand{\msf}{\mathsf}
\newcommand{\countdistinct}{\msf{CountDistinct}}
\newcommand{\innerproducts}{\msf{InnerProducts}}
\newcommand{\existence}[1]{f_{#1}}
\newcommand{\boundedexist}[1]{\tilde{f_{{#1}}}}
\newcommand{\zo}{\{0,1\}} 
\newcommand{\paren}[1]{{\left ( {#1} \right)}}
\newcommand{\flip}{\mathsf{flip}}
\newcommand{\uni}{\mathcal{U}}
\newcommand{\univ}{\uni}
\newcommand{\out}{s}
\newcommand{\abov}{\mathsf{Above}}
\newcommand{\below}{\mathsf{Below}}
\newcommand{\ele}{u}
\begin{document}
\maketitle
\begin{abstract}

Privacy is a central challenge for systems that learn from sensitive data sets, especially when a system's outputs must be continuously updated to reflect changing data.
We consider the achievable error for differentially private continual release of a basic  statistic---the number of distinct items---in a stream where items may be both inserted and deleted (the \textit{turnstile} model). 
With only insertions, existing algorithms have additive error just polylogarithmic in the length of the stream $T$. 
We uncover a much richer landscape in the turnstile model,
even without considering memory restrictions. 
We show that every differentially private mechanism that handles insertions and deletions has \textit{worst-case} additive error  at least $T^{1/4}$ even under  a relatively weak, \textit{event-level} privacy definition. 
Then, we identify a parameter of the input stream, its \textit{maximum flippancy}, that is low for natural data streams and for which we give tight parameterized error guarantees. Specifically, the maximum flippancy is the largest number of times that the contribution of a single 
item to the distinct elements count changes over the course of the stream. We 
present an \textit{item-level} differentially private mechanism that, for all turnstile streams with  maximum flippancy  $w$, continually outputs the number of distinct elements with an $O(\sqrt{w} \cdot \mathsf{poly}\log T)$ additive error,  without requiring prior knowledge of $w$. 
We prove that this is the best achievable error bound  that depends only on $w$, for a large range of values of $w$.
When $w$ is small, 
the error of our mechanism is similar to the polylogarithmic in $T$ error in the insertion-only setting, bypassing the hardness in the turnstile model. 
\end{abstract}

\section{Introduction}\label{sec:intro} 


Machine learning algorithms are frequently run on sensitive data. In this context, a central challenge is to protect the privacy of individuals whose information is contained in the training set. Differential privacy \cite{DworkMNS16} provides a rigorous framework for the design and analysis of algorithms that publish aggregate statistics, such as parameters of machine learning models, while preserving privacy. In this work, we focus on the model of differential privacy interchangeably called {\em continual observation} and {\em continual release} that was introduced by ~\citet{DworkNPR10} and~\citet{ChanSS11} to study privacy in settings when both the data and the published statics are constantly updated. 
One of the most fundamental statistics about a data stream is the number of distinct elements it contains (see, e.g., the book by Leskovec et al.~\cite{LeskovecRU14}). The problem of counting distinct elements has been widely studied,
starting with the work of Flajolet and Martin~\cite{FlajoletM85}, and has numerous applications~\cite{Akella2003,EstanVF06,MetwallyAA08, HeuleNH13,Baker2018}, including monitoring the number of logins from distinct accounts to a streaming service, tracking the number of different countries represented by people in a chat room, and tracking the number of students signed up for at least one club at a university. Algorithms for this problem are also used as basic building blocks in more complicated data analyses.

We investigate the problem of privately
counting the number of distinct elements under continual observation in the turnstile model, which allows both
element insertions and deletions. 
In the continual release model, a data collector receives a sensitive dataset as a stream of inputs and produces, after receiving each input, an output that is accurate for all inputs received so far.
The input stream is denoted $\dstream$ and its length (also called the {\em time horizon}) is denoted $T$. The elements come from a universe $\uni$. Each entry in the stream is 
an {\em insertion} (denoted by $+\ele$) or a {\em deletion}  (denoted by $-\ele$) of some element $\ele \in \uni$ or, alternatively, a {\em no-op} (denoted by $\bot$) that represents that no update occurred in the current time step.
More formally, for a universe $\uni$, let $\uni_{\pm}$ denote the set  $\{+,-\}\times \uni
\cup \{\bot\}$ of possible stream entries. The shorthand $+\ele$ and $-\ele$ is used for the pairs $(+,\ele)$ and $(-,\ele)$.  
Given a vector $\dstream$ of length $T$ and an integer $t\in[T]$, the vector $\dstream[1:t]$ denotes the prefix of $\dstream$ consisting of the first $t$ entries of $\dstream$. 

Next, we define the function $\countdistinct$ in the (turnstile) continual release model.
 
\begin{definition}[Existence vector, $\countdistinct$]\label{def:existence-vector} Fix a universe $\uni$ and a time horizon $T \in \N$. For an element $u\in\uni$ and a stream $\dstream \in \uni_{\pm}^T$, the {\em existence vector} 
$\existence{\ele}(\dstream) \in \{0,1\}^T$ is an indicator vector that tracks the existence of element $u$ in $\dstream$: specifically, for each $t\in[T]$, the value
$\existence{\ele}(x)[t]$ is $1$ if and only if 
there are strictly more insertions than deletions
of element $u$ in $\dstream[1:t].$
The function $\countdistinct{}:\uni_\pm^T \to \N^T$ returns a vector of the same length as its input, where 
$\countdistinct(x)[t] = \sum_{\ele \in \uni}\existence{\ele}(x)[t]$ for all $t\in [T]$.  
\end{definition}
The focus of our investigation is the best achievable error
in the continual release model for a given time horizon $T$ and privacy parameters. 
We study the worst-case (over all input streams and time steps $t$) additive error of privately approximating the distinct elements counts under continual release. 
\begin{definition}[Error of an answer vector and error of a mechanism for $\countdistinct$]
    Given an answer vector $a\in \R^T$, the error of this vector with respect to the desired function value $f(\dstream)\in \R^T$ computed on dataset $\dstream$ is defined as
    $\err_f(\dstream,a)=\|f(\dstream)-a \|_\infty.$
A mechanism for $\countdistinct$ in the continual release model is $\alpha$-accurate if it outputs a vector of answers $a$ with error  $\err_{\countdistinct}(\dstream,a) \leq \alpha$ with probability at least 0.99.
\end{definition}

Next, we discuss privacy. Originally, differential privacy~\cite{DworkMNS16} was defined in a setting where a data collector outputs the desired information about an entire dataset all at once.  We call this the {\em batch model} to contrast it with
continual release.
In the batch model, two datasets are called {\em neighbors} if they differ in the data of one individual. There are two natural ways to adapt this definition to the continual release model~\cite{DworkNPR10,ChanSS11}, depending on the desired privacy guarantees. 

\begin{definition}[Neighboring streams] \label{def:neigbhors}
Let $x, x' \in \uni_\pm^T$ be two streams of length $T$. Streams $x$ and  $x'$  are {\em event-neighbors} if one can be obtained from the other by replacing a stream entry with $\perp$. Streams $x$ and  $x'$ are {\em item-neighbors} if one can be obtained from the other by replacing a subset of stream entries pertaining to one specific element of $\uni$ with symbols $\perp$.
\end{definition}
Differential privacy can be defined with respect to any notion of neighboring datasets. There are two privacy parameters: $\eps>0$ and $\delta\in [0,1)$. 
An algorithm $\alg$ is $(\eps,\delta)$-{\em differentially private (DP)} if for all pairs of neighboring datasets $\dstream,\dstream'$ and all events $S$ in the output space of $\cA$, 
\begin{align*}
        \Pr[\cA(\dstream) \in S] \leq e^\eps \Pr[\cA(\dstream') \in S] + \delta. 
\end{align*}
The case when $\delta=0$ is referred to as {\em pure} differential privacy, and the general case as {\em approximate} differential privacy.
For event-neighboring (respectively, item-neighboring) streams $\dstream,\dstream' \in \uni_{\pm}^T$, we say that $\cA$ is {\em $(\eps, \delta)$-event-level-DP} (respectively, {\em item-level-DP}). Item-level differential privacy imposes 
a more stringent requirement than event-level, since it guards against  larger changes in the input stream. To contrast with the batch setting, we refer to continual release algorithms as {\em mechanisms}.

In the batch setting, where only $\countdistinct(x)[T]$ is released, there is an $\eps$-DP algorithm for counting distinct elements with expected error $O(1/\eps)$ since the function $\countdistinct(x)[T]$ has sensitivity 1---regardless of 
whether we consider event-level or item-level privacy. 
Privacy is more challenging in the continual release setting, where we aim to release a sequence of estimates, one for each time $t$, and we require that 
the privacy guarantee hold for the entire sequence of outputs. 
Prior work on privately estimating distinct elements in this setting  considered the insertion-only model, exclusively: Bolot et al.~\cite{BolotFMNT13} show that one can get a sequence of estimates, all of  which are within additive error $\mathrm{poly}(\log T)/\eps$. Their result holds for both item-level and event-level privacy (which are essentially equivalent for counting distinct elements with only insertions). Follow-up work generalized their mechanism but, again, considered only insertions~\cite{Ghazi0NM23,EpastoMMMVZ23}.

We uncover a much richer landscape in the turnstile model,
even without considering memory restrictions. 
We show that any differentially private mechanism that 
handles insertions and deletions has \textit{worst-case} additive error  at least $T^{1/4}$ even under \textit{event-level} privacy, the weaker of the two privacy notions. 
To overcome this lower bound, we identify a property of the input stream, its \textit{maximum flippancy}, that is low for natural data streams and for which one can give tight parameterized error guarantees. To define flippancy, recall the 
notion
of the existence vector from \Def{existence-vector}.

\begin{definition}[Flippancy] 
Given a stream $x$ of length $T$ and an element $\ele \in \uni$, the {\em flippancy} of $\ele$ in $x$, denoted by $\flip(\ele, x)$, is the number of pairs of adjacent entries in the existence vector $\existence{\ele}(x)$ with different values. That is, 
$
\flip(\ele, x)
= | \{ j \in [T-1] : \existence{\ele}(x)[j] \neq \existence{\ele}(x)[j + 1]  \} |.
$
The maximum flippancy of a stream $x$, denoted $w_x$, is $\max_{\ele \in \uni}\flip(\ele, x)$.
\end{definition}
In other words, the maximum flippancy is the largest number of times 
the contribution of a single item to the distinct elements count changes
over the course of the stream. 
We design item-level private mechanisms 
whose error scales with the maximum flippancy of the  stream, even though the maximum flippancy is not an input to the mechanism. We show matching lower bounds for item-level privacy that hold in all parameter regimes.  For a large range of the flippancy parameter, 
we also show a matching lower bound for event-level privacy, via a different argument. This leaves a range with an intriguing gap between item-level and event-level bounds.

\subsection{Our results}\label{sec:results}

\begin{table*}[t] 
\begin{center} 
\begin{tabular}{|c||l r|l r|} 
      \hline
Bounds & \multicolumn{2}{c|}{Event-Level Privacy}
 & \multicolumn{2}{c|}{Item-Level Privacy} 
 \\
 \hline
 \hline
 \begin{tabular}{c} \vspace{5pt}\\Upper 
 \\ \end{tabular} 
 & \multicolumn{3}{c}{$ $} & 
 \\[-2.2em]
 & 
 \multicolumn{3}{l}
 {\hspace{9mm} 
$\tilde{O}\left(\mathsf{min}\left(\left(\sqrt{w_x}\log T + \log^3T\right) \cdot \frac{\sqrt{\log 1/\delta}}{\eps}, \frac{\left(T \log 1/\delta\right)^{1/3}}{\eps^{2/3}}, T \right)\right)  $}
 & 
 \small
 {(Thm.~\ref{thm:item-epsdel})}
 \\
 & \multicolumn{3}{c}{$ $} & \\[-0.2em]
\hline
\begin{tabular}{c} Lower 
\\ \end{tabular} & 
$\Omega\left(\mathsf{min}\Big(\frac{\sqrt{w_x}}{\eps},\frac{T^{1/4}}{\eps^{3/4}}, T\Big)\right)$
&
\small
{(Thm.~\ref{thm:insertion+deletion+eps})}
&
$ \tilde{\Omega}\Big(\mathsf{min} \Big( \frac{\sqrt{w_x}}{\eps}, \frac{T^{1/3}}{\eps^{2/3}}, T  \Big)\Big)$
&
{(Thm.~\ref{thm:LB_CD_item})}
\\ 
\hline
\end{tabular}
\end{center}
\caption{Summary of our results: 
bounds on the worst-case additive error
for $\countdistinct$ under event-level and item-level $(\eps,\delta)$-differential privacy, with $\eps \leq 1$ and $\delta = o(\frac{\eps}{T})$. The upper bound depends
on the maximum flippancy $w_x$ of the input $x$, for every $x$.  The lower bounds apply to the worst-case error of an algorithm taken over all inputs with a given maximum flippancy.
}
\label{tab:results}
\end{table*}

Our results are summarized in Table~\ref{tab:results}. 
Our first result is a mechanism for privately approximating $\countdistinct$ for turnstile streams. 
For a stream $x$ of length $T$ with maximum flippancy $w_x$, this mechanism is item-level-DP and has error $\BigO{\min(\sqrt{w_x}\mdot\polylog T, T^{1/3}})$.
Crucially, the mechanism is not given the maximum flippancy up front. 

\begin{theorem}[Upper bound]\label{thm:item-epsdel}
    For all $ \eps,\delta \in (0,1]$ and sufficiently large $T \in \N$, there exists an $(\eps, \delta)$-item-level-DP mechanism for $\countdistinct$ that is  $\alpha$-accurate for all turnstile streams $x$ of length $T$, where
    $$\alpha = \tilde{O}\left(\mathsf{min}\left(\left(\sqrt{w_x}\log T + \log^3T\right) \cdot \frac{\sqrt{\log 1/\delta}}{\eps}, \frac{\left(T \log 1/\delta\right)^{1/3}}{\eps^{2/3}}, T \right)\right),$$
    and $w_x$ is the maximum flippancy of the stream $x$. 
\end{theorem}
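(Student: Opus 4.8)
The plan is to build the mechanism from three sub-mechanisms — a flippancy-parameterized one whose error scales with a \emph{guessed} bound $w$, a flippancy-independent one with error $\tilde{O}\big((T\log(1/\delta))^{1/3}/\eps^{2/3}\big)$, and the trivial one that always outputs $0$ — run them all in parallel, and add a data-dependent selection step that recovers the best of the three even though $w_x$ is not known.

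\textbf{The flippancy-parameterized mechanism (the core).} Fix a power of two $w$. For each element $u$ let $\boundedexist{u}$ (which depends on $w$) be the \emph{truncated} existence vector that follows $\existence{u}(x)$ until $\existence{u}(x)$ has changed value $w$ times and is frozen afterwards, and set $\widetilde{\countdistinct}_w(x)[t]=\sum_{u\in\uni}\boundedexist{u}(x)[t]$. The two facts I would lean on are: (i) every $\boundedexist{u}$ has at most $w$ flips by construction, so if $w_x\le w$ then nothing is ever truncated and $\widetilde{\countdistinct}_w(x)=\countdistinct(x)$; and (ii) for item-neighbors $x,x'$ differing in element $u$, the difference $\widetilde{\countdistinct}_w(x)-\widetilde{\countdistinct}_w(x')=\boundedexist{u}(x)-\boundedexist{u}(x')$ is $\{-1,0,1\}$-valued and changes value at most $2w$ times. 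I would release $\widetilde{\countdistinct}_w$ by applying \emph{one} binary-tree continual-counting mechanism to its increment sequence: store, for each dyadic interval $I\subseteq[T]$, the sum of increments over $I$ plus i.i.d.\ Gaussian noise, and answer query $t$ by the sum of the $\le\log T$ noisy dyadic sums covering $[1,t]$. By (ii), an item change perturbs at most $2w$ of the stored sums at each of the $\log T$ dyadic levels, each by at most $2$, so the collection of stored values has $\ell_2$-sensitivity $O(\sqrt{w\log T})$ under item-neighbors; the Gaussian mechanism calibrated to this sensitivity is $(\eps,\delta)$-item-level-DP, and since each answer is a sum of only $\le\log T$ noise terms, a union bound over $t\in[T]$ yields error $\tilde{O}\big(\sqrt{w}\cdot\sqrt{\log(1/\delta)}/\eps\big)$ whenever $w_x\le w$. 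Using a single tree for the aggregate count, rather than splitting it into $\Theta(w)$ monotone sub-counts and noising each separately, is precisely what produces $\sqrt{w}$ instead of $w$.

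\textbf{The other two ingredients.} For the $\tilde{O}(T^{1/3})$ bound I would use lazy recomputation driven by the sparse-vector / Above-Threshold technique: keep a published value $c$; monitor the queries $q_t=|\countdistinct(x)[t]-c|$, which (with $c$ fixed) have item-level sensitivity $1$ since a single item shifts the count by at most $1$ at every step; when the noisy $q_t$ crosses a threshold $\tau$, set $c\gets\countdistinct(x)[t]$ plus fresh Laplace noise, publish it, and restart the monitor. Since $\countdistinct(x)$ has total variation at most $T$, consecutive recomputations are $\Omega(\tau)$ apart, so there are $O(T/\tau)$ of them; fixing an a-priori cap of $O(T/\tau)$ restarts and composing (advanced composition) forces per-step noise $\tilde{O}\big(\sqrt{(T/\tau)\log(1/\delta)}/\eps\big)$, and balancing this against $\tau$ gives $\tau=\Theta\big((T\log(1/\delta))^{1/3}/\eps^{2/3}\big)$ and total error $\tilde{O}(\tau)$. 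The third ingredient is the all-zeros mechanism, whose error is at most $\countdistinct(x)[t]\le T$.

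\textbf{Combining without knowing $w_x$.} Run the flippancy-parameterized mechanism for every $w\in\{1,2,4,\dots,T\}$ together with the $T^{1/3}$ mechanism, splitting the budget (a polylogarithmic loss), yielding already-published estimate vectors $\widehat{C}_w$ and $\widehat{C}_\infty$. Now run one more Above-Threshold pass on $x$ with the queries $r_w=\|\widehat{C}_w-\countdistinct(x)\|_\infty$ in increasing $w$ — each has item-level sensitivity $1$ with $\widehat{C}_w$ held fixed — and with threshold at $w$ roughly the claimed accuracy of the flippancy-$w$ mechanism; let $w^\star$ be the first $w$ whose noisy $r_w$ is below threshold. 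The query at $w=2^{\lceil\log_2\max(w_x,1)\rceil}$ certifies below (both $\widehat{C}_w$ and $\countdistinct(x)$ are accurate there), so $w^\star\le 2w_x$, while the halting condition guarantees $\|\widehat{C}_{w^\star}-\countdistinct(x)\|_\infty=\tilde{O}\big(\sqrt{w_x}\cdot\sqrt{\log(1/\delta)}/\eps\big)$. Finally output $\widehat{C}_{w^\star}$, $\widehat{C}_\infty$, or the all-zeros vector — whichever has the smallest a-priori error guarantee — which realizes the $\min$ in the statement; $(\eps,\delta)$-item-level privacy follows by composing all the pieces, each of which (including both Above-Threshold passes) is DP as a function of $x$.

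\textbf{Where the difficulty lies.} The crux is the sensitivity calculation in the core mechanism — recognizing that truncating per-element existence vectors to $w$ flips makes the dyadic representation of the aggregate count have $\ell_2$-sensitivity only $O(\sqrt{w\log T})$ under item-neighbors, so a single continual-counting instance suffices and the error is $\sqrt{w_x}$ rather than $w_x$. The remaining work is careful privacy accounting for the restarted sparse-vector mechanism and for the selection step (the a-priori restart cap is what keeps the privacy cost data-independent) and optimizing the logarithmic exponents to reach the stated $\log^3 T$ floor.
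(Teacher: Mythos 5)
Your core flippancy-parameterized mechanism is essentially the paper's \Cref{alg:item-upper}: truncate each element's existence vector once it has flipped $w$ times, feed the aggregate count into a single binary-tree mechanism, and bound the $\ell_2$-sensitivity of the collection of tree-node values by $O(\sqrt{w\log T})$ because an item-level change affects at most $2w$ nodes per level, each by at most $2$ --- this is exactly the paper's \Cref{lem:g_sensitivity}, and it is indeed the step that buys $\sqrt{w}$ instead of $w$. Your $T^{1/3}$ ingredient is a re-derivation of the sensitivity-one recompute mechanism that the paper simply imports from Jain et al.\ (\Cref{thm:recompute_rho}). Up to that point you are on the paper's route.

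The genuine gap is in the combination step: as written, your selection is offline, while the theorem is about the continual release model. The queries $r_w=\|\widehat{C}_w-\countdistinct(x)\|_\infty$ take a maximum over the entire horizon, and you only output $\widehat{C}_{w^\star}$ (or $\widehat{C}_\infty$, or the zero vector) after the Above-Threshold pass over all $w$ has finished, i.e., after the whole stream has been read. A continual release mechanism must commit to its estimate for time $t$ at time $t$; it cannot wait until time $T$ to announce which of the $\log T+2$ candidate vectors to trust, nor retroactively revise earlier answers. (The privacy accounting of your selection is fine --- with $\widehat{C}_w$ fixed the queries have item-level sensitivity $1$ and adaptive composition applies --- the problem is the model, not the privacy.) The paper instead selects online (\Cref{alg:item-w}): at each time step it feeds the sparse vector technique the sensitivity-one query ``number of items in $x[1:t]$ with flippancy at least $\hat w$, minus $\sqrt{\hat w/\rho}$,'' doubles $\hat w$ on an ``above'' answer, and outputs the corresponding base mechanism's estimate at that very step; accuracy then relies on a fact your proposal never uses (\Cref{cor:item-upper}): \Cref{alg:item-upper} with bound $\hat w$ stays accurate up to an additive $\ell$ even when $\ell\approx\sqrt{\hat w/\rho}$ items exceed the bound, since those items are merely truncated out of the count. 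Your realized-error queries could plausibly be made online (maintain a monotone $w^\star$, query the prefix-maximum error before emitting each output, charge the at most $\log T$ ``above'' answers to one SVT instance), but that restructuring --- together with the argument that every published output is certified accurate at the moment it is published --- is precisely the piece missing from your write-up.
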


Since this mechanism is item-level-DP, it is also event-level-DP with the same privacy parameters. The error it achieves is the best possible in terms of dependence
only on $w_x$ for item-level privacy, and this error is nearly tight for event-level privacy. 
When $w_x$ is small, as is the case for many natural streams, our mechanism has error $O(\polylog T)$, similar to mechanisms for the insertion-only setting.  

\Thm{item-epsdel} can be easily extended to $\eps$ bounded by any constant larger than $1$. We fixed the bound to be $1$ to simplify the presentation. 
Our mechanism has polynomial time and space complexity in the input parameters, although it does not achieve the typically sublinear space guarantees of streaming algorithms. (See ``Bounded Memory'' in \Cref{sec:limitations} for discussion.)

Our lower bounds on the accuracy for $\countdistinct$ for turnstile streams 
are parametrized by a flippancy bound $w$, and apply for streams with maximum flippancy $w_x \leq w$. For event-level DP, our lower bound shows that for all mechanisms with error guarantees expressed solely in terms of the maximum flippancy $w_x$, time horizon $T$, and privacy parameter $\eps$, our $\countdistinct$ mechanism is asymptotically optimal for a large range of values of $w_x$, namely, for all $w _x\leq T^{1/2}$ and $w_x \geq T^{2/3}$. The best achievable error for $w_x \in (T^{1/2}, T^{2/3})$ for event-level differential privacy remains an open problem. 

\begin{theorem}[Event-level lower bound]\label{thm:insertion+deletion+eps}
For all $\eps, \delta \in (0,1]$, sufficiently large $w,T \in \N$ such that $w \leq T$, and all 
$(\eps, \delta)$-event-level-DP 
mechanisms that are $\alpha$-accurate for $\mathsf{CountDistinct}$ on turnstile streams of length $T$ with maximum flippancy at most $w$, if $\delta = o(\frac{\eps}{T})$, then
$$\alpha = \Omega\left(\mathsf{min}\left(\frac{\sqrt{w}}{\eps},\frac{T^{1/4}}{\eps^{3/4}}, T\right)\right).$$
\end{theorem}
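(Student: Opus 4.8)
The plan is to prove the three terms inside the $\min$ separately and keep the largest. The bound $\alpha=\Omega(T)$ covers the degenerate regime: $T^{1/4}/\eps^{3/4}\ge T$ is equivalent to $\eps T\le 1$, and in that case the path of $T$ event-neighboring streams that inserts (and never deletes) elements $1,2,\dots,T$, one per step, has endpoints whose $\countdistinct$ values at time $T$ are $0$ and $T$; these endpoints are at group distance $T$, so group privacy gives a blow-up factor $e^{\eps T}=O(1)$ and an additive term $Te^{\eps T}\delta=o(1)$ (using $\delta=o(\eps/T)$), hence the output distributions at the two endpoints are $o(1)$-close and no mechanism can be $\alpha$-accurate on both when $\alpha<T/2$. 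From now on assume $\eps T>1$. I would then build two hard stream families, one of maximum flippancy $\Theta(w)$ targeting the $\sqrt w/\eps$ bound and one (invoked when $w\ge\sqrt{T\eps}$) of maximum flippancy $\Theta(\sqrt{T\eps})$ targeting the $T^{1/4}/\eps^{3/4}$ bound.

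Both families reduce $\countdistinct$ on turnstile streams to the classical lower bound (via fingerprinting) for privately answering a batch of counting queries: answering $q$ counting queries on a dataset of $m$ records to worst-case additive error $\beta$ requires, for $\delta$ small enough, $\beta=\tilde\Omega(\min(\sqrt q/\eps,\,m))$. I would use $m$ elements $u_1,\dots,u_m$ as the records and $q$ designated time steps $t_1<\dots<t_q$ as the queries, encoding the value of record $i$ on query $j$ as $\existence{u_i}(x)[t_j]$, so that $\countdistinct(x)[t_j]$ is exactly the $j$-th answer and an $\alpha$-accurate mechanism answers all $q$ queries to error $\alpha$. The delicate point, and the source of the gap with the item-level bound of \Thm{LB_CD_item}, is that an event-neighboring edit changes a single stream entry, so a single record's influence must be toggled by a single insertion or deletion. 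This is arranged by keeping each element's running count (insertions minus deletions) hovering near $0$ along a Dyck-path-like schedule, so that deleting one early insertion of $u_i$ pushes its walk below threshold and flips $\existence{u_i}$ at a large, predictable set of later query times, and nowhere that breaks the reduction. Realizing such schedules for all $m$ records at once, with flippancy at most $w$ per element and with each query window (which must fit the toggles of all active records) contained in length $T$, pins down the relationship among $m$, $q$, $w$, and $T$; the event-level cascade is necessarily less position-efficient than the item-level construction, which is exactly what drops the exponent from $T^{1/3}$ to $T^{1/4}$.

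Finally I would optimize the parameters against these constraints. When $w$ is the binding resource ($w\lesssim\sqrt{T\eps}$), take $q=\Theta(w)$ query points -- so each element has flippancy $\Theta(w)$ -- and $m=\Theta(\sqrt w/\eps)$ records, which still fits in length $T$; the counting-query bound gives $\alpha=\tilde\Omega(\sqrt w/\eps)$. When the length $T$ is binding, balancing $q$, $m$, the window sizes, and the cascade constraint against the length budget gives $q=\Theta(\sqrt{T\eps})$ (flippancy $\Theta(\sqrt{T\eps})\le w$) and $m=\Theta(T^{1/4}/\eps^{3/4})$, hence $\alpha=\tilde\Omega(T^{1/4}/\eps^{3/4})$. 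Taking the best of the three constructions yields $\alpha=\Omega(\min(\sqrt w/\eps,\ T^{1/4}/\eps^{3/4},\ T))$. The step I expect to be the main obstacle is the cascade design: guaranteeing that one event-neighboring edit perturbs the answer vector at exactly the intended query coordinates while simultaneously respecting the flippancy budget $w$ and packing all query windows into $T$ time steps. A naive encoding instead lets a single record's entire row be read off after just a few edits, which collapses the argument to the (here unattainable) item-level bound and gives nothing new for event-level privacy.
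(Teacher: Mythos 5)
Your overall architecture (reduce to a batch lower bound, embed queries sequentially using deletions to ``clean the slate,'' handle tiny $\eps$ by group privacy, and split into a $w$-bounded and a $T$-bounded regime) matches the paper's, but the batch problem you reduce from is the wrong one for event-level privacy, and this is a genuine gap rather than a detail. The bound you invoke, $\beta=\tilde\Omega(\min(\sqrt q/\eps,\,m))$ for $q$ counting queries on $m$ records, is the fingerprinting bound for \emph{marginals-type} data, where a neighboring dataset may change one record's entire $q$-bit row arbitrarily. Your encoding ``record $i$'s value on query $j$ is $\existence{u_i}(x)[t_j]$'' together with the requirement that a record change be a single stream-entry change cannot deliver that neighboring relation: a single insertion/deletion edit shifts element $u_i$'s count path by $\pm 1$ on a suffix, so it can only toggle $\existence{u_i}$ at the set of query times where the public schedule placed the count exactly at threshold. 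That set is fixed by the schedule, not by the sensitive data, so each element effectively carries \emph{one} sensitive bit with a publicly known flip pattern. In that one-bit-record setting your base bound is false whenever it exceeds roughly $\sqrt m/\eps$: per-record randomized response followed by debiasing answers any polynomial number of subset-sum queries with error $\tilde O(\sqrt m/\eps)$, so no lower bound of order $m$ (which is what $\min(\sqrt q/\eps,m)$ evaluates to under your parameter choices $m=\Theta(\sqrt w/\eps)$, $q=\Theta(w)$ and $m=\Theta(T^{1/4}/\eps^{3/4})$, $q=\Theta(\sqrt{T\eps})$) can hold. If instead you insist the records really are $q$-bit rows so that fingerprinting applies, then item-neighboring datasets map to streams differing in up to $\Theta(q)$ entries and event-level privacy of $\mech$ no longer transfers; this is precisely why the paper uses the marginals reduction only for the item-level bound (Theorem~\ref{thm:LB_CD_item}).

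The missing ingredient is the paper's choice of base problem: inner-product queries $\innerproducts_{k,n}$ on a dataset of $n$ \emph{single-bit} records with public query vectors, whose hardness comes from Dinur--Nissim-style reconstruction (error $\Omega(\sqrt n)$ for $(1,o(1/n))$-DP with $k=\Theta(n)$ queries), not fingerprinting. Each sensitive bit occupies exactly one stream entry in an initial block $z^{(0)}$; each query block inserts and then deletes its support, and the answer is recovered by inclusion--exclusion, $\langle y,q^{(j)}\rangle=\|q^{(j)}\|_0+\|h_{z^{(0)}}\|_0-\|h_{x[1:2jn]}\|_0$, using two noisy distinct counts plus the publicly computable $\|q^{(j)}\|_0$ (your direct encoding, by contrast, needs counts to dip below zero and would not survive the strict turnstile model). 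Because reconstruction needs $n=\Theta(w)$ records (not $\sqrt w/\eps$), the stream length is forced to be $\Theta(w^2)\le T$, which is exactly where the $T^{1/4}$ exponent, and the open gap for $w\in(\sqrt T,T^{2/3})$, come from; the $1/\eps$ and $1/\eps^{3/4}$ factors are then obtained by the separate scaling reduction (Theorem~\ref{thm:epsred}), not baked into the batch bound. Your $\Omega(T)$ step via group privacy is fine and matches the paper.
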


In particular, for every $w$, there exists a family of streams satisfying $w_x=\Theta(w)$, for which every mechanism has worst-case error $\Omega\left(\mathsf{min}\left(\frac{\sqrt{w_x}}{\eps},\frac{T^{1/4}}{\eps^{3/4}}, T\right)\right)$. This is reflected in \Cref{tab:results}.


For item-level-DP, our lower bound on the error matches our upper bound for all regimes of $w_x$ up to polylogarithmic factors. 

\begin{theorem}[Item-level lower bound]\label{thm:LB_CD_item}
Let $\eps \in (0,1]$, $\delta \in (0,1]$, and sufficiently large $w,T \in \N$ such that $w \leq T$. For all $(\eps, \delta)$-item-level-DP 
mechanisms that are $\alpha$-accurate for $\mathsf{CountDistinct}$ on turnstile streams of length $T$ with maximum flippancy at most $w$: \\
\hspace*{3mm} {\bf 1} If $\delta = o(\eps/T)$, then $\alpha = \tilde{\Omega}\Big(\mathsf{min} \Big( \frac{\sqrt{w}}{\eps}, \frac{T^{1/3}}{\eps^{2/3}}, T)  \Big)\Big)$.\\
\hspace*{3mm}  {\bf 2} If $\delta = 0$, then $\alpha = \Omega\Big(\mathsf{min} \Big( \frac{w}{\eps}, \sqrt{\frac{T}{\eps}}, T  \Big)\Big) $.
\end{theorem}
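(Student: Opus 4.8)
The plan is to obtain both parts from a single reduction to a standard primitive: \emph{releasing $k$ counting queries} (equivalently, all $k$ one-way marginals) of a database with $n$ rows in $\{0,1\}^k$. Concretely, from an $(\eps,\delta)$-item-level-DP mechanism for $\countdistinct$ that is $\alpha$-accurate on length-$T$ streams of maximum flippancy at most $w$, I will construct a mechanism that is $(O(\eps),O(\delta))$-DP (with respect to adding/removing a row) and answers all $k$ one-way marginals of an $n$-row database up to additive error $\alpha$, where $k\le\min(w,T)$ and $n=\lfloor T/k\rfloor$ are parameters I get to choose. Part~2 then follows from the classical pure-DP packing lower bound $\Omega(\min(n,\,k/\eps))$ for this task, and Part~1 from the fingerprinting lower bound $\tilde\Omega(\min(n,\,\sqrt{k}/\eps))$ for approximate DP; the latter needs $\delta$ small relative to $1/n$, and since $n\le T$ the hypothesis $\delta=o(\eps/T)$ is exactly what is required.

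\textbf{The embedding.} Given rows $D_1,\dots,D_n\in\{0,1\}^k$, associate row $j$ with a distinct universe element $u_j$ and split the time horizon into $k$ consecutive blocks of $n$ steps each (padding with $\bot$ so the length is exactly $T$). In block $i$, give each row $j$ its own dedicated time step, where we place the single update $+u_j$ or $-u_j$ that flips the existence of $u_j$ if $D_j[i]\neq D_j[i-1]$ (with $D_j[0]:=0$), and $\bot$ otherwise. Then, after block $i$, element $u_j$ exists iff $D_j[i]=1$, so at the last step $t_i$ of block $i$ we have $\countdistinct(x)[t_i]=\sum_{j\in[n]}D_j[i]$, exactly the $i$-th query answer. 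Each $u_j$ receives at most one update per block, so $\flip(u_j,x)\le k$ and the maximum flippancy of $x$ is at most $k$. Finally, changing one row $D_j$ touches only updates pertaining to $u_j$; in particular, deleting row $j$ corresponds to replacing all of $u_j$'s updates by $\bot$, which is an item-neighbor step. Hence running the $\countdistinct$ mechanism on $x$ and reading off the outputs at $t_1,\dots,t_k$ (post-processing) is a DP, $\alpha$-accurate mechanism for the $k$ queries, with a constant-factor loss in the privacy parameters if one uses the replace-a-row neighbor relation instead.

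\textbf{Tuning and conclusion.} For pure DP the reduction yields $\alpha=\Omega\!\big(\max_{k\le\min(w,T)}\min(\lfloor T/k\rfloor,\,k/\eps)\big)$; the inner function increases then decreases in $k$ with peak at $k\asymp\sqrt{T\eps}$, so taking $k=w$ when $w\lesssim\sqrt{T\eps}$ and $k\asymp\sqrt{T\eps}$ otherwise gives $\alpha=\Omega(\min(w/\eps,\,\sqrt{T/\eps},\,T))$, the trailing $T$ being the trivial cap since the count never exceeds $T$. For approximate DP the same optimization with $\min(\lfloor T/k\rfloor,\,\sqrt{k}/\eps)$ has its peak at $k\asymp(T\eps)^{2/3}$ and gives $\alpha=\tilde\Omega(\min(\sqrt{w}/\eps,\,T^{1/3}/\eps^{2/3},\,T))$; the fingerprinting bound applies because $\delta=o(\eps/T)=o(\eps/n)$.

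\textbf{Main obstacle.} The arithmetic above is routine; the substance is the embedding, which must simultaneously (i) realize the desired per-item existence patterns with few enough updates that the maximum flippancy stays below $w$, (ii) fit within $T$ total time steps, and (iii) confine the footprint of each row to a single item so that item-level privacy of the $\countdistinct$ mechanism yields row-level privacy for the marginals. The dedicated-slot block layout is what satisfies (i)--(iii) at once. The remaining care is in invoking the one-way-marginal lower bounds in exactly the needed form — in particular the cap at $n$, which is what injects the $\lfloor T/k\rfloor$ term into the final minima, and the precise smallness condition on $\delta$ for the approximate-DP case — and in checking that reading the single construction through the two different marginal lower bounds reproduces both displayed bounds after optimizing $k$.
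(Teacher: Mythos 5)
Your proposal is correct and takes essentially the paper's approach: it too embeds the one-way-marginals problem by dedicating one universe element per row and one block of time steps per attribute (using deletions to reset between blocks), reads off $\countdistinct$ at block boundaries so that item-level privacy of the stream mechanism gives row-level privacy for the marginals, and then invokes the Hardt--Talwar packing bound (pure DP) and the Bun--Ullman--Vadhan bound (approximate DP, needing $\delta=o(1/n)$, which $\delta=o(\eps/T)$ supplies), optimizing the number of attributes $k$ against $n=\lfloor T/k\rfloor$ exactly as you describe. The only differences are minor: the paper's blocks insert and then delete each active element (length $2n$, flippancy $2d$) rather than your update-on-change encoding, and the paper proves the $\eps=1$ case first and then gets the $\eps$-dependence and the $\Omega(T)$ branch for tiny $\eps$ via a separate stream-scaling/group-privacy reduction, whereas you fold $\eps$ into the cited marginals bounds directly (legitimate, since those bounds carry the $\eps$ dependence) and recover the $T$ term from the $k=1$ instance rather than treating it as a separately proved case.
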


Like for \Thm{insertion+deletion+eps}, we can replace $w$ with $w_x$ in the lower bounds stated in \Thm{LB_CD_item} in the following sense: for every $w$, there exists a family of streams satisfying $w_x=\Theta(w)$, for which every mechanism has worst-case error as stated, but with $w$ replaced with $w_x$. (This applies to both Item 1 and Item 2.) For Item~1, this is reflected in \Cref{tab:results}.

\paragraph{Variants of the model.}
All our lower bounds also hold in the \textit{strict turnstile model}, where element counts never go below $0$. They also apply to \emph{offline} mechanisms that receive the entire input stream before producing output; they do not rely on the mechanism's uncertainty about what comes later in the stream.
Furthermore, our item-level lower bounds hold even in the model where each element can be inserted only when it is absent and deleted only when it is present. We call this {\em the likes model}, since it captures the behavior of
the ``like'' counts on social media websites. Our event-level lower bound does not apply to 
the likes
model, which admits an event-level-DP mechanism for $\countdistinct$ with error polylogarithic in $T$, as explained in \Cref{sec:related} in the discussion of the work of \citet{HenzingerSS23}.
In Appendix~\ref{sec:other_lower_bounds}, we provide formal statements and proof outlines for the aforementioned lower bounds for the variants of our model.

\subsection{Our techniques}\label{sec:techniques} 

\paragraph{Upper bound techniques: tracking the maximum flippancy.} 
Before describing our algorithmic ideas, we explain the main obstacle to using the techniques previously developed for insertion-only streams~\cite{BolotFMNT13,EpastoMMMVZ23}  in the turnstile setting.
\citet{BolotFMNT13} and \citet{EpastoMMMVZ23} used a reduction 
from $\countdistinct$ to the summation problem.  
A mechanism for the summation problem outputs, at every time step $t \in [T]$,  the sum of the first $t$ elements of the stream.  Dwork et al.~\cite{DworkMNS16} and Chan et al.~\cite{ChanSS11} designed the binary-tree mechanism to obtain a $O(\polylog T)$-accurate mechanism for summation.
Given an input stream $x$ of length $T$ (to the $\countdistinct$ problem), define a corresponding summation stream $s_x \in \{-1,0,1\}^T$. At time step $t \in [T]$, the entry $s_x[t]$ equals the difference in the count of distinct elements between time steps $t-1$ and $t$, i.e., $s_x[t] = \countdistinct(x)[t] - \countdistinct(x)[t-1]$. Then $\countdistinct(x)[t]$ is precisely the sum of the first $t$ elements of $s_x$.  In the insertion-only model, changing one entry of $x$ changes at most $2$ entries of $s_x$, and thus, by group privacy, the binary-tree mechanism has $O(\polylog T)$ additive error for $\countdistinct$. For turnstile streams, even under the weaker notion of event-level privacy, a change in the stream $x$ can cause $\Omega(T)$ changes to $s_x$. To see this, consider the stream consisting of alternating insertions ($+\ele$) and deletions ($-\ele$) of a single element $\ele \in \uni$, and its event-neighboring stream where the first occurrence of $+\ele$ is replaced with $\perp$. This example illustrates that one of the difficulties of the $\countdistinct$ problem for turnstile streams lies with items that switch from being present to absent multiple times over the course of the stream, that is, have high flippancy. 
We present a private mechanism that outputs estimates of the count of distinct elements in a turnstile stream with optimal accuracy in terms of maximum flippancy.

Our first key idea allows us to obtain a mechanism, \Alg{item-upper}, that is given as input a flippancy upper bound $w$. For a stream $x$ whose maximum flippancy is bounded by $w$, changing to an item-neighbor of $x$ (whose maximum flippancy is also bounded by $w$)
causes at most $2w$ changes to the corresponding summation stream $s_x$. 
This observation, combined with a group privacy argument, gives a mechanism with error $O(w \cdot \polylog\ T)$ directly from the accuracy guarantee of the binary-tree mechanism for summation. 
Previous works in the insertion-only model \cite{BolotFMNT13,EpastoMMMVZ23} used precisely this approach for the special case
$w=1$. 
To obtain the better $\sqrt{w}$ dependence on $w$ in our upper bound, we ``open up'' the analysis of the binary-tree mechanism. By examining the information stored in each node of the binary tree for the summation stream, we show that changing the occurrences of one item in a stream $x$ with maximum flippancy at most $w$ can change the values of at most $w$ nodes in each \emph{level} of the binary tree. The $\sqrt{w}$ dependence in the error then follows from the privacy guarantees of the Gaussian mechanism (used inside the binary-tree mechanism) for approximate differential privacy. 
This type of noise reduction makes crucial use of the binary tree approach: there are optimized noise addition schemes for prefix sums that improve quantitatively over the binary-tree mechanism (see, e.g., \cite{DenisovMRST22, HenzingerUU23}), but it is unclear if they allow 
the same noise reduction. While our mechanism is only accurate for streams with maximum flippancy at most $w$, it is private even for streams that violate this condition. To achieve this, our mechanism ignores stream elements after their flippancy exceeds $w$. 

The second key idea allows our algorithms to adapt automatically to the maximum flippancy $w_x$ of the input, without the need for
an a-priori bound $w$.
We design a private mechanism, \Alg{item-w}, 
that approximately keeps track of the maximum flippancy of the prefix of the stream seen so far and invokes our first mechanism (\Alg{item-upper}) with the current estimated maximum flippancy $\hat w$ as an input. 
Our main innovation lies in the careful application of the sparse vector algorithm~\cite{DworkNRRV09} to track the maximum flippancy of the stream.  We cannot do this directly, since the sparse vector algorithm achieves good utility only for queries of low sensitivity, 
and maximum flippancy has global sensitivity $\Omega(T)$ under both event-level and item-level changes.

Instead, we track a low sensitivity proxy that indirectly monitors the maximum flippancy $w_x$: given the current estimate $\hat w$ of the flippancy, we use the sparse vector algorithm to continuously query
\emph{the number of items in the stream with flippancy above $\hat w$}.
This query has sensitivity one for item-level neighbors, as desired,
but it is not a priori  clear how to use it to upper bound the maximum flippancy of the stream. This is remedied by observing that \Alg{item-upper}, invoked with a flippancy bound $\hat w$, has the same error (and privacy) guarantee even if at most $\sqrt{\hat w}$ items in the stream have flippancy higher than $\hat w$. That is, an exact upper bound on the maximum flippancy is not needed to design an accurate mechanism. Items that violate the flippancy bound are safely ignored by \Alg{item-upper} and do not contribute to the distinct elements count. 
%

When the number of high-flippancy items gets large, we adjust the estimate $\hat w$ and invoke a new instantiation of \Cref{alg:item-upper}. By doubling $\hat w$ each time this happens, we ensure that it remains at most twice the actual maximum flippancy $w_x$, and that we need only invoke $\log T$ different copies of \Cref{alg:item-upper} and the sparse vector algorithm\footnote{All $\log$ expressions in this article are base $2$.}. 
With these ideas, we obtain an item-level private mechanism that, for all streams $x$, has error that scales with $\sqrt{w_x}$.

\paragraph{Lower bound techniques. } Our lower bounds use the embedding technique introduced by~\citet{JainRSS23} to obtain strong separations between the batch and continual release models of differential privacy. 
The approach of Jain~et~al.~embeds multiple separate instances of an appropriately chosen base problem \textit{on the same sensitive dataset} in the batch model into a single instance of a continual release problem. 
Then, the continual release mechanism can be used to solve multiple instances of the base problem in the batch model. The hardness results in the continual release model follow from lower bounds for the batch model. 

A key idea in our event-level lower bound is a connection between the inner product of two binary vectors and the count of distinct elements in the union of those indices where the vector bits equal $1$. Estimates of distinct elements counts can thus be used to estimate inner products on a sensitive dataset of binary bits. Lower bounds on the accuracy of private algorithms for estimating inner product queries have been previously established in the batch model through the reconstruction attack of Dinur and Nissim~\cite{DinurN03}. This connection was used by Mir~et~al.~\cite{MirMNW11} to provide lower bounds for pan-private algorithms for counting distinct elements. However, continual release and pan-privacy are orthogonal notions, and their results don't imply any lower bounds in our setting.
We crucially use deletions to embed multiple instances of inner product queries into a stream: once a query is embedded and the desired estimate is received, the elements inserted to answer that query can be entirely deleted from the stream to obtain a ``clean slate'' for the next query. We obtain a lower bound of $T^{1/4}$ on the error of event-level private mechanisms for $\countdistinct$ in turnstile streams. 

We obtain our stronger item-level lower bounds (for pure and approximate differential privacy) by embedding multiple instances of a $1$-way marginal query. 
 We then apply lower bounds of~\citet{HardtT10} and~\citet{BunUV18} for releasing all 1-way marginals in the batch model in conjunction with our reduction. 
 The 1-way marginals of a dataset $y \in \{0,1\}^{n \times d}$, consisting of $n$ records and $d$ attributes, are the averages of all $d$ attributes of $y$. Deletions in the stream are once again crucially used to embed a marginal query for one attribute and then clean the slate for the next attribute. Changing one record/row in the dataset $y$ translates to $d$ changes of an item in the constructed stream, and thus this reduction is particularly tailored to item-level lower bounds.

\subsection{Related work}\label{sec:related}

The study of differential privacy under continual release was initiated by two concurrent works \cite{DworkNPR10, ChanSS11}. They proposed the binary-tree mechanism for computing sums of binary bits. 
The versatility of this mechanism is demonstrated by numerous applications it has found in the continual release setting and elsewhere.
Under continual release, it has been extended to work for sums of real values \cite{PerrierAK19}, weighted sums \cite{BolotFMNT13}, graph statistics \cite{SongLMVC18, FichtenHO21}, and most relevantly, counting distinct elements \cite{BolotFMNT13, EpastoMMMVZ23, Ghazi0NM23}. 
It has also been employed for private online learning \cite{jainkt12, SmithT13, AgarwalS17} and for answering range queries \cite{DworkNPR10, DworkNRR15, EdmondsNU20}.

    Prior to our work, the $\countdistinct$ problem with continual release was studied exclusively in the insertions-only model. \citet{BolotFMNT13} were the first to study this problem and showed a $O(\log^{1.5}T)$-accurate item-level-DP mechanism. \citet{Ghazi0NM23} considered the more challenging sliding-window model and showed nearly-matching upper and lower bounds for this setting, parameterized by the window size, for item-level and event-level differential privacy. \citet{EpastoMMMVZ23} studied the more general $\ell_p$-frequency estimation problem with a focus on space efficiency. For distinct elements, i.e., $p=0$, their mechanism provides an estimate with $1+\eta$ multiplicative error and $O(\log^2T)$ additive error, using space $\poly (\log T/\eta)$. They also extended their results to the sliding-window model. Two of the works~\cite{BolotFMNT13, EpastoMMMVZ23} reduced the $\countdistinct$ problem to the bit summation primitive, which allowed them to use the binary-tree mechanism. Since the streams are restricted to be insertion-only, the bit summation primitives they considered have low constant sensitivity. The same primitives have sensitivity $\Omega(T)$ for turnstile streams, and thus this approach cannot be directly extended to our setting. \citet{Ghazi0NM23} observed that for fixed and sliding windows, the distinct elements problem can be reduced to range queries. For the special case when the window is the entire stream, their reduction is to the summation problem. 

In concurrent work, \citet{HenzingerSS23} studied $\countdistinct$ with insertions and deletions in the likes model (see ``Variants of the model'' in \Cref{sec:results}).
 Our model is more general, since it allows for
 multiple consecutive insertions and deletions of the same item. 
Our upper bound and our item-level privacy lower bound can be extended to the likes model. In contrast, our event-level private lower bound provably does not apply to that model: in the likes model, for event-level privacy, there is a simple reduction to the bit summation problem in the continual release model such that the resulting algorithm incurs only a polylogarithmic in $T$ error, whereas we show that in our model, any event-level private algorithm incurs a polynomial in $T$ error. 
 
\citet{HenzingerSS23} showed error bounds for item-level privacy in the likes model that are parameterized by the total number of updates $K$ in the stream. 
The parameter $K$ is related to our concept of flippancy: in the likes model, $K$ equals the  sum of all items' flippancies  and, in general, is at least that sum.
\citet{HenzingerSS23} give an $(\eps, 0)$-DP algorithm with error $\tilde{O}(\sqrt{K} \log T)$ and show a nearly matching lower bound on the error for $(\eps, 0)$-DP algorithms using a packing argument. This lower bound applies to our model as well. 
It is incomparable to our lower bounds, since it scales differently and depends on a different parameter. In our model, their algorithm can be analyzed to give error bounds in terms of the sum $K'$ of the flippancies of the items and incurs error $\tilde{O}(\sqrt{K'} \log T)$; however, it is unclear if their algorithm can be analyzed in our model to give bounds in terms of the (potentially smaller) maximum flippancy.

    Another line of work investigated private sketches for distinct elements, motivated by the popularity of sketching algorithms for the streaming setting.  Mergeable sketches for counting distinct elements have received particular attention~\cite{StanojevicNY2017, ChoiDKY2020, PaghS21, HehirTC2023}, since they allow multiple parties to estimate the joint count of distinct elements by merging their private sketches. While these sketches can be combined with the binary-tree mechanism to obtain private mechanisms for $\countdistinct$, the utility deteriorates when many $(\log T)$ sketches are merged. In fact, \citet{DesfontainesLB2018} showed that achieving both privacy and high accuracy is impossible when many sketches for counting distinct elements are merged. 
    Other private sketches have been studied \cite{SmithST2020, DickensTT2022, WangPS22} for the streaming 
    batch setting  (without continual release).
    The distinct elements problem has also been studied in a distributed setting
    \cite{ChenG0M21,GhaziKKMPSWW22} and under pan-privacy \cite{MirMNW11}. In particular, our lower bound for event-level privacy uses ideas from the lower bound of \citet{MirMNW11}, as described in \Sec{techniques}

    The $\countdistinct$ problem has been extensively studied in the non-private streaming setting, where the goal is to achieve low space complexity \cite{FlajoletM85, Alon1996,Cohen1997,Gibbons2001,GibbonsT01,BarYossef2002,Bar-YossefJKST02,Durand2003,IndykW03,Woodruff04,EstanVF06,Beyer2007,Flajolet2007,Brody2009, KaneNW10}.  \citet{BlockiGMZ23} showed a black-box transformation for every streaming algorithm with tunable accuracy guarantees into a DP algorithm with similar accuracy, for low sensitivity functions. Their transformation does not obviously extend to the continual release setting. Moreover $\countdistinct$ has high sensitivity for turnstile streams.

    The first lower bound in the continual release model of differential privacy was an $\Omega(\log T)$ bound on the accuracy of mechanisms for bit summation, shown by \citet{DworkNPR10}. \citet{JainRSS23} gave the first polynomial separation in terms of error between the continual release model and the batch model under differential privacy. Our lower bounds also show such a separation. The lower bounds of \citet{JainRSS23} were for the problems of outputting the value and index of the attribute with the highest sum, amongst $d$ attributes of a dataset.  Our lower bounds are inspired by their sequential embedding technique to reduce multiple instances of a batch problem to a problem in the continual release model. Similar to them, we also reduce from the 1-way marginals problem to obtain our item-level lower bound. However, our event-level lower bound involves reducing from a different problem, and our reductions use the specific structure of $\countdistinct$ for turnstile streams. 

\subsection{Broader impact, limitations, and open questions}
\label{sec:limitations}

We study the achievable error of DP mechanisms for $\countdistinct$ under continual observation 
in streams with insertions and deletions.
We show that it is characterized by the \textit{maximum flippancy} of the stream. Our work is motivated by societal concerns, but focused on fundamental theoretical limits. It contributes to the broader agenda of obtaining privacy-preserving algorithms for data analysis. We discuss natural directions for future research and some limitations of our work.

   \textbf{Tight bounds:} We found the best achievable error in some settings, but our upper and lower bounds do not match in some parameter regimes. What is the right error bound for event-level privacy for streams $x$ with maximum flippancy $w_x$ between $\sqrt{T}$ and $T^{2/3}$? 
   Our results yield a lower bound of $T^{1/4}$
   and an upper bound of roughly $\sqrt{w_x}$.

   \textbf{Bounded memory: } We did not consider
    any memory restrictions. 
    Prior to our work, no other work addressed $\countdistinct$ with deletions under continual release—with or without space constraints. We consider only the privacy constraint since
    it is more fundamental—it cannot be avoided by buying more memory—and the best algorithms with unbounded memory provide a benchmark by which to evaluate space-constrained approaches.

    Space complexity is certainly a natural topic for future work. While
    it is not clear how to apply the sketching techniques of \citet{EpastoMMMVZ23} to the turnstile setting, it would be interesting to come up with accurate, private, and low-memory mechanisms for counting distinct elements in turnstile streams. Such algorithms would necessarily mix multiplicative and additive error guarantees (due to space and privacy constraints, respectively).

\section{Additional background on differential privacy}\label{sec:dp}

In this section, we describe basic results on differential privacy used to obtain our theorems.

We denote by $\mathcal{N}(\mu, \sigma^2)$ the Gaussian distribution with mean $\mu$ and standard deviation $\sigma$. The Laplace distribution with mean $0$ and standard deviation $\sqrt{2}b$ is denoted by $\mathrm{Lap}(b)$.

\begin{definition}[$(\eps, \delta)$-indistinguishability]
Two random-variables $R_1, R_2$ over the same outcome space $\mathcal{Y}$ (and $\sigma$-algebra $\Sigma_{\mathcal{Y}}$) are $(\eps, \delta)$-indistinguishable, denoted $R_1 \approx_{(\eps, \delta)} R_2$, if for all 
events $S \in \Sigma_{\mathcal{Y}},$ the following hold:
\begin{align*}
   \Pr[R_1 \in S] \leq e^{\eps} \Pr[R_2 \in S] + \delta; \\
   \Pr[R_2 \in S] \leq e^{\eps} \Pr[R_1 \in S] + \delta.
\end{align*}
\end{definition}
Next, we generalize  the definition  of event-neighboring and item-neighboring streams (\Def{neigbhors}). 
\begin{definition}[$\ell$-Neighboring streams]  Let $x, x' \in \uni_\pm^T$ be two streams of length $T$. For any $k\in\N$, streams $x$ and  $x'$  are $\ell$-event-neighbors (respectively, $\ell$-item-neighbors) if one can be obtained from the other in a sequence of $\ell$ steps, where each step replaces a stream with an event-neighboring (respectively, item-neighboring) stream.
\end{definition}

\begin{lemma}[Group privacy~\cite{DworkMNS16}]\label{lem:group_privacy} Every $(\eps, \delta)$-event-level (or item-level) DP mechanism $\mech$ is $\left(\ell \eps, \delta' \right)$-event-level (or item-level) DP for groups of size $\ell$, where $\delta' = \delta\frac{e^{\ell \eps} -1}{e^\eps-1}$; that is, for all $\ell$-event (or $\ell$-item) neighboring data streams $x, x'$, it holds that
$    \mech(x) \approx_{\ell \eps, \delta'}   \mech(x')$.
\end{lemma}

\subsection{Preliminaries on zero-concentrated differential privacy (zCDP)}\label{sec:zCDP}

This section describes \emph{zero-concentrated differential privacy (zCDP)}, a variant of differential privacy that is less stringent than pure differential privacy, but more stringent than approximate differential privacy. Using this notion of privacy, one can show tight bounds for the Gaussian mechanism and cleaner and tighter bounds for composition. In contrast to $(\eps,\delta)$-differential privacy, zCDP requires output distributions on all pairs of neighboring datasets to be $\rho$-close (\Def{rho-indistinguishable}) instead of $(\eps,\delta)$-indistinguishable.

\begin{definition}[R\'enyi divergence \cite{Renyi61}]
Let $Q$ and $Q'$ be distributions on $\mathcal{Y}$. For $\xi \in (1,\infty)$, the R\'enyi divergence of order $\xi$ between $Q$ and $Q'$(also called the $\xi$-R\'enyi Divergence) is defined as
\begin{align}
    D_{\xi}(Q \| Q') = \frac{1}{\xi-1} \log\left( \E_{r \sim Q'} \left[ \left(\frac{Q(r)}{Q'(r)}\right)^{\xi-1} \right]  \right).
\end{align}
Here $Q(\cdot)$ and $Q'(\cdot)$ denote either probability masses (in the discrete case) or probability densities (when they exist). More generally, one can replace  $\frac{Q(.)}{Q'(.)}$ with the the Radon-Nikodym derivative of $Q$ with respect to $Q'$.
\end{definition}

\begin{definition}[$\rho$-Closeness]\label{def:rho-indistinguishable}
Random variables $R_1$ and $R_2$ over the same outcome space $\mathcal{Y}$ are  {\em $\rho$-close} (denoted $R_1 \simeq_{\rho} R_2$) if for all $\xi \in (1,\infty)$, 
\begin{align*}
D_{\xi}(R_1\|R_2) \leq \xi\rho \text{ and }  D_{\xi}(R_2\|R_1) \leq \xi\rho,
\end{align*}
where $D_{\xi}(R_1\|R_2)$ is the $\xi$-R\'enyi divergence  between the distributions of $R_1$ and $R_2$.
\end{definition}

\begin{definition}[zCDP in batch model~\cite{BunS16}]
A randomized batch algorithm $\alg : \X^n \to \mathcal{Y}$ is $\rho$-zero-concentrated differentially private ($\rho$-zCDP), if, for all neighboring datasets $y,y' \in \mathcal{X}^n$,
$$\alg(y) \simeq_{\rho} \alg(y').$$
\end{definition}

One major benefit of using zCDP is that this definition of privacy admits a clean composition result. We use it when analysing the privacy of the algorithms in \Sec{upper-bounds}.

\begin{lemma}[Composition \cite{BunS16}] \label{lem:cdp_composition}
Let $\alg : \mathcal{X}^n \to \mathcal{Y}$ and $\alg' : \mathcal{X}^n \times \mathcal{Y} \to \mathcal{Z}$ be batch algorithms. Suppose $\alg$ is $\rho$-zCDP and $\alg'$ is $\rho'$-zCDP. Define batch algorithm $\alg'' : \mathcal{X}^n \to \mathcal{Y} \times \mathcal{Z}$ by $\alg''(y) = \alg'(y,\alg(y))$. Then $\alg''$ is $(\rho+\rho')$-zCDP.
\end{lemma}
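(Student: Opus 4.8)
The plan is to prove this directly from the definition of $\rho$-closeness (\Cref{def:rho-indistinguishable}) by establishing a chain-rule-style bound on the R\'enyi divergence of the joint distribution of $\alg''(y) = (\alg(y), \alg'(y,\alg(y)))$. Fix an arbitrary pair of neighboring datasets $y, y' \in \mathcal{X}^n$ and an order $\xi \in (1,\infty)$. It suffices to show $D_\xi\big(\alg''(y)\,\|\,\alg''(y')\big) \le \xi(\rho+\rho')$: the reverse inequality follows by swapping the roles of $y$ and $y'$, and since $y,y'$ range over all neighboring pairs this gives $\alg''(y) \simeq_{\rho+\rho'} \alg''(y')$, i.e. that $\alg''$ is $(\rho+\rho')$-zCDP.

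Write $P$ for the distribution of $\alg''(y)$ on $\mathcal{Y}\times\mathcal{Z}$ and $Q$ for that of $\alg''(y')$. By construction these factor as $P(r,z) = P_1(r)\,P_2(z\mid r)$ and $Q(r,z) = Q_1(r)\,Q_2(z\mid r)$, where $P_1, Q_1$ are the laws of $\alg(y), \alg(y')$ and, for each fixed $r$, $P_2(\cdot\mid r), Q_2(\cdot\mid r)$ are the laws of $\alg'(y,r), \alg'(y',r)$ (reading all ratios as Radon--Nikodym derivatives, exactly as in the definition of R\'enyi divergence). Substituting this factorization into the expectation defining $D_\xi(P\|Q)$ and conditioning on the first coordinate gives
\begin{align*}
\E_{(r,z)\sim Q}\!\left[\left(\frac{P(r,z)}{Q(r,z)}\right)^{\xi-1}\right]
= \E_{r\sim Q_1}\!\left[\left(\frac{P_1(r)}{Q_1(r)}\right)^{\xi-1}\cdot \E_{z\sim Q_2(\cdot\mid r)}\!\left[\left(\frac{P_2(z\mid r)}{Q_2(z\mid r)}\right)^{\xi-1}\right]\right].
\end{align*}

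The key step is to bound the two nested factors using the zCDP guarantees of the two components. For every fixed $r$, the inner expectation equals $\exp\!\big((\xi-1)\,D_\xi(\alg'(y,r)\,\|\,\alg'(y',r))\big) \le \exp\!\big((\xi-1)\,\xi\rho'\big)$, using that $\alg'$ is $\rho'$-zCDP with the auxiliary input held fixed at $r$ and that $\xi-1>0$. Pulling this uniform-in-$r$ bound out of the outer expectation and then invoking the $\rho$-zCDP guarantee of $\alg$, we obtain
$\E_{(r,z)\sim Q}[(P/Q)^{\xi-1}] \le \exp((\xi-1)\xi\rho')\cdot\exp\!\big((\xi-1)\,D_\xi(\alg(y)\|\alg(y'))\big) \le \exp\!\big((\xi-1)\,\xi(\rho+\rho')\big)$.
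Applying $\tfrac{1}{\xi-1}\log(\cdot)$ yields $D_\xi(\alg''(y)\|\alg''(y')) \le \xi(\rho+\rho')$, completing the argument.

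The only real obstacle is bookkeeping: justifying the factorization of the joint laws and the interchange of the outer expectation with the conditional expectation when the distributions need not admit densities, which is handled by working throughout with Radon--Nikodym derivatives and applying Tonelli's theorem (the integrand is nonnegative). One point worth stating explicitly is the convention that "$\alg'$ is $\rho'$-zCDP" means $\alg'(\cdot, r)$ satisfies the zCDP inequality for \emph{every} fixed second argument $r$ simultaneously; this uniformity is precisely what lets us bound the inner expectation pointwise before taking the outer expectation.
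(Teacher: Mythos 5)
Your proof is correct. The paper does not prove this lemma at all---it is imported verbatim from \cite{BunS16}---and your argument (factor the joint law of $(\alg(y),\alg'(y,\alg(y)))$, bound the inner R\'enyi term uniformly in the auxiliary input $r$ using the $\rho'$-zCDP guarantee of $\alg'$ with $r$ fixed, then bound the outer term by the $\rho$-zCDP guarantee of $\alg$) is essentially the standard chain-rule proof given in that reference, including the correct handling of the convention that neighboring inputs to $\alg'$ share the same auxiliary coordinate. One minor observation: the lemma as stated types $\alg''$ as mapping into $\mathcal{Y}\times\mathcal{Z}$ while the formula $\alg''(y)=\alg'(y,\alg(y))$ produces only the $\mathcal{Z}$-coordinate; you prove the joint-output version, which is the intended (and stronger) reading and implies the other by post-processing.
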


\begin{lemma}[Post-processing~\cite{DworkMNS16,BunS16}]
\label{lem:postprocess} If $\alg: \mathcal{Y} \rightarrow \R^k$ is $(\eps, \delta)$-DP and $\mathcal{B} : \R^k \rightarrow \mathcal{Z}$ is any randomized function, then the algorithm $\mathcal{B} \circ \alg$ is $(\eps, \delta)$-DP.
Similarly, if $\alg$ is $\rho$-zCDP then the algorithm $\mathcal{B} \circ \alg$ is $\rho$-zCDP.
\end{lemma}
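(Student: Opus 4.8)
The plan is to prove both parts by reducing to the case where the post-processing map $\mathcal{B}$ is \emph{deterministic}, and then invoking the appropriate data-processing property of the relevant closeness notion. For the reduction, write a randomized $\mathcal{B} : \R^k \to \mathcal{Z}$ as $\mathcal{B}(r) = g(r, Z)$, where $g$ is deterministic and $Z \sim \mu$ is a random seed drawn independently of everything else. Conditioned on $Z = z$, the output $\mathcal{B}(\alg(y))$ equals $g(\alg(y), z)$, which is a deterministic function of $\alg(y)$; so once the claim is established for deterministic post-processing, the general case follows by averaging over $z \sim \mu$ using the independence of $Z$.

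For $(\eps,\delta)$-DP: fix neighboring datasets $y, y'$ and an arbitrary measurable event $S \subseteq \mathcal{Z}$. For each fixed $z$, the set $T_z := \{ r \in \R^k : g(r,z) \in S\}$ is an event in the output space of $\alg$, and $\Pr[g(\alg(y),z) \in S] = \Pr[\alg(y) \in T_z]$. Applying the $(\eps,\delta)$-DP guarantee of $\alg$ to the event $T_z$ gives $\Pr[\alg(y) \in T_z] \le e^{\eps}\Pr[\alg(y') \in T_z] + \delta$. Taking expectation over $z \sim \mu$ and using independence,
\[
\Pr[\mathcal{B}(\alg(y)) \in S] = \E_{z \sim \mu}\big[\Pr[g(\alg(y),z) \in S]\big] \le e^{\eps}\,\E_{z \sim \mu}\big[\Pr[g(\alg(y'),z) \in S]\big] + \delta = e^{\eps}\Pr[\mathcal{B}(\alg(y')) \in S] + \delta.
\]
The reverse inequality (swapping $y$ and $y'$) is identical, so $\mathcal{B}(\alg(y)) \approx_{(\eps,\delta)} \mathcal{B}(\alg(y'))$, which is exactly $(\eps,\delta)$-DP of $\mathcal{B} \circ \alg$.

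For $\rho$-zCDP: fix neighboring $y, y'$ and an order $\xi \in (1,\infty)$. The main ingredient is the data-processing inequality for R\'enyi divergence: for any randomized map $h$ and distributions $Q, Q'$, one has $D_\xi(h(Q) \| h(Q')) \le D_\xi(Q \| Q')$. (It suffices to prove this for deterministic $h$, again via the seed trick, since adjoining an independent copy of $Z$ to both $Q$ and $Q'$ changes neither distribution's R\'enyi divergence to the other; the deterministic case follows from Jensen's inequality applied to the convex map $t \mapsto t^{\xi}$ over the fibers of $h$.) Applying this with $h = \mathcal{B}$, $Q$ the distribution of $\alg(y)$, and $Q'$ that of $\alg(y')$, and then using $\rho$-zCDP of $\alg$,
\[
D_\xi\big(\mathcal{B}(\alg(y)) \,\|\, \mathcal{B}(\alg(y'))\big) \le D_\xi\big(\alg(y) \,\|\, \alg(y')\big) \le \xi\rho.
\]
Since this holds for every $\xi \in (1,\infty)$ and, by the same argument with $y, y'$ swapped, in both directions, we get $\mathcal{B}(\alg(y)) \simeq_\rho \mathcal{B}(\alg(y'))$, i.e.\ $\mathcal{B} \circ \alg$ is $\rho$-zCDP.

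I expect the only non-routine step to be the data-processing inequality for R\'enyi divergence invoked in the zCDP case; everything else is bookkeeping (a preimage identity plus an application of Fubini/independence). In a write-up I would either cite this inequality as a standard fact about R\'enyi divergence or include the short Jensen-based argument sketched above; the $(\eps,\delta)$ part needs no external input beyond the definition of differential privacy.
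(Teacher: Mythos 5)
Your proof is correct: the seed-conditioning reduction plus the preimage argument is the standard proof of post-processing for $(\eps,\delta)$-DP, and the R\'enyi data-processing inequality (which your Jensen argument over fibers, with the convex map $t \mapsto t^{\xi}$ under the conditional law of $Q'$ on each fiber, does establish) is exactly how zCDP post-processing is proved in the cited work of Bun and Steinke. The paper itself offers no proof of this lemma---it is stated as imported background with citations---so there is nothing to contrast with; your write-up matches the standard arguments behind those citations.
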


The \emph{Gaussian mechanism}, defined next, is used in \Sec{upper-bounds}. It privately estimates a real-valued function on a dataset by adding Gaussian noise to the value of the function. 

\begin{definition}[Sensitivity] Let $f: \cY \rightarrow \R^k$ be a function. Its $\ell_2$-sensitivity is defined as
$$ \max_{\text{neighbors } y, y' \in \cY} \|f(y) - f(y')\|_2.$$ To define $\ell_1$-sensitivity, we replace the $\ell_2$ norm with the $\ell_1$ norm.
\end{definition} 

\begin{lemma}[Gaussian mechanism \cite{BunS16}] \label{lem:gaussian-mech}
Let $f : \X^n \to \mathbb{R}$ be a function with $\ell_2$-sensitivity at most $\Delta_2$.
Let $\alg$ be the batch algorithm 
that, on input $y$, releases a sample from $\mathcal{N}(f(y), \sigma^2)$. Then $\alg$ is $(\Delta_2^2/(2\sigma^2))$-zCDP.
\end{lemma}

The final lemma in this section relates zero-concentrated differential privacy to $(\eps,\delta)$-differential privacy.

\begin{lemma}[Conversion from zCDP to DP \cite{BunS16}]\label{lem:CDPtoDP}
For all $\rho,\delta > 0$, if batch algorithm $\alg$ is $\rho$-zCDP, then $\alg$ is $(\rho+2\sqrt{\rho \log(1/\delta)},\delta)$-DP. Conversely, if $\cA$ is $\eps$-DP, then $\cA$ is $(\frac{1}{2}\eps^2)$-zCDP. 
\end{lemma}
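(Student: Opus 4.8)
The statement has two directions, and the forward one (zCDP $\Rightarrow$ approximate DP) is the substantial part. The plan for it is to control the \emph{privacy-loss random variable} through the bound $D_\xi(\alg(y)\|\alg(y'))\le\xi\rho$, convert this into a one-sided tail bound by a Chernoff/Markov argument optimized over the R\'enyi order $\xi$, and then invoke the standard fact that such a tail bound implies $(\eps,\delta)$-indistinguishability. The converse direction follows from the pointwise boundedness of the privacy loss under pure DP together with a Hoeffding-type estimate; this is the Bun--Steinke argument~\cite{BunS16}.

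\paragraph{Forward direction.}
Fix neighbors $y,y'$ and write $Q=\alg(y)$, $Q'=\alg(y')$. Since $D_\xi(Q\|Q')\le\xi\rho<\infty$ we may assume $Q\ll Q'$ (otherwise the R\'enyi divergence is infinite), so the privacy loss $Z=\ln\frac{dQ}{dQ'}$ is well defined for $r\sim Q$. Unwinding the definition of R\'enyi divergence gives, for every $\xi\in(1,\infty)$,
\[
\E_{r\sim Q}\!\left[e^{(\xi-1)Z}\right]=e^{(\xi-1)\,D_\xi(Q\|Q')}\le e^{(\xi-1)\xi\rho}.
\]
Applying Markov's inequality to $e^{(\xi-1)Z}$ yields $\Pr_{r\sim Q}[Z>t]\le e^{(\xi-1)(\xi\rho-t)}$ for all $t$. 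Taking $t=\eps:=\rho+2\sqrt{\rho\ln(1/\delta)}$ and $\xi=1+\sqrt{\ln(1/\delta)/\rho}>1$ makes the exponent equal to $\ln\delta$, so $\Pr_{r\sim Q}[Z>\eps]\le\delta$. I would then use the routine lemma that a privacy-loss tail bound implies indistinguishability: for any event $S$, split it into $S_{\le}=\{r\in S: Z(r)\le\eps\}$ and its complement in $S$; on $S_{\le}$ we have $dQ\le e^\eps\,dQ'$, while the complement has $Q$-mass at most $\delta$, so $\Pr[Q\in S]\le e^\eps\Pr[Q'\in S]+\delta$. Because $\rho$-closeness is symmetric in $Q$ and $Q'$, the same argument with the roles swapped gives the other inequality, hence $\alg$ is $(\eps,\delta)$-DP.

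\paragraph{Converse direction.}
If $\alg$ is $\eps$-DP then $e^{-\eps}\le\frac{dQ}{dQ'}\le e^{\eps}$ pointwise, so the privacy loss satisfies $|Z|\le\eps$; moreover $\E_{r\sim Q}[e^{-Z}]=1$. For $\xi\in(1,\infty)$ we again have $e^{(\xi-1)D_\xi(Q\|Q')}=\E_{r\sim Q}[e^{(\xi-1)Z}]$, so it remains to bound this moment. Among all laws of a random variable supported on $[-\eps,\eps]$ with $\E[e^{-Z}]=1$, the one maximizing $\E[e^{(\xi-1)Z}]$ is, by an extreme-point argument, supported on $\{-\eps,\eps\}$; evaluating on such a two-point law and bounding the resulting elementary expression (via Hoeffding's lemma after recentering) shows $\E_{r\sim Q}[e^{(\xi-1)Z}]\le e^{(\xi-1)\xi\eps^2/2}$, hence $D_\xi(Q\|Q')\le\tfrac12\xi\eps^2$. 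Symmetrically $D_\xi(Q'\|Q)\le\tfrac12\xi\eps^2$, so $\alg(y)\simeq_{\eps^2/2}\alg(y')$ and $\alg$ is $\tfrac12\eps^2$-zCDP. (The limiting case $\xi\to1$ of this computation recovers $D_{\mathrm{KL}}(Q\|Q')\le\eps\tanh(\eps/2)\le\tfrac12\eps^2$.)

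\paragraph{Main obstacle.}
The crux is the forward direction's passage from the \emph{family} of moment bounds encoded by $D_\xi(\cdot\|\cdot)\le\xi\rho$ to a single $(\eps,\delta)$ statement: one must choose $\xi$ so as to balance the Chernoff exponent against $\ln(1/\delta)$, and then apply the tail-bound-to-indistinguishability reduction, which quietly relies on the two-sidedness of $\rho$-closeness. In the converse, the delicate point is obtaining the tight constant $\tfrac12$ rather than a lossy one, which is precisely what the reduction to two-point distributions on $\{\pm\eps\}$ (equivalently, a careful Hoeffding bound) buys.
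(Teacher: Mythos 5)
Your proposal is correct: the paper does not prove this lemma itself but cites \citet{BunS16}, and your argument---Markov's inequality applied to the privacy-loss moment bound with the optimized order $\xi = 1+\sqrt{\log(1/\delta)/\rho}$ followed by the standard tail-to-$(\eps,\delta)$ conversion, and, for the converse, the reduction to a two-point law on $\{\pm\eps\}$ combined with a Hoeffding-type bound---is essentially the proof given in that cited work. No gaps to flag.
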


\section{Item-level private mechanisms for $\countdistinct$}\label{sec:upper-bounds}

In this section, we prove a version of \Thm{item-epsdel} with zero concentrated differential privacy (zCDP). 
Using this notion of privacy, one can show tight bounds for the Gaussian mechanism and cleaner and tighter bounds for composition.
We start by proving \Thm{item-w}, and explain in \Sec{introub} how this can be used to prove \Thm{item-epsdel}. 

\begin{theorem}[Upper bound]\label{thm:item-w}
    For all $\rho \in(0,1]$ and sufficiently large $T \in \N$, there exists a $\rho$-item-level-zCDP mechanism  
    for $\countdistinct$ that is 
    $\alpha$-accurate for all turnstile streams $x$ of length $T$,  where
    $$\alpha = O\Big(\frac{\sqrt{w_x}\log T + \log^3T}{\sqrt{\rho}}\Big),$$
    and $w_x$ is the maximum flippancy of the stream $x$.
\end{theorem}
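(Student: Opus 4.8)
The plan is to prove \Thm{item-w} in the two stages sketched in \Sec{techniques}: first a mechanism that is \emph{handed} a flippancy bound $w$, then an adaptive wrapper that discovers a good $w$ on the fly.

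\textbf{Stage 1.} I would reduce $\countdistinct$ to prefix summation: for a stream $x$, let $s_x\in\{-1,0,1\}^T$ be the difference stream $s_x[t]=\countdistinct(x)[t]-\countdistinct(x)[t-1]$, so $\countdistinct(x)[t]=\sum_{i\le t}s_x[i]$, and run the binary-tree mechanism of~\cite{DworkNPR10,ChanSS11} on $s_x$ with fresh Gaussian noise at every node. To be private on \emph{every} stream, the mechanism feeds the tree a \emph{capped} stream: for each element $u$, once $\flip(u,\cdot)$ reaches $w$, all later updates to $u$ are turned into $\bot$, so the effective existence vector of $u$ has at most $w$ sign changes. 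The crucial claim to establish is a \emph{per-level} sensitivity bound: if $x'$ is an item-neighbor of $x$ (differing only in occurrences of some $u$), then the two capped difference streams differ only at the $O(w)$ sign-change positions of $u$'s effective existence vectors, these positions lie in $O(w)$ dyadic intervals \emph{at each level} of the tree, and each affected node value moves by $O(1)$ because the sum of $s_x$ over a dyadic interval telescopes to a difference of two $\{0,1\}$ entries. Hence the node-value vector has $\ell_2$-sensitivity $O(\sqrt{w\log T})$ under item-neighbors, so by \Lem{gaussian-mech} per-node noise of variance $O(w\log T/\rho)$ gives $\rho$-zCDP, and since each answer sums $O(\log T)$ nodes, a union bound over $t\in[T]$ gives worst-case error $\BigO{\sqrt{w}\cdot\polylog T/\sqrt\rho}$ on any stream with $w_x\le w$. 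The same error in fact holds whenever at most $\sqrt{w}$ elements violate the bound: those are simply dropped from the count, moving each answer by at most $\sqrt{w}$.

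\textbf{Stage 2.} I would wrap this in a mechanism that maintains a guess $\hat w$, initialized to $1$ and only ever doubled. In the phase with guess $\hat w$ it runs a fresh Stage-1 copy with bound $\hat w$ and, in parallel, the sparse-vector algorithm~\cite{DworkNRRV09} on the queries ``number of elements $u$ with $\flip(u,x[1:t])>\hat w$'' (item-level sensitivity $1$) against a threshold $\Theta\!\big(\max(\sqrt{\hat w},\,\polylog T/\sqrt\rho)\big)$; when the sparse-vector algorithm reports that the threshold is crossed, set $\hat w\gets 2\hat w$ and restart both subroutines. Since $w_x\le T$ there are at most $\log T+1$ phases, and since no element has flippancy above $w_x$, once $\hat w\ge w_x$ the monitored query is identically $0$, so with high probability the sparse-vector algorithm never fires again and the final guess obeys $\hat w\le 2w_x$. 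Privacy then follows from zCDP composition (\Lem{cdp_composition}): there are at most $\log T+1$ Stage-1 copies and $\log T+1$ sparse-vector instances, each run on the whole stream and each private on \emph{all} streams, so giving each subroutine budget $\rho/(2(\log T+1))$ makes the whole mechanism $\rho$-zCDP (the phase boundaries are post-processing of private outputs). For accuracy I would condition on the high-probability events that no sparse-vector instance fires spuriously --- so $\hat w\le 2w_x$ and, within each phase, the over-the-bound count never much exceeds the threshold before the phase ends --- and that every Gaussian noise in the active tree is $\BigO{\sqrt{\hat w}\cdot\polylog T/\sqrt\rho}$; then at each time $t$ the active Stage-1 copy drops at most $\BigO{\max(\sqrt{\hat w},\polylog T/\sqrt\rho)}$ elements and adds $\BigO{\sqrt{\hat w}\cdot\polylog T/\sqrt\rho}$ noise, yielding total error $\BigO{\sqrt{w_x}\cdot\polylog T/\sqrt\rho+\polylog T/\sqrt\rho}$; tracking the exact logarithmic exponents gives the stated $\big(\sqrt{w_x}\log T+\log^3 T\big)/\sqrt\rho$.

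I expect the main obstacle to be the per-level sensitivity analysis in Stage 1, which is exactly what buys the $\sqrt{w}$ instead of $w$: a plain group-privacy argument only controls that $O(w)$ node values change (a factor-$w$ loss), and turning this into an $\ell_2$-sensitivity of $O(\sqrt{w\log T})$ needs the observation that the moves are spread over $\log T$ levels with $O(w)$ per level and $O(1)$ magnitude each. The second delicate point is coupling the sparse-vector monitor to the capping rule so that the wrapper simultaneously stops doubling at $O(w_x)$ and never lets the active phase drop more than $\tilde O(\sqrt{\hat w})$ elements, all while the monitored query remains sensitivity-$1$ under item-neighbors.
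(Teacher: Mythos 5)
Your proposal is correct and follows essentially the same route as the paper: a binary-tree mechanism on truncated (capped) existence vectors with a per-level $\ell_2$-sensitivity bound of $O(\sqrt{w\log T})$ yielding the $\sqrt{w}$ dependence, robustness of the accuracy to a few elements exceeding the bound, and a sparse-vector-monitored doubling of the flippancy estimate on the sensitivity-$1$ query counting over-threshold elements, with zCDP composition over $O(\log T)$ subroutines. The only (cosmetic) difference is that the paper runs all $\log T+1$ bounded-flippancy copies in parallel from the start and uses a single sparse-vector instance with cutoff $\log T$, rather than restarting subroutines at each doubling.
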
 
In \Sec{item-upper}, we describe a modification to the binary-tree mechanism which, when analyzed carefully, provides the desired error guarantees---but only if the maximum flippancy of the stream is known upfront. In \Sec{item-w}, we 
use this mechanism, in conjunction with a method for adaptively estimating the flippancy bound, to obtain our item-level-DP mechanism for $\countdistinct$.

\subsection{Enforcing a given flippancy bound $w$}\label{sec:item-upper}

When a flippancy upper bound $w$ is given upfront, we leverage the structure of the binary-tree mechanism to privately output the number of distinct elements at each time $t \in [T]$, where $T$ is the stream length.  The mechanism and its error guarantees are presented in \Alg{item-upper} and \Thm{item-upper}, respectively. To provide intuition, we first describe the mechanism when it is run on streams with maximum flippancy at most $w$. We then 
discuss a modification that ensures privacy of the mechanism for all streams regardless of maximum flippancy.

\Alg{item-upper} stores vectors $\boundedexist{\ele} \in \zo^T$ for all elements $\ele \in \univ$ that appear in the stream. For streams with maximum flippancy at most $w$, the vector $\boundedexist{\ele}$ is equal to the existence vector $\existence{\ele}$. In this case,  by \Def{existence-vector}, the number of distinct elements at timestep $t \in [T]$ equals $\sum_{\ele \in \univ} \boundedexist{\ele}[t]$. The mechanism outputs values $\sum_{\ele \in \univ} \boundedexist{\ele}[t]$ with Gaussian noise added according to the binary-tree mechanism, with privacy parameter $\approx \rho/w$ (see \Def{binary_tree_rv})---that is, with noise scaled up by a factor of $\approx \sqrt{w/\rho}$. The accuracy of this mechanism follows from that of the binary-tree mechanism.

However, if the mechanism computed $\existence{\ele}$ instead of $\boundedexist{\ele}$, it would not be
private for streams with maximum flippancy greater than $w$, since it adds noise that scales according to $w$. That is because for every stream $x \in \univ_\pm^T$ with maximum flippancy $w_x > w$ there exists a neighboring stream $x'$ such that  the vectors $\countdistinct(\dstream)$ and $\countdistinct(\dstream')$ differ in as many as $\Theta(w_x)$ indices. To provide privacy for such streams, the mechanism simply ``truncates'' the vector $\existence{\ele} \in \zo^T$ to obtain $\boundedexist{\ele}[t] = 0$ for all $t \geq t^*$ if the flippancy of $\ele$ in $x[1:t^*]$ exceeds $w$. This corresponds to running the naive version of the mechanism (that uses $\existence{\ele}$ instead of $\boundedexist{\ele}$) on the ``truncated'' version of the stream $x$, where elements in $x$ are ignored after their flippancy exceeds $w$.
(Note that the computation of $\boundedexist{\ele}$ can be done 
online since $\boundedexist{\ele}[t]$ depends only on $\dstream[1:t]$.) With a careful analysis of the value stored in each node of the binary tree, we are able to show that this mechanism is $\rho$-item-level-zCDP
for all streams, however, it loses accuracy for streams with many high flippancy elements. In \Sec{item-w}, we leverage this mechanism to provide estimates of $\countdistinct$ that are both accurate and private for \emph{all} streams.

\begin{theorem}[Mechanism for a given flippancy bound $w$]\label{thm:item-upper}
    Fix $\rho \in(0,1]$, sufficiently large $T \in \N$, and $w \leq T$. \Alg{item-upper} is a mechanism for $\countdistinct$ for turnstile streams that is $\rho$-item-level-zCDP for all input streams of length $T$, and $\alpha$-accurate for streams of length $T$ with maximum flippancy at most $w$, where
   $\alpha = O\paren{\frac{\sqrt{w}\log T + \log^3T}{\sqrt{\rho}}}.$
\end{theorem}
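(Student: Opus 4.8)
\textbf{Proof plan for Theorem~\ref{thm:item-upper}.}

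The plan is to split the argument into a privacy part and an accuracy part, with the privacy part carrying the main novelty. First I would describe \Alg{item-upper} precisely: maintain, for each element $\ele$ that appears in the stream, the truncated existence vector $\boundedexist{\ele} \in \zo^T$, which agrees with $\existence{\ele}$ until the first time $t^*$ at which $\flip(\ele, x[1:t^*])$ exceeds $w$, and is $0$ from $t^*$ on. Feed the ``difference stream'' $s[t] = \sum_{\ele} \boundedexist{\ele}[t] - \sum_{\ele}\boundedexist{\ele}[t-1]$ into a binary-tree mechanism on $[T]$ whose per-node Gaussian noise is calibrated so that the whole tree is $(\rho/(2w))$-zCDP with respect to a single coordinate change of the $s$-vector (equivalently, scale the noise up by $\Theta(\sqrt{w/\rho})$ relative to the standard binary-tree mechanism), and output the noisy prefix sums. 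Because $\boundedexist{\ele}[t]$ depends only on $x[1:t]$, this is an online mechanism.

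For \textbf{privacy}, the key claim is: for any stream $x$ and any item-neighbor $x'$ (obtained by $\perp$-ing out a subset of the $\ele$-entries for a single element $\ele$), the tree built from $x$ and the tree built from $x'$ differ in the value (pre-noise) of at most $w$ nodes \emph{per level}, hence at most $w\log T$ nodes total but, crucially, at most $w$ in any one level. To prove this I would argue that changing the $\ele$-entries can only affect $\boundedexist{\ele}$, and $\boundedexist{\ele}$ is a $0/1$ vector with at most $w$ flips by construction (even if the true existence vector of $\ele$ in $x$ or $x'$ has more flips, the truncation caps it at $w$); so $\boundedexist{\ele}(x)$ and $\boundedexist{\ele}(x')$ are each piecewise constant with at most $w$ blocks, and therefore the symmetric difference of their ``$1$''-sets is a union of at most $O(w)$ intervals. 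The difference vector $s(x) - s(x')$ is thus supported on at most $O(w)$ coordinates, and for a fixed level of the binary tree the dyadic intervals are disjoint, so only $O(w)$ of them can contain a coordinate where $s$ changed; moreover the change in each such node's count is bounded because it is a sum of at most two $\pm 1$ block-boundary contributions — so the $\ell_2$ change of the vector of node-values \emph{restricted to one level} is $O(\sqrt{w})$. Then, viewing the binary tree as the composition over the $O(\log T)$ levels of independent Gaussian mechanisms, each of $\ell_2$-sensitivity $O(\sqrt{w})$ under this item-neighbor relation, and each run at zCDP parameter $\approx \rho/(w\log T)\cdot w = \Theta(\rho/\log T)$ — wait, more carefully: set each level's Gaussian noise so that a single level is $(\rho/(2\log T))$-zCDP at $\ell_2$-sensitivity $\sqrt{w}$, i.e. $\sigma^2 = \Theta(w\log T/\rho)$; then by zCDP composition (Lemma~\ref{lem:cdp_composition}) over the $O(\log T)$ levels the whole mechanism is $\rho$-zCDP, and post-processing (Lemma~\ref{lem:postprocess}) to prefix sums preserves this. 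The subtlety I'd be careful about is that the naive ``$2w$ changed coordinates of $s$'' bound would only give sensitivity $O(w)$ per level and hence error $O(w\,\polylog T)$; the whole point is the per-level-$\sqrt{w}$ refinement, so I must make sure the truncation really does cap the block count by $w$ and that block boundaries contribute $O(1)$ each to any node.

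For \textbf{accuracy}, restrict attention to streams with $w_x \le w$. Then no truncation ever occurs, so $\boundedexist{\ele} = \existence{\ele}$ for all $\ele$, hence $\sum_\ele \boundedexist{\ele}[t] = \countdistinct(x)[t]$ exactly, and the output at time $t$ is $\countdistinct(x)[t]$ plus the sum of the Gaussian noises on the $O(\log T)$ dyadic intervals decomposing $[1:t]$. Each such noise is $\gauss(0, \sigma^2)$ with $\sigma = \Theta(\sqrt{w\log T/\rho})$, so the total noise at a fixed $t$ is Gaussian with standard deviation $O(\sqrt{w}\log T/\sqrt{\rho})$; a union bound over $t \in [T]$ together with a Gaussian tail bound gives, with probability $\ge 0.99$, an $\ell_\infty$ error of $O\!\big(\sqrt{w}\log T \cdot \sqrt{\log T}/\sqrt{\rho}\big)$. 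To get the stated bound $\alpha = O\big((\sqrt{w}\log T + \log^3 T)/\sqrt{\rho}\big)$, I'd invoke the standard sharper analysis of the binary-tree mechanism (e.g.\ bounding $\ell_\infty$ of the sum of $\le \log T$ independent Gaussians by $O(\log T \cdot \text{(single stddev)})$ rather than $O(\sqrt{\log T})$, which is where the $\log^3 T$ term originates when $w$ is small and one wants the noise to also dominate lower-order logarithmic terms) — essentially quoting the known $O(\polylog T/\sqrt{\rho})$ guarantee of the binary-tree mechanism for $\rho$-zCDP summation and scaling by $\sqrt{w}$. The main obstacle throughout is the per-level sensitivity bound in the privacy proof — everything else is routine once that is in place.
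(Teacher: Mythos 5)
Your proposal is correct and matches the paper's proof in all essentials: the truncation caps the flippancy of each $\boundedexist{\ele}$ at $w$, so item-neighboring streams change at most $O(w)$ node values in each level of the binary tree, each by at most $2$ (the node difference telescopes to a difference of the two endpoint values of $\boundedexist{\ele}$ and of its counterpart for $x'$, which is the clean way to state your ``block-boundary'' bound), yielding per-level $\ell_2$-sensitivity $O(\sqrt{w})$, and your level-by-level zCDP composition is the same accounting, up to constants, as the paper's single Gaussian mechanism applied to the full vector of node values with sensitivity $\sqrt{8w(\log T+1)}$. The accuracy step (exactness of $\sum_{\ele}\boundedexist{\ele}[t]$ when $w_x\le w$, Gaussian tails, union bound over $t$) is also the paper's; the $\sqrt{\log T}$ overhead you worry about is not removed by any ``sharper binary-tree analysis''---a careful version of the paper's own calculation likewise gives $O(\sqrt{w}\log^{1.5}T/\sqrt{\rho})$, and the $\log^{3}T$ term in the statement is slack rather than the product of a finer tail bound, so this polylogarithmic detail does not distinguish your argument from the paper's.
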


\begin{algorithm}[ht!]
    \caption{Mechanism $\cM$ for $\mathsf{CountDistinct}$ with given flippancy bound}
    \label{alg:item-upper}
    \begin{algorithmic}[1]
        \Statex \textbf{Input:} Time horizon $T \in \N$, privacy parameter $\rho > 0$, flippancy upper 
        bound $w > 0$, stream $x \in \uni_\pm^T$
        \Statex \textbf{Output:} Vector $\out \in \R^T$ of distinct count estimates
        \State Sample a binary-tree random variable $Z \in \R^T$ with parameter $\rho' = \frac{\rho}{4w(\log T + 1)}$ \Comment{\Def{binary_tree_rv}} \label{step:choose_rho}
        
        \State Initialize $\uni_{\dstream}=\emptyset$
        
        \ForAll{$t \in [T]$}
        \State Obtain entry $x[t]$ and skip to Step~\ref{step:output-distinct-elements} if $x[t]=\bot$
        \State Suppose $x[t]$ is an insertion or deletion of a universe element $\ele$
        \State {\bf if} $\ele\notin \uni_\dstream$ {\bf then} insert $\ele$ into $\uni_\dstream$; initialize $\mathsf{count}_{\ele} = 0$ and $\boundedexist{\ele} = \mathsf{0}^T$
        \Comment{vector with $T$ zeros}
        
        \State {\bf if} $x[t] = +u$ {\bf then} $\mathsf{count}_u\mathrel{+}=1$ {\bf else} $\mathsf{count}_u \mathrel{-=}1$
        \ForAll{$v \in \uni_\dstream$}
        
        \State {\bf if} $\flip(v, x[1:t]) \leq w$ and $\mathsf{count}_{v} > 0$ {\bf then} set $\boundedexist{v}[t] = 1$

        \label{step:update_f}
        \EndFor
        \State Return $\out[t] = (\sum_{u \in \uni_x} \boundedexist{\ele}[t]) + Z[t]$
        \label{step:output-distinct-elements}
        \EndFor
    \end{algorithmic}
\end{algorithm}

\begin{definition}[Dyadic decomposition]\label{def:bintree_dyadic}
For $t \in \N$, the {\em dyadic decomposition} of the interval $(0,t]$ is a set of at most $\log t + 1$ disjoint intervals whose union is $(0, t]$, obtained as follows. Consider the binary representation of $t$ (which has at most $\log t + 1$ bits), and express $t$ as a sum of distinct powers of $2$.
Then, the first interval is $(0, r]$, where $r$ is the largest power of $2$ in the sum. The second interval starts at $r+1$ and its size is the second largest power of $2$ in the sum. The remaining intervals are defined similarly for all remaining summands. For example, for $t=11=8+2+1$, the dyadic decomposition of $(0,11]$ is the intervals $(0,8]$, $(8,10]$ and $(10,11]$. 
\end{definition}

\begin{definition}[Binary tree and binary-tree random variable] 
\label{def:binary_tree_rv}
Let $\rho > 0$ be a privacy parameter and $T \in \N$ be a power of 2.  Consider a complete binary tree with $T$ leaves whose nodes are labeled as follows. The $T$ leaves are labeled by the intervals $(t-1, t]$ for all $t \in [T]$ and the internal nodes are labeled by intervals obtained from the union of their children's intervals. Specifically, the binary tree consists of $\log T +1$ levels. A level $\ell \in [0,  \log T]$ partitions the interval $(0, T]$ into a set of $\frac{T}{2^{\ell}}$ disjoint intervals, each of length $2^{\ell}$, of the form $((i-1)\cdot 2^{\ell},i\cdot 2^{\ell}]$. The nodes in level $\ell$ are labelled by the intervals in this partition. 

The binary-tree random variable $Z \in \R^T$ with parameter $\rho$ is defined as follows. For each node $(t_1, t_2]$ in the binary tree with $T$ leaves, let $Z_{(t_1, t_2]} \sim  \cN(0, 1/\rho)$. For each $t \in [T]$, consider the dyadic decomposition of the interval $(0, t]$ (Definition~\ref{def:bintree_dyadic}) and let $Z[t]$ be the sum of the random variables corresponding to the intervals in this dyadic decomposition.
\end{definition}

\begin{proof}[Proof of \Thm{item-upper}]
 We start by reasoning about the privacy of \Alg{item-upper}. It is helpful to think about \Alg{item-upper} more explicitly in terms of the binary tree mechanism. We define a mechanism $\cM'$ that returns noisy values for all nodes of the binary tree from \Def{binary_tree_rv}
 and show that the output of  \Alg{item-upper} can be obtained by post-processing the output of $\cM'$. 

 Assume w.l.o.g.~that $T$ is a power of 2; otherwise, consider the value $2^{\lceil\log_2 T\rceil}$ instead.
 Fix a stream $x$ as the input to \Alg{item-upper}.
    For all $t\in[T]$, let $F[t] = \sum_{\ele \in \uni}\boundedexist{\ele}[t]$, where the vector $\boundedexist{\ele}$ is obtained by the end of running \Alg{item-upper} with input $x$. (If $u \notin \uni_x$, set $\boundedexist{\ele} = \mathsf{0}^T$. Set $F[0] = 0$). Define $\cM'$ so that on input $x$, for each node $(t_1,t_2]$ of the binary tree with $T$ leaves, it outputs $F[t_2] - F[t_1] + Z_{(t_1, t_2]}$. 
    
    We show how to obtain the outputs of \Alg{item-upper} from the outputs of $\cM'$. For each time step $t\in[T]$ consider the dyadic decomposition 
    of the interval $(0,t]$ into $k$ intervals $(t_0, t_1], (t_1, t_2], \dots, (t_{k-1}, t_{k}]$, corresponding to nodes in the binary tree, where $t_0 = 0$, $t_k = t$, and $k \leq \log T + 1$. Add the outputs corresponding to the nodes in the dyadic decomposition of $(0,t]$ to obtain
    \begin{align*}
        \sum_{i \in [k]} F[t_i] - F[t_{i-1}] + Z_{(t_{i-1}, t_i]} 
        = F[t_k] - F[0] + \sum_{i \in [k]} Z_{(t_{i-1}, t_i]} = F[t] + Z[t],
    \end{align*}
    where the last equality holds because $Z$ is a binary-tree random variable (see \Def{binary_tree_rv}). 
   The right-hand side is exactly the $t$-th output of \Alg{item-upper}.

    We now show that $\cM'$ is $\rho$-item-level-zCDP, which implies that \Alg{item-upper} is $\rho$-item-level-zCDP. For each level $\ell \in [0, \log T]$ of the binary tree, define a vector $G_\ell$ of 
    length $\frac{T}{2^{\ell}}$ at that level as follows: 
     \begin{align*}
         G_\ell[i] = F[i\cdot 2^{\ell}] - F[(i-1)\cdot 2^{\ell}]  \quad \text{ for all } i \in [T/2^{\ell}].  
     \end{align*}
     The random variable $G_\ell[i]$ + $Z_{(2^{\ell}\cdot(i-1), 2^\ell\cdot i]}$ equals the output of $\cM'$ for node $(2^{\ell}\cdot(i-1), 2^\ell\cdot i]$ in the binary tree. Let $G = (G_0, G_1 \dots, G_{\log T})$.  Mechanism $\cM'$ corresponds to applying the Gaussian mechanism (\Lem{gaussian-mech}) to the output vector $G$,  since the variables
     $Z_{(t_1,t_2]}$ corresponding to the nodes $(t_1,t_2]$ of the binary tree
     are independent. We now bound the $\ell_2$-sensitivity of $G$. 
     Let $\dstream'$ be an item-neighboring stream of $\dstream$, and let $\ele \in \uni$ be the universe element on which the two streams differ. Define $\tilde{f_\ele'}$, $F'$, $G_\ell'$, and $G'$ for the stream $x'$ analogously to the definitions of $\boundedexist{\ele}$, $F$, $G_{\ell}$, and $G$ for stream $x$. 
     
    \begin{lemma}[$\ell_2$-sensitivity of $G$] \label{lem:g_sensitivity}
    For all item-neighboring streams $x$ and $x'$,
            \begin{align}
        \| G - G'\|_2 
        \leq \sqrt{ 8w (\log T+1)}.
        \label{eq:sensitivity}
     \end{align}
    \end{lemma}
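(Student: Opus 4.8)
The plan is to bound $\|G - G'\|_2$ by analyzing, level by level, how many coordinates of $G_\ell$ can differ between $x$ and its item-neighbor $x'$, and by how much. Fix the element $\ele$ on which $x$ and $x'$ differ. The key observation is that only $\ele$'s truncated existence vector changes: for every other element $v$, the stream entries pertaining to $v$ are identical in $x$ and $x'$, so $\boundedexist{v}$ is unchanged, and $\mathsf{flip}(v, \cdot)$ is unchanged. Hence $F[t] - F'[t] = \boundedexist{\ele}(x)[t] - \boundedexist{\ele}(x')[t]$ for every $t$, a vector taking values in $\{-1,0,1\}$.

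First I would argue that the number of indices $t$ at which $\boundedexist{\ele}(x)[t] \neq \boundedexist{\ele}(x')[t]$ is controlled, and more importantly that the number of \emph{blocks} of such indices is small. Because $x$ and $x'$ are item-neighbors, $\boundedexist{\ele}(x)$ is obtained from $\boundedexist{\ele}(x')$ (or vice versa) by a controlled perturbation: each of the two truncated existence vectors is itself monotone-in-structure with flippancy at most $w$ by construction (the truncation at Step~\ref{step:update_f} caps flippancy at $w$), so each of $\boundedexist{\ele}(x)$ and $\boundedexist{\ele}(x')$ has at most $w$ ``switch points'' (adjacent positions with differing values). Therefore the difference vector $D[t] := F[t] - F'[t]$, being the difference of two $\{0,1\}$-vectors each with at most $w$ switch points, is piecewise constant with at most $2w + 1$ pieces, hence at most $2w$ switch points of its own. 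Equivalently, $D$ is supported on at most $O(w)$ maximal runs; I'd want to pin the constant down to say that $D$ has at most $2w$ indices $j$ with $D[j] \neq D[j+1]$ (treating $D[0] = D[T+1] = 0$ as boundary), which bounds the number of ``interval endpoints'' where $D$ changes.

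Next, for a fixed level $\ell$, $G_\ell[i] - G_\ell'[i] = D[i\cdot 2^\ell] - D[(i-1)\cdot 2^\ell]$. This is nonzero only if $D$ is not constant on the interval $\big((i-1)2^\ell,\, i 2^\ell\big]$ at its two endpoints differ — more carefully, only if the endpoints $(i-1)2^\ell$ and $i 2^\ell$ straddle (or coincide with) a switch point of $D$. Since $D$ has at most $2w$ switch points, and the level-$\ell$ endpoints $\{i \cdot 2^\ell\}$ are disjoint for distinct $i$, at most $2 \cdot 2w = 4w$ coordinates $i$ at level $\ell$ can have $G_\ell[i] \neq G_\ell'[i]$ (each switch point of $D$ can ``activate'' the block to its left and the block to its right). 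Each such coordinate differs by at most $2$ in absolute value (since $D$ takes values in $\{-1,0,1\}$, a difference of two such values is at most $2$), so $\|G_\ell - G_\ell'\|_2^2 \le 4w \cdot 2^2 = 16 w$. Hmm — that gives $\sum_\ell \|G_\ell - G_\ell'\|_2^2 \le 16 w (\log T + 1)$, i.e. $\|G - G'\|_2 \le 4\sqrt{w(\log T+1)}$; to get the claimed $\sqrt{8w(\log T+1)}$ I'd need to tighten the per-level count to at most $2w$ coordinates each differing by at most $2$ (giving $8w$ per level), which should follow from a more careful accounting of how switch points of $D$ map to differing level-$\ell$ coordinates — e.g. noting that a coordinate $i$ differs only if the half-open interval $((i-1)2^\ell, i2^\ell]$ contains a switch point of $D$ in a way that makes the two \emph{endpoint} values differ, and each of the $\le 2w$ switch points lies in exactly one such interval per level.

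The main obstacle I expect is exactly this last accounting step: correctly bounding, per level, the number of differing coordinates of $G_\ell$ in terms of the number of switch points of $D$, getting the constant right, and handling the truncation bookkeeping cleanly (in particular verifying that the truncated vector $\boundedexist{\ele}$ genuinely has at most $w$ switch points regardless of the true flippancy of $\ele$ in the stream, and that the truncation times for $\ele$ in $x$ and in $x'$ can be analyzed together). Everything else — reducing to a single element, the $\{-1,0,1\}$ bound on $D$, the telescoping identity for $G_\ell$ — is routine.
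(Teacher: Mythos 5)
Your proposal is correct and takes essentially the same route as the paper: the paper also argues level by level, bounds each differing coordinate by $2$, and bounds the number of differing coordinates at each level by $2w$ using the fact that the disjoint level-$\ell$ intervals can each contain a flip of $\boundedexist{\ele}$ or of $\tilde{f_\ele'}$, of which there are at most $w$ each. The ``more careful accounting'' you flag as the main obstacle is exactly this disjointness observation (your switch points of $D$ are contained in the union of the flips of the two truncated vectors, and each lies in exactly one interval per level), so your tightened per-level count of $2w$ rather than $4w$ does go through and gives the claimed $\sqrt{8w(\log T+1)}$.
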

    \begin{proof}
        We first show that for all levels $\ell \in [0, \log T]$,
        \begin{equation*}
           \| G_{\ell} -G_{\ell}' \|_2 \leq \sqrt{8w}.
        \end{equation*}
      Fix some $\ell \in [0, \log T]$ and $i \in [\frac{T}{2^{\ell}}]$. Define $i_1 = (i-1)\cdot 2^{\ell}$ and $i_2 = i \cdot 2^{\ell}$. Since the streams $x$ and $x'$ only differ in the occurrences of element $\ele$, the values $G_\ell[i]$ and $G_\ell'[i] $ differ by at most $2$:
     \begin{align}
        | G_\ell[i] - G_\ell'[i] | =
        | \boundedexist{\ele}[i_2] -\boundedexist{\ele}[i_1] - \tilde{f_\ele'}[i_2] + \tilde{f_\ele'}[i_1] | \leq 2, \label{eq:constant_2}
     \end{align}
     where the inequality follows from the fact that $\boundedexist{\ele}, \tilde{f_\ele'} \in \{0,1\}^T$.

     Observe that $G_{\ell}[i] - G_{\ell}'[i] \neq 0$ only if at least one of the following hold: $\boundedexist{\ele}[i_1] \neq \boundedexist{\ele}[i_2]$ or $\tilde{f_\ele'}[i_1] \neq \tilde{f_\ele'}[i_2]$. Define the flippancy of a vector $a \in \R^T$, denoted $\flip(a)$, as the number of pairs of adjacent entries of $a$ with different values.
     The condition  $\boundedexist{\ele}[i_1] \neq \boundedexist{\ele}[i_2]$ implies that a ``flip'' occurs in the vector $\boundedexist{\ele}$ between indices $i_1$ and $i_2$. The same holds for $\tilde{f_\ele'}$. By the design of \Alg{item-upper} (and consequently $\cM'$), $\flip(\boundedexist{\ele}) \leq w$ and $\flip(\tilde{f_\ele'}) \leq w$.     Additionally, all intervals $(i_1, i_2]$  for a fixed $\ell$ are disjoint. Hence, the number of intervals $i \in [\frac{T}{2^{\ell}}]$ such that $G_{\ell}[i] \neq G'_{\ell}[i]$ is at most $2w$.
Combining this fact with
Equation~(\ref{eq:constant_2}), we obtain the following upper bound on the $\ell_2$-sensitivity of $G_{\ell}$ for all levels $\ell \in [0,\log T]$:
     \begin{align*}
         \| G_{\ell} - G_{\ell}' \|_2^2 = \sum_{i \in [T/2^{\ell}]} (G_\ell[i] - G_\ell'[i])^2 \leq 2w \cdot 2^2 = 8w.
     \end{align*}
Combining the equalities for all levels, we obtain
     \begin{align*}
        \| G - G'\|_2^2 = 
        \sum_{\ell \in [0,\log T]} \|G_{\ell} - G_{\ell}'\|_2^2  
        \leq 8w (\log T+1).
     \end{align*}
     This concludes the proof of \Lem{g_sensitivity}.
\end{proof}

Recall that mechanism $\cM'$ corresponds to applying the Gaussian mechanism to the output vector $G$. 
By the $\ell_2$-sensitivity bound for $G$ (\Lem{g_sensitivity}), and the 
privacy of the Gaussian mechanism (\Lem{gaussian-mech}), we obtain that $\cM'$ is $(8w(\log T+1)\rho'/2)$-zCDP, where $\rho'$ is chosen in \Step{choose_rho} of \Alg{item-upper}. By the choice of $\rho'$, mechanism $\cM'$ (and hence, \Alg{item-upper}) is $\rho$-item-level-zCDP.

Next, we analyze the accuracy of \Alg{item-upper}. Suppose the input stream $x$ has maximum flippancy at most $w$. Then the variables $\boundedexist{\ele}$ from \Alg{item-upper} with input stream $x$ satisfy  $\boundedexist{\ele} = \existence{\ele}(x)$. Recall that $\countdistinct(x) \in \R^T$ denotes the vector of distinct counts for $x$. 
Then  $\countdistinct(x) = \sum_{i\in \uni} \existence{\ele}(x) = \sum_{i\in \uni} \boundedexist{\ele}(x)= \out-Z$, where $s$ is the vector of outputs of \Alg{item-upper} defined in \Step{output-distinct-elements}. 
As a result, $\err_{\countdistinct}(x, \out) = \max_{i \in [T]} |Z[t]|$. Each $Z[t]$ is the sum of at most $\log T + 1$ independent Gaussian random variables distributed as $\cN(0, \frac{1}{\rho'})$.
Therefore, $Z[t]$ is also Gaussian with mean 0 and variance at most $\frac{\log T + 1}{\rho'}$. We bound the error of our algorithm by standard concentration inequalities for Gaussian random variables. 
Set $m =  \sqrt{16w (\log T + 1)^2/\rho}$. By \Lem{gauss_max}, 
\begin{align*}
    \Pr[ \err_{\countdistinct}(x, \out) \geq m ] = \Pr\Big[\max_{t \in [T]}Z[t] \geq m \Big] \leq 2Te^{-\frac{m^2 \rho'}{2(\log T + 1)} } = 2Te^{-2(\log T + 1)} = \frac{2}{e^2T}.
\end{align*}
Note that $\frac{2}{e^2T} \leq \frac{1}{100}$ for large enough $T$, which concludes the proof of \Thm{item-upper}.
\end{proof}

\subsection{Adaptively estimating a good flippancy bound $w$}\label{sec:item-w}

In this section, we leverage the privacy and accuracy guarantees of \Alg{item-upper} to construct a new mechanism (\Alg{item-w}) for estimating $\countdistinct$. 
It achieves the privacy and accuracy guarantees of \Thm{item-w}, when the maximum flippancy is not known upfront. 
\Alg{item-w} instantiates $\log T+1$ different copies $\mathcal{B}_0, \dots \mathcal{B}_{\log T}$ of \Alg{item-upper} with flippancy bounds $ 2^0, \dots, 2^{\log T}$, respectively (the maximum flippancy of a stream is at most $T$). To obtain an accurate estimate of the distinct elements count,
at each time $t\in[T]$, we privately select $i \in [0, \log T]$ such that the output of $\mathcal{B}_i$ satisfies the desired accuracy guarantee for the stream entries $x[1:t]$ received so far. Selecting such $i$ amounts to selecting a good bound on the maximum flippancy of the stream $x[1:t]$. Next, we describe how to obtain this bound using the sparse vector technique  (\Alg{svt}).

 The maximum flippancy has high sensitivity: changing one stream entry can drastically change the maximum flippancy. However, the number of items with flippancy greater than any particular threshold is a function of sensitivity one. Furthermore, since \Alg{item-upper} when run with flippancy bound $w$ already has error about $\sqrt{w/\rho}$, its accuracy guarantee remains asymptotically the same even if it simply ignores
that many elements with flippancy greater than $w$. 
Thus, \Alg{item-w} uses the sparse vector technique to maintain an upper bound on the flippancy of $x[1:t]$ such that not too many elements in $x[1:t]$ violate that bound. This bound, in combination with the error guarantee of \Alg{item-upper}, suffices to provide the desired low error guarantee.    
Since the sparse vector algorithm remains differentially private even when its queries are chosen adaptively, the privacy guarantees of \Alg{item-w} follow from the privacy of 
Algorithms~\ref{alg:item-upper}~and~\ref{alg:svt}.

\begin{algorithm}[ht] 
    \caption{$\mathsf{SVT}$: Answering Threshold Queries with Sparse Vector Technique}
    \label{alg:svt}
    \begin{algorithmic}[1]
        \Statex \textbf{Input:} Stream $x$, queries $q_1, q_2, \dots $ of sensitivity $1$, 
        cutoff $c > 0$, privacy parameter $\rho$
        \Statex \textbf{Output:} Stream of $\abov$ or $\below$ answers
        \State Let $\eps = \sqrt{2\rho}$ and set $\mathsf{count} = 0$
        \State Let $Z \sim \mathrm{Lap}(2/\eps)$
        \For{each query $q_t$}
        \State Let $Z_t \sim \mathrm{Lap}(4c/\eps)$
        \If{$q_t(x) + Z_t \geq 
        Z$ and $\mathsf{count} < c$}
        \State Return $\abov$ 
        \State $\mathsf{count} = \mathsf{count} + 1$ \label{step:update-count}
        \Else{}
        \State Return $\below$
        \EndIf
        \EndFor
    \end{algorithmic}
\end{algorithm}

\begin{algorithm}[ht]
    \caption{Mechanism $\cM$ for $\countdistinct$}
    \label{alg:item-w}
    \begin{algorithmic}[1]
        \Statex \textbf{Input:} Time horizon $T \in \N$, privacy parameter $\rho > 0$, stream $x \in \uni_\pm^T$
        \Statex \textbf{Output:} Vector $\out$ of distinct count estimates
        \State Initialize vector $w_{\max} = 1 \circ 0^{T-1}$
        \ForAll{$i \in [0, \log T]$}
            \State Initialize $\cB_i$ as \Alg{item-upper} with horizon $T$, privacy parameter $\frac{\rho}{2(\log T+1)}$, flippancy $2^i$ \label{step:b_i}
        \EndFor
        \State Initialize $\mathsf{SVT}$ with privacy parameter $\rho/2$ and cutoff $\log T$
        \Comment{See \Alg{svt}}
        \ForAll{$t \in [T]$} 
         \State Obtain entry $x[t]$
        \State If $t \geq 2$, set $w_{\max}[t] = w_{\max}[t-1]$
        \ForAll{$i \in [0, \log{T}]$} 
        \State Send $x[t]$ to mechanism $\mathcal{B}_i$ and get output $\out_{i,t}$
        \EndFor
        \While{$\mathrm{True}$}
            \State Consider query $q_t = |\{u \in \uni : \flip(u, x[1:t]) \geq w_{\max}[t] \}| - \sqrt{\frac{w_{\max}[t]}{\rho}}$ \label{step:query}
            \State Send query $q_t$ to $\mathsf{SVT}$ and if the output is ``$\below$'', $\mathbf{break}$
            \State Update $w_{\max}[t] = 2 \cdot w_{\max}[t]$ \label{step:item-w-update}
        \EndWhile
        \State Return $\out_{j, t}$ for $j = \log (w_{\max}[t])$\label{step:item-w-output} \Comment{Note that $j \in [0, \log T]$ }
        \EndFor
    \end{algorithmic}
\end{algorithm}

The accuracy and privacy guarantees of the sparse vector technique (\Alg{svt}) are stated in \Thm{svt}. 

\begin{definition}[$\gamma$-accuracy \cite{DworkNRRV09}]  Let  $(a_1, \dots, a_k) \in$  $\{ \abov, \below\}^k$ be a vector of answers in response to $k$ queries $q_1, \dots, q_k$ on a dataset $x$. We say $(a_1, \dots, a_k)$ is $\gamma$-accurate 
if $q_t(x) \geq 
- \gamma$ for all $a_t  = \abov$ and $q_t(x) \leq 
\gamma$ for all $a_t =\below$.
 \end{definition}

\begin{theorem}[\cite{DworkNRRV09, LyuSL17}]
    \label{thm:svt}
    \Alg{svt} is $\sqrt{2\rho}$-DP (and therefore $\rho$-zCDP). Let $k$ be the index of the last ``$\abov$'' query answered by \Alg{svt} (before cutoff $c$ has been crossed). For all $\beta \in (0,1)$, with probability at least $1-\beta$, the vector of answers to the queries $q_1, \dots, q_k$ is $\gamma$-accurate for $\gamma = \frac{8c(\ln k + \ln(2c/\beta))}{\sqrt{2\rho}}$.
\end{theorem}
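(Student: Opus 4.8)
The plan is to prove Theorem~\ref{thm:svt} in two parts, handling privacy and accuracy separately, since Algorithm~\ref{alg:svt} is a minor variant of the standard sparse vector technique with Laplace noise and a cutoff. For the privacy claim, I would first rewrite $\abov$/$\below$ answers in terms of the comparisons $q_t(x) + Z_t \ge Z$. The key observation, as in the classical analysis of \citet{DworkNRRV09} (and the tightened version of \citet{LyuSL17}), is that only the (at most) $c$ ``$\abov$'' answers leak information: once $\mathsf{count}$ reaches $c$, all remaining answers are $\below$ regardless of the data. I would condition on the threshold noise $Z \sim \mathrm{Lap}(2/\eps)$ and argue, via the standard ``shift both $Z$ and the noise on the below-queries'' coupling, that the sequence of answers up to the $c$-th ``$\abov$'' is $(\eps/2, 0)$-indistinguishable from the per-query noise $Z_t \sim \mathrm{Lap}(4c/\eps)$ contributing another $(\eps/2,0)$ across the $\le c$ above-answers (each query has sensitivity $1$, and the noise scale $4c/\eps$ gives $(\eps/(2c),0)$ per above-answer, composing to $(\eps/2,0)$). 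Combining the two halves yields $(\eps,0)$-DP, and since $\eps = \sqrt{2\rho}$, Lemma~\ref{lem:CDPtoDP}'s converse direction ($\eps$-DP implies $\tfrac12\eps^2$-zCDP) gives $\rho$-zCDP.

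For the accuracy claim, I would union-bound over the relevant noise variables. Let $k$ be the index of the last ``$\abov$'' query. Conditioned on the value of the threshold noise $Z$, an answer $a_t = \abov$ requires $q_t(x) + Z_t \ge Z$, and $a_t = \below$ (before cutoff) requires $q_t(x) + Z_t < Z$. So it suffices that $|Z| \le \gamma/2$ and $|Z_t| \le \gamma/2$ for all $t \le k$. Since $Z \sim \mathrm{Lap}(2/\eps)$ and each $Z_t \sim \mathrm{Lap}(4c/\eps)$, the tail bound $\Pr[|\mathrm{Lap}(b)| > b\ln(1/\beta')] \le \beta'$ gives, after setting $\beta' = \beta/(2k)$ for the $Z_t$'s and $\beta' = \beta/2$ for $Z$, that with probability $\ge 1 - \beta$ we have $|Z| \le \tfrac{2}{\eps}\ln(2/\beta)$ and $\max_{t\le k}|Z_t| \le \tfrac{4c}{\eps}\ln(2k/\beta)$. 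Taking $\gamma = \tfrac{8c}{\eps}(\ln k + \ln(2c/\beta)) = \tfrac{8c(\ln k + \ln(2c/\beta))}{\sqrt{2\rho}}$ dominates both bounds (here $c \ge 1$, so the $Z$-term is absorbed), which establishes $\gamma$-accuracy per the definition: $q_t(x) \ge -\gamma$ whenever $a_t = \abov$ and $q_t(x) \le \gamma$ whenever $a_t = \below$.

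The main obstacle is the privacy argument, specifically getting the coupling right for the variant with \emph{per-query} noise $Z_t$ (classical SVT adds noise $\mathrm{Lap}(2c/\eps)$ to queries; here it is $\mathrm{Lap}(4c/\eps)$, and the exact constants need care). The subtlety is that the ``$\abov$'' event for query $t$ depends on the data both through $q_t(x)$ and through the comparison to the shared $Z$; the standard trick is to fix the randomness of $Z_t$ for all \emph{below}-answers by a deterministic shift argument and only ``pay'' for the bounded number of above-answers. I would follow the Lyu--Su--Li presentation closely, being careful that the cutoff $c$ genuinely caps the number of above-answers, so the composition is over at most $c$ events regardless of stream length $T$. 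The accuracy half is routine tail bounding once this is in place.
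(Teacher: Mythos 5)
The paper does not prove this theorem at all---it is imported verbatim from the cited works \cite{DworkNRRV09, LyuSL17}---so there is no in-paper proof to compare against; your sketch is the standard sparse-vector analysis from those sources and its accounting is correct: the threshold noise $\mathrm{Lap}(2/\eps)$ costs $\eps/2$, each of the at most $c$ ``$\abov$'' answers costs $\eps/(2c)$ with query noise $\mathrm{Lap}(4c/\eps)$ (the factor $2$ coming from shifting both the threshold and the above-query noise), giving $\eps$-DP and hence $\tfrac12\eps^2=\rho$-zCDP via Lemma~\ref{lem:CDPtoDP}, while the accuracy half follows from Laplace tails exactly as you describe. One small wording slip: in the coupling, the threshold shift of $Z$ is what protects the $\below$-answers for free, and it is the noise on the \emph{above}-queries (not the below-queries) that gets shifted and paid for; your later sentence (``fix the randomness of $Z_t$ for all below-answers and only pay for the above-answers'') states it correctly. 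A fully rigorous write-up would also handle the fact that $k$ (the index of the last $\abov$ answer) is random when applying the union bound, e.g.\ by union-bounding over all answered queries or over possible values of $k$; this subtlety is inherited from the statement itself and is treated the same way in the cited references.
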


To prove \Thm{item-w}, we use a slightly stronger result (\Cor{item-upper}) on the accuracy of \Alg{item-upper}.

\begin{corollary}\label{cor:item-upper}
Fix $\rho > 0$, sufficiently large $T \in \N$, and a flippancy bound $w \leq T$. \Alg{item-upper} satisfies the following accuracy guarantee for all streams $\dstream \in \univ_\pm^T$ and $t\in[T]$: if at most $\ell$ elements in the prefix $\dstream[1:t]$ of the stream $x$ have flippancy greater than $w$, then,  with probability at least $1-\frac{1}{T}$, \Alg{item-upper} has error $O(\ell + \sqrt{\frac{w \log^2 T}{\rho}})$ over all time steps from $1$ to $t$. 
\end{corollary}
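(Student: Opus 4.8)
The plan is to reuse the error decomposition from the proof of Theorem~\ref{thm:item-upper}, isolating the only genuinely new ingredient: a deterministic bound on the bias introduced by truncating the high-flippancy elements. Fix an input stream $x$ and a time step $t \in [T]$, and let $B = \{u \in \univ : \flip(u, x[1:t]) > w\}$ be the set of elements whose flippancy in the prefix $x[1:t]$ exceeds $w$; by hypothesis $|B| \le \ell$. For every $s \le t$ I would write the error of Algorithm~\ref{alg:item-upper} at step $s$ as
$$\out[s] - \countdistinct(x)[s] \;=\; \sum_{u \in \univ}\bigl(\boundedexist{u}[s] - \existence{u}(x)[s]\bigr) \;+\; Z[s],$$
using that $\out[s] = \bigl(\sum_u \boundedexist{u}[s]\bigr) + Z[s]$ (\Step{output-distinct-elements}) and $\countdistinct(x)[s] = \sum_u \existence{u}(x)[s]$ (Definition~\ref{def:existence-vector}).

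The core observation is that the first sum has magnitude at most $\ell$. By \Step{update_f}, we have $\boundedexist{u}[s] = 1$ exactly when $\flip(u, x[1:s]) \le w$ and $\mathsf{count}_u > 0$ after processing $x[1:s]$; and $\mathsf{count}_u > 0$ is, by construction, equivalent to $\existence{u}(x)[s] = 1$. Hence for every element $u$ with $\flip(u, x[1:s]) \le w$ we get $\boundedexist{u}[s] = \existence{u}(x)[s]$, contributing $0$ to the sum. Since flippancy is monotone non-decreasing along prefixes, $\flip(u, x[1:s]) \le \flip(u, x[1:t])$, so any element contributing a nonzero term must lie in $B$; each such term has absolute value at most $1$ because $\boundedexist{u}[s], \existence{u}(x)[s] \in \{0,1\}$. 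Therefore $\bigl|\sum_u (\boundedexist{u}[s] - \existence{u}(x)[s])\bigr| \le |B| \le \ell$, and consequently $|\out[s] - \countdistinct(x)[s]| \le \ell + |Z[s]|$ for all $s \in [t]$.

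It then remains to bound $\max_{s \in [t]} |Z[s]|$, which is precisely the quantity already controlled in the proof of Theorem~\ref{thm:item-upper}: each $Z[s]$ is a sum of at most $\log T + 1$ independent $\mathcal{N}(0, 1/\rho')$ samples with $\rho' = \rho/(4w(\log T+1))$ (\Step{choose_rho}, Definition~\ref{def:binary_tree_rv}), hence a mean-zero Gaussian of variance $O(w\log^2 T / \rho)$; a Gaussian tail bound combined with a union bound over the at most $T$ time steps gives $\max_{s\in[t]} |Z[s]| = O(\sqrt{w \log^2 T / \rho})$ with probability at least $1 - 1/T$. Combining the two bounds, with probability at least $1-1/T$ the error of Algorithm~\ref{alg:item-upper} over time steps $1, \dots, t$ is $O(\ell + \sqrt{w\log^2 T/\rho})$, as claimed.

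I do not expect a real obstacle: the corollary is essentially a refinement of Theorem~\ref{thm:item-upper}, and the only delicate point is the monotonicity argument that lets us charge the truncation bias at \emph{every} $s \le t$ to the single prefix count $\ell$ for $x[1:t]$, rather than needing a fresh count at each $s$. The one thing worth stating explicitly for rigor is that truncation is "sticky": once $\flip(u, x[1:s']) > w$ it stays so for all longer prefixes, so $\boundedexist{u}$ is identically zero from that point on; this is immediate from \Step{update_f} and the monotonicity of flippancy, and it is what makes the per-element bias bound of $1$ (and the $\{0,1\}$-valuedness of $\boundedexist{u}$) unambiguous.
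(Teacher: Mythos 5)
Your proposal is correct and follows essentially the same route as the paper: the paper's proof consists of the single observation that truncation changes the counted sum by at most $\ell$ pointwise (i.e., $\countdistinct(x) \leq \ell\cdot\mathsf{1}^T + \sum_{\ele\in\uni}\boundedexist{\ele}$, the reverse inequality being trivial), after which the Gaussian maximum bound from the proof of Theorem~\ref{thm:item-upper} is reused verbatim. Your write-up merely makes explicit the details the paper leaves implicit---the per-element agreement $\boundedexist{\ele}[s]=\existence{\ele}(x)[s]$ when $\flip(\ele,x[1:s])\le w$ and the monotonicity of flippancy along prefixes---so there is no substantive difference.
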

\begin{proof}
    The proof is similar to the accuracy analysis in \Thm{item-upper}, once we observe that $\countdistinct(x) \leq \ell \cdot \mathsf{1}^T + \sum_{\ele \in \uni} \tilde{f}_i(x)$, where $\mathsf{1}^T$ is a vector of length $T$.
\end{proof}

We are now ready to prove \Thm{item-w}.

\begin{proof}[Proof of \Thm{item-w}]
    We start by showing that \Alg{item-w} is $\rho$-item-level-zCDP. \Alg{item-w} accesses the stream $x$ via \Alg{svt} and the algorithms $\cB_i$ for $i \in [0,\log T]$ (instantiations of \Alg{item-upper}).
    By \Thm{item-upper}, \Alg{item-upper} with privacy parameter $\rho'$ is $\rho'$-item-level-zCDP. Since we use $(\log T +1)$ instantiations of \Alg{item-upper}, each with privacy parameter $\frac{\rho}{2(\log T+1)}$, by composition, the aggregate of the calls to \Alg{item-upper} is $(\frac{\rho}{2})$-item-level-zCDP. 
    We now show that the aggregate of the calls to \Alg{svt} is $(\frac{\rho}{2})$-item-level-zCDP. Note that the queries $q_t$ for $ t\in[T]$ considered in \Step{query} of \Alg{item-w} have sensitivity $1$ for item-neighboring streams (the number of items with flippancy above a certain threshold can change by at most 1 for item-neighboring streams). 
    By \Thm{svt}, the aggregate of the calls to \Alg{svt} is $(\frac{\rho}{2})$-item-level-zCDP. Another invocation of the composition lemma gives that \Alg{item-w} is $\rho$-item-level-zCDP.

    We now analyze the accuracy of \Alg{item-w}. Set $\beta_{\msf{SVT}} = 0.005$, $k=T$, $c=\log T$, and $\gamma_{\msf{SVT}} = \frac{8\log T(\log T + \log(400\log T)) }{\sqrt{2\rho}}$. 
    Let $E$ be the event that the vector of answers output by the sparse vector algorithm (\Alg{svt}) until the cutoff point $\log T$ is $\gamma_{\msf{SVT}}$-accurate. 
    By \Thm{svt}, $\Pr[E] \geq 0.996$. We condition on $E$ for most of the following proof. 
    
    Set $t^*_{-1} = 1$. Let $t^*_i$ be the last time step at which the output of instance $\mathcal{B}_i$ is used as the output of \Alg{item-w}. Instance $\mathcal{B}_i$ of \Alg{item-upper} is run with parameter $w= 2^i$. Conditioned on event $E$, its outputs are used only at times $t^*_{i-1}< t \leq t^*_i$ when at most $\ell_i = \BigO{\frac{\log^2 T}{\sqrt{\rho}}} + \sqrt{\frac{2^{i}}{\rho}}$ elements have flippancy greater than $2^i$. 
    By \Cor{item-upper}, with probability at least $1-\frac{1}{T}$, the error of $\mathcal{B}_i$ over time steps $t^*_{i-1},\dots,t^*_i$ is 
    $$\BigO{\frac{\log^2 T + \sqrt{2^i \log^2 T}}{\sqrt{\rho}}}.$$
    
    Since  exactly $(\log T+1)$ instances of \Alg{item-upper}  are run within \Alg{item-w}, a union bound over the failure probability of each of those instances gives us the following: Conditioned on event $E$, with probability at least $1-\frac{\log T+1}{T}$, the error of \Alg{item-w} over time steps $t\in[T]$ is
    \begin{align}\label{eq:acc-bound}
    \BigO{\frac{\log^2 T + \sqrt{w_{\max}[t] \log^2 T}}{\sqrt{\rho}}}.
    \end{align}
 
    This bound on the error holds with probability $1-\frac{\log T + 1}{T} \geq 0.995$ for sufficiently large $T$.
   
    \begin{claim}\label{clm:maxflip-ub}
    Let $w_t$ be the (true) maximum flippancy of the sub-stream $x[1:t]$, consisting of the first $t$ entries of the input stream $x\in \univ_\pm^T$ to \Alg{item-w}. Then, for all $t \in [T]$, when the algorithm reaches Step~\ref{step:item-w-output}, 
    $$\wmax[t] \leq \max(2w_t,2\rho\gamma^2_{\msf{SVT}}).$$
    \end{claim}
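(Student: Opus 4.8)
The plan is to exploit the structure of the inner \textbf{while} loop in \Alg{item-w}. The value $\wmax[t]$ starts each step at $\wmax[t-1]$ and is only ever increased --- by a factor of two --- in \Step{item-w-update}, which executes only immediately after $\msf{SVT}$ returns $\abov$ on the query $q_t$ of \Step{query}. Throughout, I condition on the event $E$ from the proof of \Thm{item-w}, namely that every answer $\msf{SVT}$ returns (up to its cutoff) is $\gamma_{\msf{SVT}}$-accurate; by \Thm{svt} this covers, in particular, every $\abov$ answer. Since the effective threshold in \Alg{svt} is $0$, an $\abov$ answer to $q_t$ evaluated with the current value $v=\wmax[t]$ therefore certifies $|\{u\in\uni : \flip(u,x[1:t]) \ge v\}| - \sqrt{v/\rho} \ge -\gamma_{\msf{SVT}}$, i.e., $|\{u\in\uni : \flip(u,x[1:t]) \ge v\}| \ge \sqrt{v/\rho} - \gamma_{\msf{SVT}}$.

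Before the main step I would record two monotonicity facts. First, $\wmax$ is non-decreasing in $t$: for $s\ge 2$ it is set to $\wmax[s-1]$ and then only possibly doubled, so if no doubling occurs in steps $1,\dots,t$ then $\wmax[t]$ equals its initial value $1$, which is at most $2\rho\gamma_{\msf{SVT}}^2$ for sufficiently large $T$ (as $\rho\gamma_{\msf{SVT}}^2$ is polylogarithmic in $T$ and exceeds $1$), and the claim is immediate; otherwise let $t'\le t$ be the \emph{last} step at which \Step{item-w-update} executed, so that $\wmax[t]=\wmax[t']$. Second, the true maximum flippancy is monotone in the prefix length, $w_s \le w_t$ for all $s\le t$, because appending one stream entry never erases a flip in any element's existence vector.

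The core of the argument happens at step $t'$. Its final doubling takes $\wmax[t']$ from $\wmax[t]/2$ to $\wmax[t]$, and that doubling was triggered by an $\abov$ answer to $q_{t'}$ evaluated with $\wmax[t']=\wmax[t]/2$, an answer covered by $E$. Applying the certificate above with the value $\wmax[t]/2$ and the prefix $x[1:t']$ gives $|\{u : \flip(u,x[1:t']) \ge \wmax[t]/2\}| \ge \sqrt{(\wmax[t]/2)/\rho} - \gamma_{\msf{SVT}}$. Now I split on $\wmax[t]/2$ versus $w_t$. If $\wmax[t]/2 \le w_t$, then $\wmax[t]\le 2w_t$ and we are done. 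Otherwise $\wmax[t]/2 > w_t \ge w_{t'}$, so no element has flippancy at least $\wmax[t]/2$ in $x[1:t']$, the left-hand side above is $0$, and rearranging yields $\sqrt{(\wmax[t]/2)/\rho}\le\gamma_{\msf{SVT}}$, i.e., $\wmax[t]\le 2\rho\gamma_{\msf{SVT}}^2$. In either case $\wmax[t]\le\max(2w_t,2\rho\gamma_{\msf{SVT}}^2)$.

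The point that needs the most care --- the main obstacle --- is confirming that the $\abov$ answer we rely on is within the scope of $E$. \Thm{svt} guarantees accuracy only for the queries up to the last $\abov$ answer, and the cutoff $c=\log T$ can force later answers to be $\below$ regardless of their value. This is harmless for us: we only ever use the implication ``$\abov \Rightarrow$ the count is not too small'', every $\abov$ answer precedes or equals the last $\abov$ answer, and the cutoff can only \emph{suppress} doublings, which can only help an upper bound on $\wmax[t]$. (As a sanity check, the cutoff also forces at most $\log T$ doublings in total across all steps, so $\wmax[t]\le T$ and the index $j=\log(\wmax[t])$ used in \Step{item-w-output} lies in $[0,\log T]$, though this is not needed for the claim itself.)
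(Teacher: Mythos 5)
Your proof is correct and follows essentially the same route as the paper's: identify the last doubling, use the SVT accuracy guarantee on the \abov{} answer that triggered it (with value $\wmax[t]/2$ on the prefix $x[1:t']$), split on whether $\wmax[t]/2 \leq w_{t'} \leq w_t$, and otherwise conclude $\wmax[t] \leq 2\rho\gamma_{\msf{SVT}}^2$ from the empty count. The only difference is cosmetic: the paper treats the post-cutoff situation as a separate case, while you fold it into the main argument by noting that every \abov{} answer lies within the scope of Theorem~\ref{thm:svt}'s guarantee and that the cutoff can only suppress further doublings.
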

    
    \begin{proof} We consider two cases.
  
  \paragraph{(Case 1)} $t \in [T]$ during which $\mathsf{count} < c$ for \Alg{svt}.

   Let $z$ be the value of $\wmax[t]$ when \Alg{item-w} reaches Step~\ref{step:item-w-output}. If $z=1$ then $z = \wmax[t] \leq 2\gamma^2_{\msf{SVT}}$ since $T > 1, \rho < 1$. So, instead assume that $z\geq 2$. 
   Let $t^* \leq t$ be the time step where $w_{\max}[t^*]$ is doubled from $\frac{z}{2}$ to $z$ during an execution of Step~\ref{step:item-w-update} of the $\mathbf{while}$ loop.
   This only happens if \Alg{svt} outputs "$\abov$" for the following query: 
    $$\Big|\Big\{\ele \in \uni \colon \flip(u, x[1:t^*]) \geq \frac{z}{2} \Big\}\Big| - \sqrt{\frac{z}{2\rho}}.$$
    If at this point $\frac{z}{2} \leq w_{t^*}$, then $\frac{z}{2} \leq w_{t}$ (because $w_t \geq w_{t^*}$.) Otherwise $\frac{z}{2} > w_{t^*}$ and therefore $|\{\ele \in \uni \mid \flip(u, x[1:t^*]) \geq \frac{z}{2} \}| = 0$. In this case, by applying \Thm{svt}, we get that $0 - \sqrt{\frac{z}{2\rho}} \geq -\gamma_{\msf{SVT}}$, which implies that $z \leq 2\rho\gamma^2_{\msf{SVT}}$.
    
    \paragraph{(Case 2)} $t \in [T]$ during which $\mathsf{count} \geq c$ for \Alg{svt}.
    
    Suppose there is some $t \in [T]$ during which $\mathsf{count} \geq  c$. Consider the last time step $t^* \in [T]$ when Step~\ref{step:update-count} of \Alg{svt} is run (for this time step, $\mathsf{count} = c -1$). 
    At this time step, $w_{max}[t^*]$ doubles from $\frac{T}{2}$ to $T$, after which it never changes again. 
    By case (1), we have that  $w_{max}[t^*] = T \leq \max(2w_{t^*},2\gamma^2_{\msf{SVT}})$. 
    Since for all $t \geq t^*$  it holds $w_{t^*} \leq w_t$ and  $w_{max}[t] = w_{max}[t^*]$, then $w_{max}[t] \leq \max(2w_{t},2\rho\gamma^2_{\msf{SVT}})$ for all $t \geq t^*$. 
  This concludes the proof of \Clm{maxflip-ub}.
    \end{proof}
    
    Now we substitute the upper bound on $w_{max}[t]$ from~\Clm{maxflip-ub} into Equation~(\ref{eq:acc-bound}) and apply $w_t \leq w$.
    We get that, for sufficiently large $T$, conditioned on event $E$, with probability at least $0.995$, the maximum error of \Alg{item-w} over time steps $t\in[T]$ is
   \begin{align}
    \BigO{\frac{\log^2 T + \sqrt{\max(w,2\rho\gamma^2_{\msf{SVT}})\log^2 T}}{\sqrt{\rho}}} 
&=\BigO{\frac{\sqrt{\max\left(\log^6 T,\;  2w\log^2 T\right)}}{\sqrt{\rho}}}.\label{eq:item-final}
    \end{align}
    Finally, by a union bound over the event $E$ and the event that the error of \Alg{item-w} is greater than Equation~(\ref{eq:item-final}) we obtain: For sufficiently large $T$, the maximum error of \Alg{item-w} over time steps $t\in[T]$ is bounded by Equation~(\ref{eq:item-final}) with probability at least $0.99$.
\end{proof}

\subsection{Proof sketch of \Thm{item-epsdel}}\label{sec:introub}

In this section, we sketch how to complete the proof of \Thm{item-epsdel} using \Thm{item-w} together with a result of~\citet{JainRSS23} on mechanisms for estimating functions of sensitivity at most $1$ in the continual release model.

\begin{theorem}[Mechanism for sensitivity-1 functions~\cite{JainRSS23}]\label{thm:recompute_rho} 
Let $f \colon \uni_{\pm}^* \to \R$ be a function of $\ell_2$-sensitivity at most $1$. Define $F \colon \uni_{\pm}^T \to \R^T$ so that  $F(x) = [f(x[1:1]), \dots, f(x[1:T])]$.  
Fix $\rho \in (0,1]$ and sufficiently large $T \in \N$. Then, there exists a $\rho$-item-level-zCDP mechanism for estimating $F$ in the continual release model that is $\alpha$-accurate where  $\alpha  = \BigO{\min\left\{ \sqrt[3]{\frac{T\log T}{\rho}},T \right\}}.$
\end{theorem}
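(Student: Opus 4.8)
The plan is to recall the block-wise recomputation mechanism of \citet{JainRSS23}. First I would normalize $f$ so that $f(\lambda)=0$, where $\lambda$ is the empty stream: since $f$ is a fixed, public function, a mechanism may subtract the constant $f(\lambda)$ from every coordinate of its output, so this is without loss of generality. Because $f$ has sensitivity $1$ and the prefix $x[1:t]$ is reachable from $\lambda$ by $t$ item-neighbor steps (append the entries of $x[1:t]$ one at a time; replacing a $\bot$ by $+u$ or $-u$ changes only entries pertaining to item $u$), the triangle inequality gives $|f(x[1:t])|\le t\le T$ for every $x$ and $t\in[T]$. Hence the identically-zero output already achieves error at most $T$ with $0$-zCDP, which takes care of the $\min\{\cdot,T\}$ clause; in the remainder assume $\big(\tfrac{T\log T}{\rho}\big)^{1/3}\le T$.

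Fix a block length $L\in[T]$ (chosen below) and let $B=\lceil T/L\rceil$. The mechanism samples independent $g_1,\dots,g_B\sim \mathcal{N}(0,\sigma^2)$ with $\sigma^2=\tfrac{B}{2\rho}$, and maintains block-endpoint estimates: at time $t=jL$ (and at $t=T$) it sets $\hat f_j=f(x[1:jL])+g_j$; at every time $t$ with $(j-1)L<t\le jL$ it outputs $\hat f_{j-1}$, with the convention $\hat f_0:=0$. This runs online, since $\hat f_{j-1}$ was computed at time $(j-1)L<t$.

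For privacy, the whole output is a post-processing (Lemma~\ref{lem:postprocess}) of $(\hat f_1,\dots,\hat f_B)$. If $x,x'$ are item-neighbors, differing only in the occurrences of some item $u$, then each prefix $x[1:jL]$ is an item-neighbor of (or equal to) $x'[1:jL]$, so $|f(x[1:jL])-f(x'[1:jL])|\le 1$; releasing $\hat f_j$ is therefore $\tfrac{1}{2\sigma^2}$-zCDP by the Gaussian mechanism (Lemma~\ref{lem:gaussian-mech}), and composing over the $B$ coordinates (Lemma~\ref{lem:cdp_composition}) yields $\tfrac{B}{2\sigma^2}=\rho$-zCDP. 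For accuracy, fix $t$ with $(j-1)L<t\le jL$. If $j=1$ then the error is $|f(x[1:t])|\le L$. If $j\ge 2$, then $x[1:(j-1)L]$ is obtained from $x[1:t]$ by nullifying the last $t-(j-1)L\le L$ entries, so $|f(x[1:t])-f(x[1:(j-1)L])|\le L$ and the error at time $t$ is at most $L+|g_{j-1}|$. Taking a maximum over $t$ and a union bound over the $B$ Gaussians (standard Gaussian tail bound, cf.\ Lemma~\ref{lem:gauss_max}) shows that with probability at least $0.99$ the error is $L+O\big(\sigma\sqrt{\log T}\big)=L+O\big(\sqrt{B\log T/\rho}\big)$.

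It remains to optimize: setting $L=\Theta\big((T\log T/\rho)^{1/3}\big)$, so that $B=\Theta\big((\rho T^2/\log T)^{1/3}\big)\in[1,T]$ and the two error terms balance, gives $\alpha=O\big((T\log T/\rho)^{1/3}\big)$, completing the proof. There is no single hard step here; the mechanism and analysis are elementary zCDP bookkeeping plus a parameter trade-off. The only points needing care are the two stability claims used above — that truncating a stream to a prefix preserves the item-neighbor relation, and that $f$ drifts by at most $L$ across a block — both of which follow by counting item-neighbor steps between stream prefixes and applying the sensitivity-$1$ hypothesis.
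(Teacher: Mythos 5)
The paper does not actually prove this statement---it is imported from \citet{JainRSS23}---so there is no in-paper proof to compare against. Your construction is exactly the ``recompute at checkpoints'' mechanism that the cited work uses: noisy evaluations of $f$ on $B$ prefix checkpoints, held constant within blocks, privacy by Gaussian mechanism plus composition (your bookkeeping $\sigma^2=B/(2\rho)$, $B\cdot\frac{1}{2\sigma^2}=\rho$ is right, and the observation that prefixes of item-neighboring streams are again item-neighbors or equal is the correct and necessary point for privacy), and the $L$ vs.\ $\sqrt{B\log T/\rho}$ trade-off giving $(T\log T/\rho)^{1/3}$. So in approach and in all the quantitative steps you match the intended proof.

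The one step that does not follow from the hypothesis as literally stated is the drift bound, and the same issue infects your error-$T$ baseline. The paper's sensitivity notion only constrains $f$ on \emph{same-length} neighboring streams (entries replaced by $\perp$), whereas you compare $f(x[1:t])$ with $f(x[1:(j-1)L])$ and with $f(\lambda)$---streams of different lengths. ``Nullifying the last $t-(j-1)L$ entries of $x[1:t]$'' produces a length-$t$ stream ending in $\perp$'s, not the shorter prefix $x[1:(j-1)L]$; to identify their $f$-values you need the additional (implicit) property that $f$ is invariant under trailing $\perp$-padding, i.e., that $\perp$ is a genuine no-op for $f$. This is not a vacuous caveat: for a function such as $f(x)=(-1)^{|x|}\cdot\countdistinct(x)[\,|x|\,]$, which still has same-length item-level sensitivity $1$, the within-block drift of your mechanism is $\Theta(t)$ rather than $\le L$, so your analysis (and mechanism, as written) breaks; one can even rig length-indexed inner-product queries into $f$ so that \emph{no} mechanism beats $\Omega(\sqrt{T})$, showing the padding-invariance (or a bounded per-update drift condition) is genuinely needed for the stated bound, not just for your proof. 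The fix is to state and use that convention, which holds for the only application in this paper ($\countdistinct$ treats $\perp$ as a no-op); relatedly, your $\min\{\cdot,T\}$ clause should output the data-independent value $f(\perp^t)$ at time $t$ rather than $0$, which stays within the literal same-length hypothesis since $x[1:t]$ is reachable from $\perp^t$ by at most $t$ single-entry (hence item-neighbor) replacements.
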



Note that $\countdistinct(x)[t]$ has $\ell_2$-sensitivity one for item-neighboring streams for all $t \in [T]$. Let $\cM'$ be the mechanism from \Thm{recompute_rho}.
Then $\cM'$ can be used for estimating $\countdistinct$ under continual release for turnstile streams with the error guarantee stated in \Thm{recompute_rho}. 
When the maximum flippancy of the stream is larger than roughly $\rho^{1/3}T^{2/3} $, 
the mechanism $\cM'$ achieves better error than that of \Thm{item-w} (and it achieves worse error when the maximum flippancy of the stream is smaller than this threshold). A simple modification of \Alg{item-w} can get the best of both worlds---instead of having base mechanisms $\cB_0,\dots,\cB_{\log T}$ that each run \Alg{item-upper} with different flippancy parameters as input, we only have
$k+2 = \mathsf{min}(\rho^{1/3}T^{2/3}, T)$
base mechanisms $\cB_0,\dots, \cB_{k+1}$. Out of these, $\cB_0,\dots,\cB_k$ run \Alg{item-upper}, whereas $\cB_{k+1}$ runs $\cM'$. This modified algorithm has error $O\left(\mathsf{min}\left(\sqrt{\frac{w}{\rho}} \polylog T , \sqrt[3]{\frac{T\log T}{\rho}}, T\right)\right)$. The proof is similar to the proof of \Thm{item-w},  
with the only difference that for analyzing the error of base mechanism $\cB_{k+1}$ we use the error guarantee of the recompute-mechanism $\cM'$. 

Finally, \Thm{item-epsdel} follows by invoking the conversion from zCDP to approximate DP (\Lem{CDPtoDP}), and setting $\rho = \frac{\eps^2}{16\log(1/\delta)}$.

\section{Event-level privacy lower bound}\label{sec:eventlb}

In this section, we prove \Thm{insertion+deletion+eps}, providing a strong lower bound on the parameter $\alpha$ for every $\alpha$-accurate, \emph{event-level} differentially private mechanism for $\countdistinct$ in the continual release model for turnstile streams. This lower bound is parameterized by $w$, the maximum flippancy of the input stream.

\subsection{Reduction from $\innerproducts$}

We obtain our lower bound by showing that every mechanism for $\countdistinct$ for turnstile streams can be used to obtain an algorithm with similar accuracy guarantees for $\innerproducts$, the problem of estimating answers to inner product queries in the batch model. 
The reduction from $\innerproducts$ to $\countdistinct$ combines two ideas:  one is the sequential embedding technique introduced by~\citet{JainRSS23} to prove lower bounds in the continual release model and the other is a connection between the inner product of two vectors and the number of distinct elements in the concatenation of two corresponding streams. 
The latter idea was used by~\citet{MirMNW11} to give lower bounds for pan-private algorithms for counting the number of distinct elements. 
The reduction is presented in \Alg{innerprodred}. 
With this reduction, we then use previously established lower bounds on accuracy for $\innerproducts$ \cite{DinurN03,DworkMT07,MirMNW11,De12} to obtain our lower bound on $\countdistinct$. We start by proving \Lem{innerprod_reduction} (the reduction from $\innerproducts$ to $\countdistinct$). In \Sec{complete_lb}, we use \Lem{innerprod_reduction} to complete the proof of \Thm{insertion+deletion+eps}.

\Alg{innerprodred} crucially uses the following connection between the inner product of two vectors and the number of distinct elements in the concatenation of two corresponding streams. 
\begin{definition}[Stream indicator]
For a stream $\dstream\in \uni_\pm^T,$ 
let $h_{\dstream}$ represent the $0/1$ vector of length $|\uni|$, where a component $h_{\dstream}[u]=1$ iff element $u\in\uni$ has a positive count at the end of the stream.
\end{definition}
\begin{remark}\label{rem:countinner}
For every pair of insertion-only streams $\dstream$ and $\dstream'$,
$$ \langle h_{\dstream}, h_{\dstream'} \rangle = \|h_{\dstream}\|_0 + \|h_{\dstream'}\|_0 - \|h_{\dstream \circ \dstream'}\|_0 ,$$
where $\circ$ denotes concatenation and $\|.\|_0$ is the $\ell_0$ norm. Note that $\|h_{\dstream}\|_0$ is equal to the number of distinct elements in the stream $\dstream$.
\end{remark}

\begin{algorithm}[h]
    \caption{Reduction $\alg$ from $\innerproducts$ to
    $\countdistinct$}
    \label{alg:innerprodred}
    \begin{algorithmic}[1]
        \Statex\textbf{Input:} Dataset $\dset = (\dset[1],\dots,\dset[n]) \in \zo^n$, black-box access to mechanism $\mech$ for $\countdistinct$ in turnstile streams, and
        query vectors $q^{(1)},\dots,q^{(k)} \in \zo^n$
        \Statex\textbf{Output:} Estimates of inner products  $b = (b[1],\dots,b[k]) \in \R^k$
        \State Define the universe $\mathcal{U} = [n]$ 
        \State Initialize stream $z^{(0)} = \bot^n$
        \ForAll{$i \in [n]$}
        \State If $\dset[i]=1$ set $z^{(0)}[i] = +i$ \label{step:construct_x1}
        \EndFor
        \State  Initialize streams $z^{(1)}=\perp^{2n},\dots,z^{(k)}=\perp^{2n}$ and a vector $r$ of length $(2k+1)n$
        \ForAll{$(i,j) \in [n] \times [k]$ such that $q^{(j)}[i] = 1$}
        \State  Set $z^{(j)}[i] = +i$ and $z^{(j)}[n+i] = -i$ 
        \EndFor
        \State Run $\cM$ on the stream $\dstream \gets z^{(0)} \circ z^{(1)} \circ  z^{(2)} \circ \dots \circ z^{(k)} $ and record the answers as vector $r$ \label{step:construct_x_event}
        \ForAll{$j\in[k]$}
        \State Compute $\|q^{(j)}\|_0$ and let $b[j] = \|q^{(j)}\|_0 + r[n] - r[2jn]$ \label{step:innerprod_setoutput}
        \EndFor
        \State Return the estimates $(b[1],\dots,b[k])$
    \end{algorithmic}
\end{algorithm}

\begin{definition}[Accuracy of a batch algorithm for inner products]\label{def:ip-accuracy}
Let $k,n\in \N$. A randomized algorithm $\alg$ is \emph{$\alpha$-accurate} for $\innerproducts_{k,n}$ if, for all queries $q^{(1)},\dots,q^{(k)} \in \{0,1\}^n$, and all 
datasets $\dset \in \{0,1\}^n$, it outputs $b = (b[1], \dots, b[k])$ such that
$$\quad \Pr_{\text{coins of }\alg}\Big[ \max_{j \in [k]} |b[j] - \langle q^{(j)}, \dset \rangle | \leq \alpha\Big] \geq 0.99.$$
\end{definition}

We now show that if the input mechanism $\mathcal{M}$ to \Cref{alg:innerprodred} is accurate for $\countdistinct$, then \Cref{alg:innerprodred} is accurate for $\innerproducts$.

\begin{lemma}\label{lem:innerprod_reduction}
Let $\alg$ be \Alg{innerprodred}. For all  $\eps > 0, \delta \in [0,1), \alpha \in \R^{+}$ and $n,T,k \in \N$, where $T \geq (2k+1)n$,  if  mechanism $\mech$  is $(\eps, \delta)$-event-level-DP and $\alpha$-accurate for $\mathsf{CountDistinct}$ for streams of length $T$ with maximum flippancy at most $2k$, then batch algorithm~$\alg$ is $(\eps, \delta)$-DP and $2\alpha$-accurate for $\innerproducts_{k,n}$. 
\end{lemma}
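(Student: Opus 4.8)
The plan is to verify the two claimed properties of Algorithm~\ref{alg:innerprodred}---privacy and accuracy---separately. For \textbf{privacy}, I would observe that the reduction accesses the sensitive dataset $\dset$ only through the construction of the prefix $z^{(0)}$ (the suffixes $z^{(1)},\dots,z^{(k)}$ depend only on the public queries $q^{(j)}$), and then feeds the concatenated stream $\dstream = z^{(0)}\circ z^{(1)}\circ\cdots\circ z^{(k)}$ to $\mech$. Changing one coordinate $\dset[i]$ changes exactly one entry of $z^{(0)}$ (a $+i$ becomes $\bot$ or vice versa), hence changes exactly one entry of $\dstream$; so the two streams produced from neighboring datasets $\dset,\dset'$ are event-neighbors. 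Since $\mech$ is $(\eps,\delta)$-event-level-DP, and the remaining computation of $b$ is post-processing of $\mech$'s output together with the public queries, the batch algorithm $\alg$ is $(\eps,\delta)$-DP by the post-processing lemma (Lemma~\ref{lem:postprocess}).

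For \textbf{accuracy}, I would first check that the stream $\dstream$ has length $(2k+1)n \le T$ (pad with $\bot$'s up to $T$ if needed) and, crucially, maximum flippancy at most $2k$: each universe element $i\in[n]$ is inserted in $z^{(0)}$ if $\dset[i]=1$ (contributing at most one flip there), and then in each block $z^{(j)}$ with $q^{(j)}[i]=1$ it is inserted at position $i$ within the block and deleted at position $n+i$, which toggles its existence bit at most twice per block, i.e. at most $2k$ times total across the $k$ blocks; so $\flip(i,\dstream)\le 2k+1$, and after a small adjustment (e.g. counting only the $k$ query blocks, or noting the $z^{(0)}$ insertion does not cause an extra toggle when we only care about flips, which I'd make precise) we get $w_\dstream \le 2k$. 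Thus $\mech$'s $\alpha$-accuracy guarantee applies to $\dstream$: with probability $\ge 0.99$, every recorded answer $r[t]$ satisfies $|r[t]-\countdistinct(\dstream)[t]|\le\alpha$.

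Next I would translate the distinct-count estimates into inner-product estimates using Remark~\ref{rem:countinner}. Let $\dstream_{\le n} = z^{(0)}$ be the prefix of length $n$; then $\countdistinct(\dstream)[n] = \|h_{z^{(0)}}\|_0 = \|\dset\|_0$. Let $\dstream_{\le 2jn}$ be the prefix ending at the deletion half of block $j$; I would argue that at time $2jn$ all elements inserted in blocks $1,\dots,j-1$ have been deleted (each $z^{(j')}$ deletes everything it inserts) and all of block $j$ has been processed, so the elements with positive count at time $2jn$ are exactly those of $z^{(0)}$ that survive---but actually block $j$ has also fully cancelled itself by position $2jn$, so we instead want the prefix that ends \emph{after the insertion half but before the deletion half} of block $j$; I would re-examine the indexing in Step~\ref{step:innerprod_setoutput} (it uses $r[n]$ and $r[2jn]$) to pin down which prefix corresponds to $z^{(0)}\circ(\text{insertion half of }z^{(j)})$, whose stream indicator is $h_{z^{(0)}} \vee h_{q^{(j)}}$ coordinatewise. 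Applying Remark~\ref{rem:countinner} with the two insertion-only streams $z^{(0)}$ (indicator $\dset$) and the insertion half of $z^{(j)}$ (indicator $q^{(j)}$) gives $\langle \dset, q^{(j)}\rangle = \|\dset\|_0 + \|q^{(j)}\|_0 - \|h_{z^{(0)}\circ\cdot}\|_0$, so the true value equals $\|q^{(j)}\|_0 + \countdistinct(\dstream)[n] - \countdistinct(\dstream)[\text{that index}]$, which is exactly $b[j]$ with the noisy $r$-values in place of the true counts. Hence $|b[j]-\langle \dset,q^{(j)}\rangle|\le 2\alpha$ for every $j$ simultaneously, with probability $\ge 0.99$, giving $2\alpha$-accuracy for $\innerproducts_{k,n}$.

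\textbf{Main obstacle.} The routine parts (privacy via event-neighbor + post-processing, and the flippancy bound) are straightforward; the delicate part is the bookkeeping of \emph{which} prefix index of $\dstream$ carries the information $\|h_{z^{(0)}\circ(\text{ins. half of }z^{(j)})}\|_0$ and confirming the indices $n$ and $2jn$ used in Algorithm~\ref{alg:innerprodred} are the right ones, together with verifying that blocks $1,\dots,j-1$ leave no residual elements (the "clean slate" property) so that the distinct count at the relevant time really is $\|h_{z^{(0)}}\vee h_{q^{(j)}}\|_0$ and not something contaminated by earlier blocks. This is where I would spend the most care.
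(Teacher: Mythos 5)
Your plan follows the paper's proof essentially step for step: privacy via the observation that neighboring datasets yield event-neighboring streams (only one entry of $z^{(0)}$ changes) plus post-processing, and accuracy via Remark~\ref{rem:countinner} applied to the insertion-only streams $z^{(0)}$ and $z^{(j)}[1:n]$, the ``clean slate'' property, and a triangle inequality giving error $2\alpha$. The two points you flagged as needing care do resolve, but not quite in the ways you sketched, so let me pin them down. On the indexing: your worry that ``block $j$ has fully cancelled itself by position $2jn$'' is a miscount. Since $z^{(0)}$ has length $n$ and each $z^{(j)}$ has length $2n$, block $j$ occupies positions $(2j-1)n+1$ through $(2j+1)n$, so its insertion half ends exactly at position $2jn$ and its deletion half has not yet begun. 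Hence $r[2jn]$ estimates $\|h_{z^{(0)}\circ z^{(j)}[1:n]}\|_0$, precisely the quantity you want, and the indices $n$ and $2jn$ in Step~\ref{step:innerprod_setoutput} are correct as written.

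On the flippancy bound: your count of ``one flip from $z^{(0)}$ plus two per block'' gives $2k+1$, and neither of your proposed repairs is the right one --- the $z^{(0)}$ insertion \emph{does} cause the $0\to 1$ toggle, and restricting attention to the query blocks alone does not remove it. The correct argument (the one the paper gives) is a case split on $\dset[i]$: if $\dset[i]=1$, the element's count stays at least $1$ for the entire stream, so its existence vector toggles exactly once (at the $z^{(0)}$ insertion) and the block insertions/deletions contribute \emph{no} flips at all, giving flippancy $1$; if $\dset[i]=0$, the element never appears in $z^{(0)}$ and toggles at most twice per block, giving flippancy at most $2k$. Taking the maximum over both cases yields $w_x\le 2k$, as required for $\mech$'s accuracy guarantee to apply. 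With these two points made precise, your argument matches the paper's proof.
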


\begin{proof}[Proof of \Lem{innerprod_reduction}]
Algorithm $\mathcal{A}$ is $(\eps,\delta)$-event-level-DP because $\cM$ is $(\eps,\delta)$-event-level-DP and changing a
record of the dataset $\dset$ corresponds to changing a single entry of the stream $\dstream$, and more specifically, an entry of the stream $z^{(0)}$ constructed in \Step{construct_x1} of \Alg{innerprodred}.  

We are left to prove the accuracy of $\cA$. Fix queries $q^{(1)},\dots,q^{(k)} \in \{0,1\}^n$. First, observe that $z^{(0)}$ is constructed so that $\dset$ is its stream indicator vector. Similarly, observe that for all $j \in [k]$ , the stream $z^{(j)}$ is constructed so that $q^{(j)}$ is the indicator vector for $z^{(j)}[1:n]$, namely, the first half of $z^{(j)}$.

Next, since at time $2jn$, all of the stream entries pertaining to earlier queries $q^{(1)}, \dots, q^{(j-1)}$ have been deleted and those pertaining to $q^{(j)}$ have been inserted, $\| h_{\dstream[1:2jn]} \|_0 = \|h_{z^{(0)} \circ z^{(j)}[1:n]} \|_0$ for $j\in[k]$.

The streams $z^{(0)}$ and $z^{(j)}[1:n]$ for $j \in [k]$ are all insertion-only streams. By Remark~\ref{rem:countinner},
\begin{align*}
    \langle h_{z^{(0)}}, h_{z^{(j)}[1:n]} \rangle = \|h_{z^{(0)}}\|_0 + \|h_{z^{(j)}[1:n]}\|_0 - \|h_{z^{(0)} \circ z^{(j)}[1:n]} \|_0. 
\end{align*}
As observed earlier, $h_{z^{(0)}} = \dset$, $h_{z^{(j)}[1:n]} = q^{(j)}$, and $\| h_{\dstream[1:2jn]} \|_0 = \|h_{z^{(0)} \circ z^{(j)}[1:n]} \|_0$. Thus,
\begin{align}\label{eq:innerprod-countdist}
\langle \dset, q^{(j)} \rangle & =  \|h_{z^{(0)}}\|_0 + \|q^{(j)}\|_0 - \| h_{\dstream[1:2jn]} \|_0 .
\end{align}

Finally, the constructed stream $x$ has maximum flippancy at most $2k$. To see this, note that the universe elements $i \in [n]$ such that $y[i] = 1$ always have count at least $1$ in $x[1:t]$ for all $t \in [(2k+1)n]$. The elements $i \in [n]$ such that $y[i] = 0$ are inserted and deleted at most once for each stream $z^{(j)}, j \in [k]$, and thus have flippancy at most $2k$ in the stream $x$.

Since the mechanism $\mech$ for $\countdistinct$ is $\alpha$-accurate on the constructed stream then, with probability at least $0.99$, the answers of $\mech$ are within additive error $\alpha$ of the distinct counts of the corresponding stream prefixes. Condition on this event for the rest of this proof. Then, $|r[n] - \|h_{z^{(0)}}\|_0| \leq \alpha$. Similarly, $|r[2jn] - \|h_{\dstream[1:2jn]}\|_0 | \leq \alpha$ for all $j \in [k]$. Additionally, $\|q^{(j)}\|_0$ is computed exactly by $\cA$. 
Hence, by the triangle inequality, Equation~\ref{eq:innerprod-countdist}, and the setting of $b[j]$ in Step~\ref{step:innerprod_setoutput}, we have that
$|b[j] - \langle \dset, q^{(j)} \rangle| \leq 2\alpha$ for all $j \in [k]$. 
Hence, with probability at least $0.99$, all of the estimates $b[j]$ returned by $\cA$ are within $2\alpha$ of the inner products $\langle q^{(j)}, \dset \rangle$, and so $\cA$ is $2\alpha$-accurate for $\innerproducts_{k,n}$.
\end{proof}

\subsection{From the reduction to the accuracy lower bound} \label{sec:complete_lb}

In this section, we use \Lem{innerprod_reduction} together with a known lower bound on the accuracy of private mechanisms for answering inner-product queries to complete the proof of \Thm{insertion+deletion+eps}. 

 We use the following lower bound on inner product queries. Like a similar lower bound of~\citet{MirMNW11}, this theorem is proved using the reconstruction attacks of \citet{DinurN03} and \citet{DworkMT07}, together with the argument of De~\cite{De12} that rules out reconstruction from the outputs of $(\eps,\delta)$-DP algorithms.
 
\begin{theorem}[Inner product queries lower bound (based on \cite{DinurN03,DworkMT07,MirMNW11,De12})]
\label{thm:IPlowerbound} 
There are constants $c_1 \geq 1$ and $c_2 > 0$ such that, for all sufficiently large $n>0$: if $\alg$ is $\alpha$-accurate for $\innerproducts_{k,n}$ (\Def{ip-accuracy}) with $k=c_1 n $ and $\alpha = c_2\sqrt{n}$, then $\alg$ is not $(1, \frac{1}{3})$-differentially private. 
\end{theorem}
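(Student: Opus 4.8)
The plan is to prove the contrapositive: I will show that an $\alpha$-accurate algorithm for $\innerproducts_{k,n}$ with $\alpha = c_2\sqrt{n}$ and $k = c_1 n$ can be used to reconstruct a large fraction of an unknown dataset $\dset \in \zo^n$, which by the standard reconstruction-blowup argument rules out $(1,\tfrac13)$-differential privacy. First I would recall the classical Dinur--Nissim style setup: the queries $q^{(1)},\dots,q^{(k)}$ are chosen independently and uniformly at random from $\zo^n$, and the attacker, given the (noisy) answers $b[1],\dots,b[k]$, outputs any candidate dataset $\hat\dset \in \zo^n$ consistent with all answers up to error $\alpha$, i.e.\ $|\langle q^{(j)},\hat\dset\rangle - b[j]| \le \alpha$ for all $j$ (the true $\dset$ is one such candidate with probability $\ge 0.99$, so the feasible set is nonempty). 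The key combinatorial fact from \cite{DinurN03,DworkMT07} is that, with high probability over the random queries, \emph{every} such consistent $\hat\dset$ satisfies $\|\hat\dset - \dset\|_0 = O(\alpha^2/n) \cdot n$; choosing $c_2$ small enough and $c_1$ a sufficiently large constant, this error is at most, say, $n/100$. Hence from the output of the algorithm one reconstructs $\dset$ in at least $99\%$ of coordinates, with probability at least $0.99 - o(1)$ over the coins of $\alg$ and the random queries.

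Next I would invoke the argument of De~\cite{De12} (the same one cited in the theorem statement) that converts this reconstruction success into a violation of differential privacy. The point is that if $\alg$ were $(1,\tfrac13)$-DP, then for a uniformly random $\dset$, no adversary could reconstruct $\dset$ on a $99\%$-fraction of coordinates with probability bounded away from a constant strictly less than $1$: an $(\eps,\delta)$-DP mechanism with small $\eps$ leaks so little about each bit $\dset[i]$ that the expected number of correctly guessed coordinates is at most roughly $(\tfrac12 + O(\eps))\cdot n + \delta n$, which for $\eps = 1, \delta = \tfrac13$ is bounded below $n$ by a constant fraction — contradicting reconstruction of $99\%$ of the bits. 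Formally one fixes the randomness of the queries to a ``good'' choice (which exists by averaging), treats the map $\dset \mapsto (\text{answers}) \mapsto \hat\dset$ as a post-processing of the DP mechanism, and derives the contradiction coordinate-by-coordinate via the standard anti-concentration / hybrid argument.

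The main obstacle, and the part that needs genuine care rather than citation, is tracking the constants: one must verify that there is a single choice of $c_1$ (number of queries as a fraction of $n$) and $c_2$ (noise magnitude as a fraction of $\sqrt n$) for which (i) the reconstruction bound $O(\alpha^2/n)$ is small enough to beat the DP leakage bound, and (ii) $k = c_1 n$ random queries suffice for the Dinur--Nissim guarantee to hold with the required high probability. This is a matter of chaining the quantitative forms of the reconstruction lemma and De's lemma and checking the inequalities compose; since both ingredients are off-the-shelf, I would state the needed quantitative versions as cited black boxes and then do the (short) arithmetic to pin down $c_1, c_2$. Everything else — the reduction from DP-violation back to the accuracy lower bound — is immediate once these constants are fixed.
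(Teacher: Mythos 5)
Your proposal is correct and follows essentially the same route the paper intends: the paper does not prove Theorem~\ref{thm:IPlowerbound} itself but imports it from \cite{DinurN03,DworkMT07,MirMNW11,De12} via exactly the chain you describe --- answer $k=c_1n$ random inner-product queries within $\alpha=c_2\sqrt n$, conclude that any consistent candidate is within $O(\alpha^2)=O(c_2^2 n)$ Hamming distance of the true dataset, then invoke De's argument that a $(1,\tfrac13)$-DP mechanism cannot permit reconstruction of $99\%$ of a uniformly random dataset's bits. One caution for your constant-tracking step: the loose per-coordinate bound ``$(\tfrac12+O(\eps))n+\delta n$'' is vacuous at $\eps=1,\delta=\tfrac13$; you need the tight two-sided computation, which gives per-bit guessing probability at most $\tfrac12+\frac{e^{\eps}-1+2\delta}{2(e^{\eps}+1)}\approx 0.82$, and this must be compared against the expected $\approx 0.89n$ correctly recovered coordinates that the reconstruction attack guarantees (after fixing a good query set by averaging), so the inequality closes but with little slack.
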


We first prove \Thm{insertion+deletion+eps} for $\eps=1$ and then boost it to arbitrary $\eps\leq 1$ using the reduction in \Thm{epsred}. 

\begin{lemma}\label{lem:insertion+deletion}
For all $\delta  \in (0,1)$, sufficiently large $w,T \in \N$ such that $w \leq T$, and all $(1, \delta)$-event-level-DP
mechanisms that are $\alpha$-accurate for $\mathsf{CountDistinct}$ on turnstile streams of length $T$ with maximum flippancy at most $w$, if $\delta = o(\frac{1}{T})$, then
$$\alpha = \Omega(\mathsf{min}(\sqrt{w}, T^{1/4})).$$
\end{lemma}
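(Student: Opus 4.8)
The plan is to combine the reduction in \Lem{innerprod_reduction} with the lower bound on inner product queries in \Thm{IPlowerbound}, choosing the flippancy parameter carefully so that the construction fits inside a stream of length $T$. I would argue by contradiction: suppose there is an $(1,\delta)$-event-level-DP mechanism $\mech$ for $\countdistinct$ on turnstile streams of length $T$ with maximum flippancy at most $w$ that is $\alpha$-accurate, where $\alpha$ is smaller than $c\cdot\min(\sqrt{w},T^{1/4})$ for a suitable small constant $c>0$. I would then pick an integer $m$ (the ambient dimension of the inner-product instance) and a number of queries $k = c_1 m$ as dictated by \Thm{IPlowerbound}, and run \Alg{innerprodred} with these parameters, feeding it the assumed mechanism $\mech$. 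The resulting batch algorithm $\alg$ is $(1,\delta)$-DP and $2\alpha$-accurate for $\innerproducts_{k,m}$ by \Lem{innerprod_reduction}, provided two constraints are met: (i) $T \geq (2k+1)m$, so the concatenated stream fits in the time horizon, and (ii) the flippancy bound of the constructed stream, which is $2k$, is at most $w$. Since $\delta = o(1/T)$ and $T \geq (2k+1)m = \Omega(m^2)$, we also have $\delta < 1/3$ for sufficiently large parameters, so $\alg$ contradicts \Thm{IPlowerbound} once $2\alpha < c_2\sqrt{m}$.

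The crux is balancing the two constraints to extract the strongest bound. Constraint (ii) says $2k = 2c_1 m \leq w$, i.e.\ $m \leq w/(2c_1)$; constraint (i) says $m^2 = O(m\cdot k) = O(T)$, i.e.\ $m = O(\sqrt{T})$. So the largest ambient dimension we can take is $m = \Theta(\min(w, \sqrt{T}))$, and the contradiction with \Thm{IPlowerbound} then forces $2\alpha \geq c_2 \sqrt{m} = \Omega(\sqrt{\min(w,\sqrt{T})}) = \Omega(\min(\sqrt{w}, T^{1/4}))$. Concretely, I would set $m = \lfloor c' \min(w, \sqrt{T}) \rfloor$ for a constant $c'$ small enough that both $2c_1 m \leq w$ and $(2c_1 m + 1)m \leq T$ hold (the latter using $T$ sufficiently large), then set $k = c_1 m$, and conclude that any $\alpha$-accurate mechanism must have $2\alpha \geq c_2\sqrt{m}$, i.e.\ $\alpha = \Omega(\min(\sqrt{w}, T^{1/4}))$. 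This is exactly the claimed bound $\alpha = \Omega(\mathsf{min}(\sqrt{w}, T^{1/4}))$; the trailing $\min$ with $T$ in \Thm{insertion+deletion+eps} is only relevant when $\eps < 1$ (after the $\eps$-boosting step) and is trivial here since $\alpha$ is always at most $T$.

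The main obstacle is not any single deep step but rather the careful bookkeeping needed to make the parameter choices consistent: one must verify that with $m = \Theta(\min(w,\sqrt T))$, the stream produced by \Alg{innerprodred} has length $(2k+1)m \leq T$ and maximum flippancy $2k \leq w$ \emph{simultaneously}, and that ``sufficiently large $n$'' in \Thm{IPlowerbound} translates to ``sufficiently large $w$ and $T$'' here. I would also need to double-check the edge regime where $w$ is large (say $w \geq \sqrt T$), in which case $m$ is capped by $\sqrt T$ and the flippancy constraint is automatically satisfied, giving the $T^{1/4}$ term; and the regime where $w$ is small, in which $m = \Theta(w)$ and the length constraint is automatically satisfied (since $w^2 \leq wT \le T\cdot$something, using $w \le T$), giving the $\sqrt w$ term. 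Finally, I would confirm that $\delta = o(1/T)$ indeed implies $\delta < 1/3$ for the range of $T$ under consideration, so that \Thm{IPlowerbound}'s conclusion (ruling out $(1,1/3)$-DP) applies to our $(1,\delta)$-DP algorithm $\alg$. Once $\eps = 1$ is handled, \Thm{insertion+deletion+eps} for general $\eps \in (0,1]$ follows by the $\eps$-reduction (\Thm{epsred}), which is invoked separately in \Sec{complete_lb}.
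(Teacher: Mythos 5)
Your proposal is correct and follows essentially the same route as the paper: a contradiction argument combining the reduction of Lemma~\ref{lem:innerprod_reduction} with Theorem~\ref{thm:IPlowerbound}, choosing the inner-product dimension of order $\min(w,\sqrt{T})$ so that both the flippancy constraint $2k\le w$ and the length constraint $(2k+1)n\le T$ hold. The only difference is presentational: the paper first proves the bound for $w\le\sqrt{T}$ (taking $k=w/2$, $n=w/(2c_1)$) and then extends to larger $w$ by monotonicity of the stream class, whereas you fold both regimes into one parameter choice $m=\Theta(\min(w,\sqrt{T}))$.
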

 
\begin{proof}[Proof of \Lem{insertion+deletion}]
Fix sufficiently large $w$ such that $w \leq \sqrt{T}$. Let 
$c_1 \geq 1$ and $c_2 > 0$ 
be the constants from \Thm{IPlowerbound}.
Assume that $\mech$ is a $\left(1,o(\frac{1}{T})\right)$-event-level-DP, $(\frac{c_2}{2}\sqrt{w})$-accurate mechanism for $\countdistinct$ for turnstile streams of length $T$ with maximum flippancy at most $w$. 
Set $k = \frac{w}{2}$ and $n = \frac{k}{c_1} = \frac{w}{2c_1}$.
This choice of $k$ and $n$ satisfies the conditions of \Lem{innerprod_reduction} since the flippancy of the stream is at most $w=2k$. Moreover, for $w \leq \sqrt{T}$ we have that $(2k+1)n = (w+1)\frac{w}{2c_1} \leq w^2 \leq T$. 
Therefore, $\alg$ (\Alg{innerprodred}) is $\left(1,o(\frac{1}{T})\right)$-DP and
$(c_2\sqrt{w})$-accurate for $\innerproducts_{k,n}$.
Since $\frac{1}{T} \leq \frac{1}{n}$ and $w=O(n)$, we get that $\alg$ is $(1,o(\frac{1}{n}))$-DP and
$c_2\sqrt{n}$-accurate for $\innerproducts_{k,n}$, where $k = c_1n$.

However, by \Thm{IPlowerbound}, $\alg$ cannot be $(1,\frac{1}{3})$-differentially private. We have obtained a contradiction. Thus, the mechanism $\cM$ with the desired accuracy of $O(\sqrt{w})$ does not exist.
When $w=\sqrt{T}$, this argument gives a lower bound of $T^{1/4}$ on the accuracy of $\mech$, and this lower bound applies to all larger $w$, since a mechanism that is $\alpha$-accurate for streams with maximum flippancy at most $w > w'$ is also $\alpha$-accurate for streams with maximum flippancy at most $w'$.
\end{proof}

Finally, we invoke the reduction in \Thm{epsred} to improve the dependence on $\eps$ and complete the proof of \Thm{insertion+deletion+eps}.

\begin{proof}[Proof of \Thm{insertion+deletion+eps}]
Suppose $\eps < \frac{2}{T}$. For these values of $\eps$, we prove an error lower bound of $\Omega(T)$ via a group privacy argument. Suppose for the sake of contradiction that $\alpha \leq T/4$. Consider universe $\uni = [T]$. 
 Let $\dstream=\bot^T$ and $\dstream'$ be a stream of length $T$ such that $\dstream[t] = t$ for all $t \in [T]$. These data streams differ in $T$ entries. 
Let $r[T]$ and $r'[T]$ be the final outputs of $\mech$ on input streams $\dstream$ and $\dstream'$, respectively. 
By the accuracy of $\mech$, we have $\Pr[r[T] \leq T/4]\geq 0.99.$ Applying \Lem{group_privacy} on group privacy with $\eps \leq {2}/{T}$ and group size $\ell=T$, we get 
$\Pr[r'[T] > T/4] \leq e^{2} \cdot\Pr[r[T]> T/4] + \frac{2\delta}{\eps} \leq e^2 \cdot 0.01 + o(\frac{1}{T})
< 0.99$
for sufficiently large $T$. 
But $\countdistinct(\dstream')=T$, so
$\mech$ is  not $T/4$-accurate for $\dstream'$, a contradiction. Hence, $\alpha = \Omega(T)$.

Next, suppose $\eps \geq \frac{2}{T}$. 
\Lem{insertion+deletion} provides a lower bound of $\alpha' = \Omega\Big(\mathsf{min} \left(\sqrt{w}, T^{1/4}\right)\Big)$ for ($\eps' = 1$,$\delta' = o(1/T)$)-event-level-DP, $\alpha'$-accurate mechanisms for $\countdistinct$ on turnstile streams of length $T$ with maximum flippancy at most $w$.
By invoking \Thm{epsred}, we obtain the following lower bound on accuracy for $(\eps, \delta)$-DP mechanisms where 
$\delta = \frac{\delta'\eps}{2} = o(\frac{\eps}{T})$: 
   \begin{align*}
    \alpha = \frac{1}{\eps}\Omega\left(\sqrt{w}, (\eps T)^{1/4}) \right)
     = \Omega\left(\mathsf{min}\left(\frac{\sqrt{w}}{\eps},\frac{T^{1/4}}{\eps^{3/4}} \right)\right).
    \end{align*}
Overall, since for different parameter regimes, we get lower bounds $\Omega(T)$ and $\Omega\left(\mathsf{min}\left(\frac{\sqrt{w}}{\eps},\frac{T^{1/4}}{\eps^{3/4}} \right)\right)$, our final result is a lower bound of
$\Omega\left(\mathsf{min}\left(\frac{\sqrt{w}}{\eps},\frac{T^{1/4}}{\eps^{3/4}}, T \right)\right)$.
\end{proof}

\section{Item-level privacy lower bound}\label{sec:itemlb}

In this section, we prove \Thm{LB_CD_item} that provides strong lower bounds on the accuracy 
of any 
\emph{item-level} differentially private mechanism for $\countdistinct$ in the continual release model for turnstile streams. This lower bound is parameterized by $w$, the maximum flippancy of the input stream.

\subsection{Reduction from $\marginals{}$}
 
To prove our lower bounds for $\countdistinct$, we reduce from the problem of approximating marginals in the batch model. 

\begin{definition}[Marginals]
    The function $\marginals{n,d}: \{0,1\}^{n\times d} \to [0,1]^d$ maps a dataset $\dset$ of $n$ records and $d$ attributes to a vector $(q_1(\dset),\dots,q_d(\dset)),$ where $q_j$, called the $j^{th}$ marginal, is defined as $q_j(\dset) = \frac{1}{n}\sum_{i=1}^n \dset[i][j].$
\end{definition}

The reduction is presented in \Alg{CD_item_reduction}. 
The privacy and accuracy guarantees of our reduction are stated in \Lem{CD_item_reduction}. In \Sec{complete_lb_item}, we use \Lem{CD_item_reduction} to complete the proof of \Thm{LB_CD_item}. 

\paragraph{Overview of the reduction.} Let $\mech$ be an $(\eps,\delta)$-DP and $\alpha$-accurate mechanism for $\countdistinct$ in the \CR. We use $\mech$ to construct a $(O(\eps), O(\delta))$-DP batch algorithm $\alg$ that is $(\frac{\alpha}{n})$-accurate  for $\marginals{n,d}$. 
Consider a universe 
$\univ = [n] \cup \{\perp\}$ for $\countdistinct:\univ_\pm^T \to \N$.
The main idea in the construction (presented in \Alg{CD_item_reduction}) is to force $\mech$ to output an estimate of the marginals, one attribute at a time. 
Given a dataset $y \in \{0,1\}^{n\times d}$, the estimation of each marginal proceeds in two phases:
\begin{itemize}
    \item In \emph{phase one}, 
     $\alg$ sends element $i$ to $\mech$ for each record $\dset[i]$ with a $1$ in the first attribute. The answer produced by $\mech$ at the end of \emph{phase one} is an estimate of the sum of the first attribute of all records $\dset[1],\dots,\dset[n]$. This can be divided by $n$ to estimate the first marginal. 
    \item In \emph{phase two}, $\alg$ `clears the slate' by sending $-i$ to $\mech$ for each $\dset[i]$ with a $1$ in the first attribute. 
\end{itemize}
  It repeats this for each attribute, collecting the answers from $\mech,$ and then outputs its estimates for the marginals. In actuality, in both phases  of estimating the $j^{\text{th}}$ marginal, $\alg$ inputs $\perp$ for each $\dset[i]$ that has a $0$ in the $j^{\text{th}}$ attribute. This algorithm is $(O(\eps),O(\delta))$-DP for $\marginals{n,d}$ since changing one record $\dset[i]$ in the input to the algorithm $\cA$ will only change occurrences of a single element $i$ (to some other element $j$) in the input to the mechanism $\mech$. Additionally, note that this reduction works equally well in the likes model where items can only be inserted when absent and deleted when present, since the stream produced in the reduction has this structure (see \Thm{likes_lower_bound}).

\begin{algorithm}[ht!]
    \caption{Reduction $\alg$ from $\marginals{}{}$ to 
    $\countdistinct$}
    \label{alg:CD_item_reduction}
    \begin{algorithmic}[1]
        \Statex \textbf{Input:} Dataset $\dset = (\dset[1],\dots,\dset[n]) \in \zo^{n \times d}$ and black-box access to mechanism $\mech$ for $\countdistinct$ in turnstile streams
        \Statex \textbf{Output:}  Estimates of marginals $b = (b[1],\dots,b[d]) \in \R^d$
        \State Define the universe $\mathcal{U} = [n] \cup \{\perp\}$
        \State Initialize streams $z^{(1)}=\perp^{2n},\dots,z^{(d)}=\perp^{2n}$ and a vector $r$ of length $2nd$ 
        \ForAll{$(i,j) \in [n]\times [d]$ such that $\dset[i][j]=1$}
  
        \State Set $z^{(j)}[i] = +i$. 
        \Comment{phase one}
        \State Set $z^{(j)}[n+i] = -i$.
        \Comment{phase two}
        \EndFor
        \State Run $\mech$ on the stream $\dstream \gets z^{(1)} \circ z^{(2)} \circ \dots \circ z^{(d)}$ and record the answers as vector $r$ \label{step:construct-stream}  
        \ForAll{$j\in[d]$}
        \State $b[j] = r[(2j-1)n]/n$ \label{step:estimate-marginals}
        \EndFor
        \State Return estimates $(b[1],\dots, b[d])$ 
    \end{algorithmic}
\end{algorithm}


\begin{definition}[Accuracy of an algorithm for marginals]\label{def:batchmodel-accuracy}
Let $\gamma \in [0,1]$ and $n,d\in\N$.
The error $\err_{\marginals{}}$ is defined as in \Sec{intro}. A batch algorithm 
$\alg$ 
is $\gamma$-accurate for $\marginals{n,d}$ if for all datasets 
$y \in \{0,1\}^{n\times d}$,
$$\Pr_{{\text{coins of }}\alg}\left[ \err_{\marginals{}}(\dset, \alg(\dset)) \leq \gamma\right] \geq 0.99 . $$ 

\end{definition}

\begin{lemma}\label{lem:CD_item_reduction}
Let $\alg$ be \Alg{CD_item_reduction}. For all  $\eps \in (0,1], \delta \geq 0, \alpha \in \R^{+}$ and $d,n,w,T \in \N$, where $T \geq 2dn$ and $w \geq 2d$,  if  mechanism $\mech$  is  $(\eps, \delta)$-item-level-DP and $\alpha$-accurate for $\countdistinct$ for streams of length $T$ with maximum flippancy at most $w$ in the \CR{}, then batch algorithm~$\alg$ is $(2\eps, 4\delta)$-DP and $\frac{\alpha}{n}$-accurate for $\marginals{n,d}$. 
\end{lemma}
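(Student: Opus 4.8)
The plan is to verify two things separately: (i) the reduction $\alg$ is $(\eps,\delta)$-DP as a batch algorithm for $\marginals{n,d}$, and (ii) if $\mech$ is $\alpha$-accurate for $\countdistinct$ on streams of length $T$ with maximum flippancy at most $w$, then $\alg$ is $\frac{\alpha}{n}$-accurate for $\marginals{n,d}$. For both parts I must first check that the stream $\dstream = z^{(1)} \circ \dots \circ z^{(d)}$ constructed in Step~\ref{step:construct-stream} is a legal input to $\mech$, i.e.\ that it has length $\leq T$ (immediate from $T \geq 2dn$, padding with $\bot$ if needed) and maximum flippancy $\leq w$.

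\textbf{Flippancy bound.} For a fixed universe element $i \in [n]$, its existence vector over $\dstream$ is $0$ at the start of each block $z^{(j)}$, becomes $1$ at position $i$ of block $j$ exactly when $\dset[i][j]=1$, and returns to $0$ at position $n+i$ of that same block. So within each of the $d$ blocks the existence bit of $i$ flips at most twice (and $0$ times if $\dset[i][j]=0$), for a total of at most $2d \leq w$ flips over the whole stream. Hence $w_\dstream \leq w$ and $\mech$'s accuracy guarantee applies to $\dstream$.

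\textbf{Privacy.} Changing one record $\dset[i]$ of the input dataset $y$ to $\alg$ only changes, in the constructed stream, the entries $z^{(j)}[i]$ and $z^{(j)}[n+i]$ over the blocks $j \in [d]$ — that is, it changes only entries pertaining to the single universe element $i$ (each from $\pm i$ to $\bot$ or vice versa). Thus the two constructed streams are item-neighbors. Since $b$ is computed by post-processing the output $r$ of $\mech$ (Step~\ref{step:estimate-marginals}), and $\mech$ is $(\eps,\delta)$-item-level-DP, the post-processing lemma (Lemma~\ref{lem:postprocess}) gives that $\alg$ is $(\eps,\delta)$-DP for $\marginals{n,d}$.

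\textbf{Accuracy.} The key identity: at time step $(2j-1)n$ — i.e.\ at the end of phase one of block $j$ — every element inserted in blocks $1,\dots,j-1$ has already been deleted (in the respective phase two), and the elements of block $j$ that have been inserted are exactly $\{i : \dset[i][j]=1\}$, none yet deleted (deletions happen in the second half of block $j$, at positions $n+i > n$). Hence $\countdistinct(\dstream)[(2j-1)n] = |\{i \in [n] : \dset[i][j]=1\}| = \sum_{i=1}^n \dset[i][j] = n\cdot q_j(\dset)$. By $\alpha$-accuracy of $\mech$, with probability at least $0.99$ we have $|r[(2j-1)n] - n\cdot q_j(\dset)| \leq \alpha$ simultaneously for all $j \in [d]$ (the $\ell_\infty$ accuracy guarantee is over all time steps at once). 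Dividing by $n$ and using $b[j] = r[(2j-1)n]/n$ gives $|b[j] - q_j(\dset)| \leq \frac{\alpha}{n}$ for all $j$, i.e.\ $\err_{\marginals{}}(\dset, \alg(\dset)) = \max_j |b[j] - q_j(\dset)| \leq \frac{\alpha}{n}$ with probability at least $0.99$. This is exactly $\frac{\alpha}{n}$-accuracy for $\marginals{n,d}$ per Definition~\ref{def:batchmodel-accuracy}.

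The main obstacle — really the only place any care is needed — is bookkeeping the time indices: confirming that at step $(2j-1)n$ the stream prefix contains precisely the $1$-entries of column $j$ and nothing else with positive count, which rests on the "clear the slate" deletions in phase two being placed at positions $n+1,\dots,2n$ within each block so that they are undone before the next block and do not interfere with the phase-one count. Once the indexing is pinned down, everything else is a direct application of post-processing and the stated accuracy definitions.
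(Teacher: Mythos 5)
Your proof is correct and follows essentially the same route as the paper's: item-neighboring streams plus post-processing for privacy, the identity $\countdistinct(\dstream)[(2j-1)n] = n\cdot q_j(\dset)$ for accuracy, and the flippancy bound $2d\le w$ to ensure $\mech$'s guarantee applies. Your explicit bookkeeping of the time indices and the padding-to-length-$T$ remark are fine and match what the paper leaves implicit.
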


\begin{proof}
We start by reasoning about privacy. Fix neighboring datasets $\dset$ and $\dset'$ that are inputs to batch algorithm~$\alg$~(\Alg{CD_item_reduction}). (Datasets $y$ and $y'$ differ in one row.)
Let $\dstream$ and $\dstream'$ be the streams constructed in Step~\ref{step:construct-stream} of $\alg$ when it is run on $\dset$ and $\dset'$, respectively. 
We show that $\dstream$ and $\dstream'$ are $2$-item-neighbors. Suppose (w.l.o.g.) that $y$ and $y'$ differ in the first row. Then $\dstream$ and $\dstream'$ can only differ in the entries $z^{(j)}[1]$, where $j \in [d]$. The stream $\dstream'$ can be obtained from $\dstream$ in a sequence of two item-level changes: first, replace all entries $z^{(j)}[1]$ for $j \in [d]$ with $\perp$, then replace a subset of these entries with the appropriate insertions and deletions, following phase one and two of \Alg{CD_item_reduction}, to form $x'$. Thus $\dstream$ and $\dstream'$ are $2$-item-neighbors.
Since $\mech$ is $(\eps, \delta)$-item-level-DP, and $\alg$ only post-processes the outputs received from $\mech$, closure under post-processing (\Lem{postprocess}) and group privacy (\Lem{group_privacy})
imply that $\alg$ is $(2\eps, 4\delta)$-DP (where we use the fact that $\frac{e^{2\eps}-1}{e^{\eps}-1} \leq 4$ for all $\eps \in (0,1]$).

Now we reason about accuracy. Let $\dstream = (\dstream[1],\dots,\dstream[2dn])$ be the input stream provided to $\mech$ when $\alg$ is run on dataset $\dset.$ Recall that $\countdistinct(\dstream)[t]$ is the number of distinct elements in stream $\dstream$ at time $t$.
By construction, for all $j \in [d]$, the $j^{\text{th}}$ marginal $q_j(\dset)$ of the dataset $\dset$ is related to $\countdistinct(\dstream)[(2j-1)n]$ as follows

\begin{equation}\label{eq:countdist_obs}
    q^{(j)}(\dset) = \frac{1}{n}\sum_{i\in[n]} \dset[i][j]
    =  \frac{1}{n} \cdot \countdistinct(\dstream)[(2j-1)n].
\end{equation}

Notice that: (1) The coins of $\alg$ are the same as the coins of $\mech$ (since the transformation from $\mech$ to $\alg$ is deterministic). (2) The marginals are computed in Step~\ref{step:estimate-marginals} of \Alg{CD_item_reduction} 
using the relationship described by~Equation~(\ref{eq:countdist_obs}). (3) The maximum flippancy of the stream constructed in \Alg{CD_item_reduction} is at most $2d$, since each item $i\in\univ$ is inserted and deleted at most once in each $z^{(j)}$ for $j\in[d]$. We obtain that $\cA$ inherits its probability of success from $\mech$: 
\begin{align*}
 \Pr_{{\text{coins of }}\alg}\left[ \err_{\marginals{n,d}}(\dset, \alg(\dset)) \leq \frac \alpha n\right]
= &\Pr_{\text{coins of }\alg}\left[\max_{j\in[d]}\abs{q_j(\dset)-b[j]} \leq \frac{\alpha}{n} \right] \\
= &\Pr_{\text{coins of }\mech}\left[ \max_{t\in\{n,\dots,(2j-1)n\}} \abs{ \countdistinct(\dstream)[t] - r[t]} \leq \alpha \right] \\
\geq &\Pr_{\text{coins of }\mech}\left[ \max_{t\in[T]} \abs{ \countdistinct(\dstream)[t] - r[t]} \leq \alpha \right] 
\\
= & \Pr_{\text{coins of }\mech}\left[ \err_{\countdistinct}(\dstream, r)  \leq \alpha\right]
\geq 0.99,
\end{align*}
where we used that $\mech$ is $\alpha$-accurate for $\countdistinct$ for streams of length $T$ with maximum flippancy at most $w \leq 2d$. Thus, \Alg{CD_item_reduction} is $(\frac \alpha n)$-accurate for $\marginals{n,d}$, completing the proof of \Lem{CD_item_reduction}.
\end{proof}

\subsection{From the reduction to the accuracy lower bound}\label{sec:complete_lb_item}

In this section, we use \Lem{CD_item_reduction} (the reduction from  $\marginals{}$ to $\countdistinct$) together with previously established lower bounds for $\marginals{}$ to complete the proof of \Thm{IPlowerbound}. 
The lower bounds on the accuracy of private algorithms for $\marginals{}$ are stated in Items~1 and 2 of \Lem{oneway-marginals} for
approximate differential privacy and pure differential privacy, respectively. 
Item~2 in \Lem{oneway-marginals} is a slight modification of the lower bound from \citet{HardtT10} and follows from a simple packing argument.
\begin{lemma}[Lower bounds for $\marginals{}$ \cite{BunUV18, HardtT10}]
\label{lem:oneway-marginals}
For all \sstext{$\eps \in (0,2]$} \ssnote{Check range of $\eps$}, $\delta \in [0,1]$, $\gamma \in (0,1)$, $d,n \in \N$,  and algorithms that are $(\eps, \delta)$-differentially private and $\gamma$-accurate for $\marginals{n,d}$, the following statements hold.\\ 
\hspace*{3mm} {\bf 1} {\em \cite{BunUV18}.} If $\delta > 0$ and $\delta = o(1/n)$, then $n= \Omega\left( \frac{\sqrt{d}}{\gamma\eps  \log d} \right)$.\\
\hspace*{3mm}  {\bf 2} {\em \cite{HardtT10}.} If $\delta = 0$, then $n = \Omega\left( \frac{d}{\gamma \epsilon} \right)$.
\end{lemma}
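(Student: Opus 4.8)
The plan is to prove the two items by completely different routes. Item~1 is the fingerprinting lower bound of Bun, Ullman, and Vadhan and I would obtain it by directly invoking their theorem with the right parameters; Item~2 I would prove from scratch by a Hardt--Talwar-style packing argument, using only the group-privacy lemma (Lemma~\ref{lem:group_privacy}).

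For Item~2 ($\delta = 0$): the plan is to assume toward a contradiction that $\alg$ is $(\eps,0)$-DP and $\gamma$-accurate for $\marginals{n,d}$, where I may take $\gamma$ bounded away from $1/2$ (for larger $\gamma$ the stated bound is not the binding constraint, and an all-$1/2$ output is trivially accurate). Set $k = \lfloor 2\gamma n \rfloor + 1$, so $k/n > 2\gamma$ and, in the regime of interest, $k \le n$. For each $v \in \{0,1\}^d$ let $D_v \in \{0,1\}^{n\times d}$ be the dataset whose last $k$ rows all equal $v$ and whose first $n-k$ rows are the all-zeros vector; its marginal vector is then $\tfrac{k}{n}v$, and for $v \neq v'$ two such vectors differ by $\tfrac{k}{n}\|v-v'\|_\infty \ge \tfrac{k}{n} > 2\gamma$ in $\ell_\infty$. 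Hence the balls $B_v := \{\, b \in [0,1]^d : \|b - \tfrac{k}{n}v\|_\infty \le \gamma \,\}$ are pairwise disjoint, while $\gamma$-accuracy gives $\Pr[\alg(D_v) \in B_v] \ge 0.99$ for every $v$. Since $D_v$ and $D_{\mathbf{0}}$ differ in at most $k$ rows, group privacy (Lemma~\ref{lem:group_privacy} with $\delta=0$) yields $\Pr[\alg(D_v)\in B_v] \le e^{k\eps}\Pr[\alg(D_{\mathbf{0}})\in B_v]$. Summing over all $2^d$ vectors $v$ and using disjointness,
\[
0.99\cdot 2^{d} \;\le\; \sum_{v \in \{0,1\}^{d}} \Pr[\alg(D_v)\in B_v] \;\le\; e^{k\eps}\sum_{v}\Pr[\alg(D_{\mathbf{0}})\in B_v] \;\le\; e^{k\eps},
\]
so $k\eps \ge d\ln 2 - \ln(1/0.99) = \Omega(d)$, i.e. $k = \Omega(d/\eps)$. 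When $\gamma n \ge 1$ we have $k \le 2\gamma n + 1 = O(\gamma n)$, which forces $n = \Omega\!\big(d/(\gamma\eps)\big)$; when $\gamma n < 1$, $k = O(1)$ and the same inequality gives $\eps = \Omega(d)$, impossible for $\eps \le 1$ and $d$ large, so no such $\alg$ exists and the bound holds vacuously. This establishes Item~2.

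For Item~1 ($\delta>0$, $\delta=o(1/n)$): I would quote the fingerprinting-code lower bound of~\cite{BunUV18}, which states that in this regime no $(\eps,\delta)$-DP mechanism can release all $d$ one-way marginals of a dataset in $\{0,1\}^{n\times d}$ to $\ell_\infty$-error $\gamma$ with probability $0.99$ unless $n = \Omega\!\big(\sqrt{d}/(\gamma\eps\log d)\big)$; their theorem is stated with a general accuracy parameter $\gamma$ (the signal the robust-reconstruction/``fingerprinting lemma'' step extracts per marginal scales with $\gamma$), and the $\log d$ loss is an artifact of the fingerprinting code length. I expect this to be the main obstacle of the whole statement: Item~2 is elementary, whereas Item~1 genuinely relies on the fingerprinting machinery that I would not reprove — the only care needed is to cite the version of their theorem with explicit accuracy $\gamma$ and with the $\delta = o(1/n)$ hypothesis matching ours.
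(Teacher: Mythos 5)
Your proposal matches the paper's treatment: the paper does not prove this lemma either, but cites \cite{BunUV18} for Item~1 and remarks that Item~2 is ``a slight modification of the lower bound from Hardt and Talwar and follows from a simple packing argument'' --- precisely the argument you carry out. Your packing proof of Item~2 is correct (the datasets $D_v$ have marginal vectors at pairwise $\ell_\infty$ distance $k/n > 2\gamma$, so the balls $B_v$ are disjoint, and group privacy with $\delta=0$ over at most $k$ row changes gives $0.99\cdot 2^d \leq e^{k\eps}$), and your caveat that $\gamma$ must be bounded away from $1/2$ is a genuine, if minor, imprecision in the lemma as stated rather than a gap in your argument.
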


To prove \Thm{LB_CD_item}, we show that the lower bound holds for $\eps=1$, and use \Thm{epsred} to extend it to all $\eps < 1$. The approximate-DP lower bound (on the error term $\alpha$) in \Thm{LB_CD_item} is the minimum of two terms. To prove this bound, we need to establish that, for every possible range of parameters, at least one term serves as a lower bound for $\alpha$. 

\begin{lemma}\label{lem:LB_CD_item_eps1}
For all $\delta  \in (0,1]$, sufficiently large $w,T \in \N$ such that $w \leq T$, and all $(1, \delta)$-item-level-DP
mechanisms that are $\alpha$-accurate for $\mathsf{CountDistinct}$ on turnstile streams of length $T$ with maximum flippancy at most $w$,\\
\hspace*{3mm} {\bf 1} If $\delta > 0$ and $\delta = o(1/T)$, then $\alpha = \Omega\Big(\mathsf{min} \left(\frac{\sqrt{w}}{\log w}, \frac{T^{1/3}}{\log T}\right)\Big)$.\\
\hspace*{3mm}  {\bf 2} If $\delta = 0$, then $\alpha = \Omega\Big(\mathsf{min} \left( w, \sqrt{T} \right)\Big)$.
\end{lemma}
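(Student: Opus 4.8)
The plan is to derive Lemma~\ref{lem:LB_CD_item_eps1} by instantiating the reduction from $\marginals{}$ to $\countdistinct$ (Lemma~\ref{lem:CD_item_reduction}, via Algorithm~\ref{alg:CD_item_reduction}) with a well-chosen number of attributes $d$ and records $n$, and then invoking the batch-model lower bounds for $\marginals{n,d}$ from Lemma~\ref{lem:oneway-marginals}. Concretely, if $\mech$ is $(1,\delta)$-item-level-DP and $\alpha$-accurate for $\countdistinct$ on length-$T$ streams of maximum flippancy at most $w$, then for any $d,n$ with $2d \le w$ and $2dn \le T$, Lemma~\ref{lem:CD_item_reduction} turns $\mech$ into a $(1,\delta)$-DP batch algorithm that is $(\alpha/n)$-accurate for $\marginals{n,d}$. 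Since marginals lie in $[0,1]$, Lemma~\ref{lem:oneway-marginals} is only meaningful when $\alpha/n < 1$, which is why the argument splits on whether $\alpha \ge n$ or $\alpha < n$: in the former case the accuracy is already large enough on its own, and in the latter the marginals lower bound applies directly. The $\min$ in each claimed bound arises from taking $d = \Theta\bigl(\min(w,\mathrm{poly}(T))\bigr)$.

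For Part~1, I would take $d = \min\bigl(\lfloor w/2\rfloor,\, \lfloor (T/4)^{2/3}\rfloor\bigr)$ and $n = \lceil \sqrt d\,\rceil$. Then $2d \le w$, and $2dn \le 2d^{3/2} + 2d \le T$ for sufficiently large $T$; moreover $n \le T$, so $\delta = o(1/T)$ implies $\delta = o(1/n)$, as required by Item~1 of Lemma~\ref{lem:oneway-marginals}. If $\alpha \ge n$, then $\alpha \ge \sqrt d = \Omega\bigl(\min(\sqrt w, T^{1/3})\bigr) = \Omega\bigl(\min(\sqrt w/\log w, T^{1/3}/\log T)\bigr)$ and we are done. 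If instead $\alpha < n$, set $\gamma = \alpha/n \in (0,1)$; by Lemma~\ref{lem:CD_item_reduction} we obtain a $(1,\delta)$-DP algorithm that is $\gamma$-accurate for $\marginals{n,d}$, so Item~1 of Lemma~\ref{lem:oneway-marginals} gives $n = \Omega\bigl(\sqrt d/(\gamma \log d)\bigr)$. Substituting $\gamma = \alpha/n$ and cancelling the factor of $n$ yields $\alpha = \Omega\bigl(\sqrt d/\log d\bigr) = \Omega\bigl(\min(\sqrt w/\log w,\, T^{1/3}/\log T)\bigr)$. Part~2 is the same argument with $\delta = 0$: take $d = \min\bigl(\lfloor w/2\rfloor,\, \lfloor \sqrt T/2\rfloor\bigr)$ and $n = \lfloor T/(2d)\rfloor$, so that $2d \le w$ and $2dn \le T$; if $\alpha \ge n$, then checking both sub-cases $w \le \sqrt T$ and $w > \sqrt T$ gives $\alpha \ge n = \Omega\bigl(\min(w, \sqrt T)\bigr)$, while if $\alpha < n$ then $\gamma = \alpha/n \in (0,1)$ and Item~2 of Lemma~\ref{lem:oneway-marginals} gives $n = \Omega(d/\gamma) = \Omega(dn/\alpha)$, i.e.\ $\alpha = \Omega(d) = \Omega\bigl(\min(w, \sqrt T)\bigr)$.

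All of the computations involved are routine arithmetic; the only delicate point — and the main thing to get right — is choosing $d$ and $n$ so that all four constraints ($2d \le w$, $2dn \le T$, $\gamma = \alpha/n < 1$, and $\delta = o(1/n)$) hold simultaneously while $d$ stays at $\Theta\bigl(\min(w, T^{2/3})\bigr)$ in Part~1 and $\Theta\bigl(\min(w, \sqrt T)\bigr)$ in Part~2, so that $\sqrt d/\log d$ (resp.\ $d$) reproduces exactly the claimed bounds. Handling the $\gamma < 1$ restriction via the case split on $\alpha \ge n$ versus $\alpha < n$ is what avoids circularity. The full Theorem~\ref{thm:LB_CD_item} then follows by boosting Lemma~\ref{lem:LB_CD_item_eps1} from $\eps = 1$ to general $\eps \le 1$ with Theorem~\ref{thm:epsred}, together with an $\Omega(T)$ group-privacy bound for tiny $\eps$, exactly as in the proof of Theorem~\ref{thm:insertion+deletion+eps}.
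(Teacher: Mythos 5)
Your proposal is correct and takes essentially the same route as the paper's proof: reduce from $\marginals{n,d}$ via Lemma~\ref{lem:CD_item_reduction}, split on $\alpha \ge n$ versus $\alpha < n$ so that $\gamma = \alpha/n < 1$, and apply Lemma~\ref{lem:oneway-marginals} with $d = \Theta\bigl(\min(w, T^{2/3})\bigr)$ for Part~1 and $d = \Theta\bigl(\min(w, \sqrt{T})\bigr)$ for Part~2. The only differences are bookkeeping choices (e.g.\ your $n = \lceil\sqrt{d}\,\rceil$ and capping $d$ directly, versus the paper's $d = w/2$, $n = T/w$ followed by a monotonicity-in-$w$ step), which do not change the argument.
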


\begin{proof} Let $\alg$ be the algorithm for $\marginals{n,d}$ with black-box access to an $\alpha$-accurate mechanism $\mech$ for $\countdistinct$, 
as defined in \Alg{CD_item_reduction}. 
If $T \geq 2dn$ and $w \geq 2d$, then by \Lem{CD_item_reduction}, algorithm $\alg$ is $(2,4\delta)$-differentially private and $(\frac{\alpha}{n})$-accurate  for $\marginals{n,d}$. 
We use \Lem{oneway-marginals} to lower bound $\alpha$.

\paragraph{Case~1 (Approximate DP, $\delta >0$, $\delta = o(1/n)$) :}
Suppose $w \leq T^{2/3}$. Pick number of dimension{s} $d = w/2$ and number of records $n=\frac{T}{w}$ (so that $T = 2dn$).
If $\frac{\alpha}{n}<1$, then by
Item~1 of \Lem{oneway-marginals}, $n = \Omega \left( \frac{n\sqrt{d}}{\alpha \log d} \right)$ which means that $\alpha = \Omega \left( \frac{\sqrt{d}}{ \log d} \right) = \Omega \left( \frac{\sqrt{w}}{\log w} \right)$. Otherwise, $\alpha \geq n \implies \alpha \geq \frac{T}{w} \geq T^{1/3} \geq \frac{T^{1/3}}{\log w} \geq \frac{\sqrt{w}}{\log w}$.

Now suppose $w=T^{2/3}$. The above argument
gives a lower bound of $\Omega \left( \frac{\sqrt{T^{2/3}}}{\log T^{2/3}} \right)$ on the accuracy of $\mech$. This lower bound applies to all $w > T^{2/3}$, since a mechanism that is $\alpha$-accurate for streams with maximum flippancy at most $w > w'$ is also $\alpha$-accurate for streams with maximum flippancy at most $w'$.

\paragraph{Case~2 (Pure DP, $\delta=0$):} The proof for $\delta = 0$ 
is similar,
except that we consider the cases $w \leq \sqrt{T}$ and $w > \sqrt{T}$ and use Item~2 from \Lem{oneway-marginals} instead of Item~1: 
Suppose $w \leq \sqrt{T}$.
Pick a dimension $d = w/2$, and number of entries $n=\frac{T}{w}$. 
If $\frac{\alpha}{n}<1$, then by \Lem{CD_item_reduction} and Item~1 of \Lem{oneway-marginals}, $n = \Omega \left( \frac{n\cdot d}{\alpha\cdot\epsilon} \right)$ which means that $\alpha = \Omega \left( \frac{d}{\epsilon} \right) = \Omega(w)$. Otherwise, if $\alpha \geq n$, then $\alpha \geq \frac{T}{w} \geq \sqrt{T} \geq w$.

Now, suppose $w \geq \sqrt{T}$. Since $\mech$ is also $\alpha$-accurate for streams of length $T$ with maximum flippancy $w' = \sqrt{T}$, the bound for $w \leq \sqrt{T}$ still applies: That is $\alpha = \Omega(w') \implies \alpha = \Omega(\sqrt{T})$.

This concludes the proof of \Lem{LB_CD_item_eps1}.
\end{proof}

Finally, we extend the lower bounds for $\eps=1$ from \Lem{LB_CD_item_eps1} to the general case of $\eps < 1$ using \Thm{epsred}.

\begin{proof}[Proof of \Thm{LB_CD_item}]
Suppose $\eps < \frac{2}{T}$. For these values of $\eps$, we prove an error lower bound of $\Omega(T)$, via a group privacy argument that is exactly the same as in the event-level lower bound (we direct the reader to the proof of \Thm{insertion+deletion+eps} for more details).

Now suppose $\eps \geq \frac{2}{T}$. 
For $\delta > 0$, 
\Lem{LB_CD_item_eps1} provides a lower bound of $\alpha' = \Omega\Big(\mathsf{min} \left(\frac{\sqrt{w}}{\log w}, \frac{T^{1/3}}{\log T}\right)\Big)$ on accuracy for 
($\eps' = 1$,$\delta' = o(1/T)$)-item-level-DP, $\alpha'$-accurate mechanisms for $\countdistinct$ on turnstile streams of length $T$ with maximum flippancy at most $w$.
By invoking \Thm{epsred}, we can extend this to the following lower bound for $(\eps, \delta)$-DP mechanisms where $\delta = \frac{\delta'\eps}{2} = o(\frac{\eps}{T})$:
    \begin{align*}
    \alpha =\frac{1}{\eps}\Omega\left(\mathsf{min}\left(\frac{\sqrt{w}}{\log w}, \frac{(\eps T)^{1/3}}{\log \eps T}\right)\right) =
    \Omega\left(\mathsf{min}\left(\frac{\sqrt{w}}{\eps\log w}, \frac{ T^{1/3}}{\eps^{2/3}\log \eps T}\right)\right) .
    \end{align*}
In different parameter regimes, we get lower bounds $\Omega(T)$ and 
$\Omega\left(\mathsf{min}\left(\frac{\sqrt{w}}{\eps\log w}, \frac{ T^{1/3}}{\eps^{2/3}\log \eps T}\right)\right)$.
Overall, we get a lower bound of 
$\Omega\left(\mathsf{min}\left(\frac{\sqrt{w}}{\eps\log w}, \frac{ T^{1/3}}{\eps^{2/3}\log \eps T},T\right)\right)$.

Finally, consider $\delta = 0$. \Lem{LB_CD_item_eps1} provides a lower bound of $\alpha' = \Omega\Big(\mathsf{min} \left( w, \sqrt{T} \right)\Big) $ on accuracy for 
($\eps' = 1$, $\delta=0$)-item-level-DP, $\alpha'$-accurate mechanisms for $\countdistinct$ on turnstile streams of length $T$ with maximum flippancy at most $w$. By invoking \Thm{epsred} with $\delta=0$, we can extend this to the following lower bound for $(\eps, 0)$-DP mechanisms:
    \begin{align*}
    \alpha = \Omega\Big(\mathsf{min} \Big(\frac{w}{\eps}, \frac{\sqrt{T}}{\eps} \Big)\Big). 
    \end{align*}
    Combining this bound with the $\Omega(T)$ lower bound for when $\eps < \frac{2}{T}$ gives the desired result.
\end{proof}

\newcommand{\acksection}{\section*{Acknowledgments}}

\acksection{
We thank Teresa Steiner and an anonymous reviewer for useful suggestions on the initial version of this paper. S.S. was supported by NSF award CNS-2046425 and Cooperative Agreement CB20ADR0160001 with the Census Bureau. A.S. and P.J. were supported in part by NSF awards CCF-1763786 and CNS-2120667 as well as Faculty Awards from Google and Apple.
}

\bibliographystyle{plainnat}
\bibliography{biblio}

\begin{thebibliography}{59}
\providecommand{\natexlab}[1]{#1}
\providecommand{\url}[1]{\texttt{#1}}
\expandafter\ifx\csname urlstyle\endcsname\relax
  \providecommand{\doi}[1]{doi: #1}\else
  \providecommand{\doi}{doi: \begingroup \urlstyle{rm}\Url}\fi

\bibitem[Agarwal and Singh(2017)]{AgarwalS17}
Naman Agarwal and Karan Singh.
\newblock The price of differential privacy for online learning.
\newblock In \emph{Proceedings, International Conference on Machine Learning
  (ICML)}, pages 32--40, 2017.

\bibitem[Akella et~al.(2003)Akella, Bharambe, Reiter, and Seshan]{Akella2003}
Aditya Akella, Ashwin~R. Bharambe, Michael~K. Reiter, and Srinivasan Seshan.
\newblock Detecting {DDoS} attacks on {ISP} networks.
\newblock In \emph{Workshop on Management and Processing of Data Streams},
  2003.

\bibitem[Alon et~al.(1996)Alon, Matias, and Szegedy]{Alon1996}
Noga Alon, Y.~Matias, and Mario Szegedy.
\newblock The space complexity of approximating the frequency moments.
\newblock In \emph{Proceedings, ACM Symposium on Theory of Computing (STOC)},
  1996.

\bibitem[Baker and Langmead(2018)]{Baker2018}
Daniel~N. Baker and Ben Langmead.
\newblock Dashing: fast and accurate genomic distances with hyperloglog.
\newblock \emph{Genome Biology}, 20, 2018.

\bibitem[Bar{-}Yossef et~al.(2002)Bar{-}Yossef, Jayram, Kumar, Sivakumar, and
  Trevisan]{Bar-YossefJKST02}
Ziv Bar{-}Yossef, T.~S. Jayram, Ravi Kumar, D.~Sivakumar, and Luca Trevisan.
\newblock Counting distinct elements in a data stream.
\newblock In \emph{Proceedings, International Workshop on Randomization and
  Computation (RANDOM)}, volume 2483, pages 1--10, 2002.

\bibitem[Bar-Yossef et~al.(2002)Bar-Yossef, Kumar, and
  Sivakumar]{BarYossef2002}
Ziv Bar-Yossef, Ravi Kumar, and D.~Sivakumar.
\newblock Reductions in streaming algorithms, with an application to counting
  triangles in graphs.
\newblock In \emph{Proceedings, ACM-SIAM Symposium on Discrete Algorithms
  (SODA)}, 2002.

\bibitem[Beyer et~al.(2007)Beyer, Haas, Reinwald, Sismanis, and
  Gemulla]{Beyer2007}
Kevin~S. Beyer, Peter~J. Haas, Berthold Reinwald, Yannis Sismanis, and Rainer
  Gemulla.
\newblock On synopses for distinct-value estimation under multiset operations.
\newblock In \emph{ACM SIGMOD Conference}, 2007.

\bibitem[Blocki et~al.(2023)Blocki, Grigorescu, Mukherjee, and
  Zhou]{BlockiGMZ23}
Jeremiah Blocki, Elena Grigorescu, Tamalika Mukherjee, and Samson Zhou.
\newblock How to make your approximation algorithm private: {A} black-box
  differentially-private transformation for tunable approximation algorithms of
  functions with low sensitivity.
\newblock In \emph{{APPROX/RANDOM}}, volume 275, pages 59:1--59:24, 2023.

\bibitem[Bolot et~al.(2013)Bolot, Fawaz, Muthukrishnan, Nikolov, and
  Taft]{BolotFMNT13}
Jean Bolot, Nadia Fawaz, S.~Muthukrishnan, Aleksandar Nikolov, and Nina Taft.
\newblock Private decayed predicate sums on streams.
\newblock In \emph{International Conference on Database Theory (ICDT)}, page
  284–295, 2013.

\bibitem[Brody and Chakrabarti(2009)]{Brody2009}
Joshua Brody and Amit Chakrabarti.
\newblock A multi-round communication lower bound for gap hamming and some
  consequences.
\newblock \emph{Proceedings, IEEE Conference on Computational Complexity
  (CCC)}, pages 358--368, 2009.

\bibitem[Bun and Steinke(2016)]{BunS16}
Mark Bun and Thomas Steinke.
\newblock Concentrated differential privacy: Simplifications, extensions, and
  lower bounds.
\newblock In \emph{Proceedings, Theory of Cryptography Conference (TCC)},
  volume 9985, pages 635--658, 2016.

\bibitem[Bun et~al.(2018)Bun, Ullman, and Vadhan]{BunUV18}
Mark Bun, Jonathan Ullman, and Salil Vadhan.
\newblock Fingerprinting codes and the price of approximate differential
  privacy.
\newblock \emph{SIAM Journal on Computing}, 47\penalty0 (5):\penalty0
  1888--1938, 2018.

\bibitem[Chan et~al.(2011)Chan, Shi, and Song]{ChanSS11}
T.{-}H.~Hubert Chan, Elaine Shi, and Dawn Song.
\newblock Private and continual release of statistics.
\newblock \emph{{ACM} Trans.\ Inf.\ Syst.\ Secur.}, 14\penalty0 (3):\penalty0
  26:1--26:24, 2011.

\bibitem[Chen et~al.(2021)Chen, Ghazi, Kumar, and Manurangsi]{ChenG0M21}
Lijie Chen, Badih Ghazi, Ravi Kumar, and Pasin Manurangsi.
\newblock On distributed differential privacy and counting distinct elements.
\newblock In \emph{Proceedings, Innovations in Theoretical Computer Science
  (ITCS)}, volume 185, pages 56:1--56:18, 2021.

\bibitem[Choi et~al.(2020)Choi, Dachman-Soled, Kulkarni, and
  Yerukhimovich]{ChoiDKY2020}
Seung~Geol Choi, Dana Dachman-Soled, Mukul Kulkarni, and Arkady Yerukhimovich.
\newblock Differentially-private multi-party sketching for large-scale
  statistics.
\newblock \emph{Proceedings on Privacy Enhancing Technologies}, 2020:\penalty0
  153 -- 174, 2020.

\bibitem[Cohen(1997)]{Cohen1997}
Edith Cohen.
\newblock Size-estimation framework with applications to transitive closure and
  reachability.
\newblock \emph{J. Comput. Syst. Sci.}, 55:\penalty0 441--453, 1997.

\bibitem[De(2012)]{De12}
Anindya De.
\newblock Lower bounds in differential privacy.
\newblock In \emph{Proceedings, Theory of Cryptography Conference (TCC)},
  volume 7194, pages 321--338, 2012.

\bibitem[Denisov et~al.(2022)Denisov, McMahan, Rush, Smith, and
  Thakurta]{DenisovMRST22}
Sergey Denisov, H.~Brendan McMahan, John Rush, Adam~D. Smith, and
  Abhradeep~Guha Thakurta.
\newblock Improved differential privacy for {SGD} via optimal private linear
  operators on adaptive streams.
\newblock In \emph{Advances in Neural Information Processing Systems
  (NeurIPS)}, 2022.

\bibitem[Desfontaines et~al.(2018)Desfontaines, Lochbihler, and
  Basin]{DesfontainesLB2018}
Damien Desfontaines, Andreas Lochbihler, and David~A. Basin.
\newblock Cardinality estimators do not preserve privacy.
\newblock \emph{Proceedings on Privacy Enhancing Technologies}, 2019:\penalty0
  26 -- 46, 2018.

\bibitem[Dickens et~al.(2022)Dickens, Thaler, and Ting]{DickensTT2022}
Charles Dickens, Justin Thaler, and Daniel Ting.
\newblock Order-invariant cardinality estimators are differentially private.
\newblock In \emph{Advances in Neural Information Processing Systems
  (NeurIPS)}, 2022.

\bibitem[Dinur and Nissim(2003)]{DinurN03}
Irit Dinur and Kobbi Nissim.
\newblock Revealing information while preserving privacy.
\newblock In \emph{Proceedings, ACM Symposium on Principles of Database Systems
  (PODS)}, pages 202--210, 2003.

\bibitem[Durand and Flajolet(2003)]{Durand2003}
Marianne Durand and Philippe Flajolet.
\newblock Loglog counting of large cardinalities (extended abstract).
\newblock In \emph{Embedded Systems and Applications}, 2003.

\bibitem[Dwork et~al.(2007)Dwork, McSherry, and Talwar]{DworkMT07}
Cynthia Dwork, Frank McSherry, and Kunal Talwar.
\newblock The price of privacy and the limits of {LP} decoding.
\newblock In \emph{Proceedings, ACM Symposium on Theory of Computing (STOC)},
  pages 85--94, 2007.

\bibitem[Dwork et~al.(2009)Dwork, Naor, Reingold, Rothblum, and
  Vadhan]{DworkNRRV09}
Cynthia Dwork, Moni Naor, Omer Reingold, Guy~N. Rothblum, and Salil~P. Vadhan.
\newblock On the complexity of differentially private data release: efficient
  algorithms and hardness results.
\newblock In \emph{Proceedings, ACM Symposium on Theory of Computing (STOC)},
  pages 381--390, 2009.

\bibitem[Dwork et~al.(2010)Dwork, Naor, Pitassi, and Rothblum]{DworkNPR10}
Cynthia Dwork, Moni Naor, Toniann Pitassi, and Guy~N. Rothblum.
\newblock Differential privacy under continual observation.
\newblock In \emph{Proceedings, ACM Symposium on Theory of Computing (STOC)},
  pages 715--724, 2010.

\bibitem[Dwork et~al.(2015)Dwork, Naor, Reingold, and Rothblum]{DworkNRR15}
Cynthia Dwork, Moni Naor, Omer Reingold, and Guy~N. Rothblum.
\newblock Pure differential privacy for rectangle queries via private
  partitions.
\newblock In \emph{Advances in Cryptology - {ASIACRYPT}}, volume 9453, pages
  735--751, 2015.

\bibitem[Dwork et~al.(2016)Dwork, McSherry, Nissim, and Smith]{DworkMNS16}
Cynthia Dwork, Frank McSherry, Kobbi Nissim, and Adam~D. Smith.
\newblock Calibrating noise to sensitivity in private data analysis.
\newblock \emph{J. Priv. Confidentiality}, 7\penalty0 (3):\penalty0 17--51,
  2016.

\bibitem[Edmonds et~al.(2020)Edmonds, Nikolov, and Ullman]{EdmondsNU20}
Alexander Edmonds, Aleksandar Nikolov, and Jonathan~R. Ullman.
\newblock The power of factorization mechanisms in local and central
  differential privacy.
\newblock In \emph{Proceedings, ACM Symposium on Theory of Computing (STOC)},
  pages 425--438, 2020.

\bibitem[Epasto et~al.(2023)Epasto, Mao, Medina, Mirrokni, Vassilvitskii, and
  Zhong]{EpastoMMMVZ23}
Alessandro Epasto, Jieming Mao, Andres~Mu{\~{n}}oz Medina, Vahab Mirrokni,
  Sergei Vassilvitskii, and Peilin Zhong.
\newblock Differentially private continual releases of streaming frequency
  moment estimations.
\newblock In \emph{Proceedings, Innovations in Theoretical Computer Science
  (ITCS)}, volume 251, pages 48:1--48:24, 2023.

\bibitem[Estan et~al.(2006)Estan, Varghese, and Fisk]{EstanVF06}
Cristian Estan, George Varghese, and Michael~E. Fisk.
\newblock Bitmap algorithms for counting active flows on high-speed links.
\newblock \emph{{IEEE/ACM} Trans. Netw.}, 14\penalty0 (5):\penalty0 925--937,
  2006.

\bibitem[Fichtenberger et~al.(2021)Fichtenberger, Henzinger, and
  Ost]{FichtenHO21}
Hendrik Fichtenberger, Monika Henzinger, and Wolfgang Ost.
\newblock Differentially private algorithms for graphs under continual
  observation.
\newblock In \emph{European Symposium on Algorithms ({ESA})}, volume 204, pages
  42:1--42:16, 2021.

\bibitem[Flajolet and Martin(1985)]{FlajoletM85}
Philippe Flajolet and G.~Nigel Martin.
\newblock Probabilistic counting algorithms for data base applications.
\newblock \emph{J. Comput. Syst. Sci.}, 31\penalty0 (2):\penalty0 182--209,
  1985.

\bibitem[Flajolet et~al.(2007)Flajolet, Fusy, Gandouet, and
  Meunier]{Flajolet2007}
Philippe Flajolet, {\'E}ric Fusy, Olivier Gandouet, and Fr{\'e}d{\'e}ric
  Meunier.
\newblock Hyperloglog: the analysis of a near-optimal cardinality estimation
  algorithm.
\newblock \emph{Discrete Mathematics \& Theoretical Computer Science}, pages
  137--156, 2007.

\bibitem[Ghazi et~al.(2022)Ghazi, Kreuter, Kumar, Manurangsi, Peng, Skvortsov,
  Wang, and Wright]{GhaziKKMPSWW22}
Badih Ghazi, Ben Kreuter, Ravi Kumar, Pasin Manurangsi, Jiayu Peng, Evgeny
  Skvortsov, Yao Wang, and Craig Wright.
\newblock Multiparty reach and frequency histogram: Private, secure, and
  practical.
\newblock \emph{Proc. Priv. Enhancing Technol.}, 2022\penalty0 (1):\penalty0
  373--395, 2022.

\bibitem[Ghazi et~al.(2023)Ghazi, Kumar, Nelson, and Manurangsi]{Ghazi0NM23}
Badih Ghazi, Ravi Kumar, Jelani Nelson, and Pasin Manurangsi.
\newblock Private counting of distinct and $k$-occurring items in time windows.
\newblock In \emph{Proceedings, Innovations in Theoretical Computer Science
  (ITCS)}, volume 251, pages 55:1--55:24, 2023.

\bibitem[Gibbons(2001)]{Gibbons2001}
Phillip~B. Gibbons.
\newblock Distinct sampling for highly-accurate answers to distinct values
  queries and event reports.
\newblock In \emph{Proceedings, Very Large Databases (VLDB) Endowment}, 2001.

\bibitem[Gibbons and Tirthapura(2001)]{GibbonsT01}
Phillip~B. Gibbons and Srikanta Tirthapura.
\newblock Estimating simple functions on the union of data streams.
\newblock In \emph{Proceedings, ACM Symposium on Parallelism in Algorithms and
  Architectures {(SPAA)}}, pages 281--291, 2001.

\bibitem[Hardt and Talwar(2010)]{HardtT10}
Moritz Hardt and Kunal Talwar.
\newblock On the geometry of differential privacy.
\newblock In \emph{Proceedings, ACM Symposium on Theory of Computing (STOC)},
  pages 705--714, 2010.

\bibitem[Hehir et~al.(2023)Hehir, Ting, and Cormode]{HehirTC2023}
Jonathan Hehir, Daniel Ting, and Graham Cormode.
\newblock Sketch-flip-merge: Mergeable sketches for private distinct counting.
\newblock In \emph{Proceedings, International Conference on Machine Learning
  (ICML)}, volume 202, pages 12846--12865, 2023.

\bibitem[Henzinger et~al.(2023{\natexlab{a}})Henzinger, Sricharan, and
  Steiner]{HenzingerSS23}
Monika Henzinger, A.~R. Sricharan, and Teresa~Anna Steiner.
\newblock Differentially private data structures under continual observation
  for histograms and related queries.
\newblock \emph{CoRR}, abs/2302.11341, 2023{\natexlab{a}}.

\bibitem[Henzinger et~al.(2023{\natexlab{b}})Henzinger, Upadhyay, and
  Upadhyay]{HenzingerUU23}
Monika Henzinger, Jalaj Upadhyay, and Sarvagya Upadhyay.
\newblock Almost tight error bounds on differentially private continual
  counting.
\newblock In \emph{Proceedings of the 2023 Annual ACM-SIAM Symposium on
  Discrete Algorithms (SODA)}, pages 5003--5039. SIAM, 2023{\natexlab{b}}.

\bibitem[Heule et~al.(2013)Heule, Nunkesser, and Hall]{HeuleNH13}
Stefan Heule, Marc Nunkesser, and Alexander Hall.
\newblock Hyperloglog in practice: algorithmic engineering of a state of the
  art cardinality estimation algorithm.
\newblock In \emph{International Conference on Extending Database Technology
  ({EDBT})}, pages 683--692, 2013.

\bibitem[Indyk and Woodruff(2003)]{IndykW03}
Piotr Indyk and David~P. Woodruff.
\newblock Tight lower bounds for the distinct elements problem.
\newblock In \emph{Proceedings, IEEE Symposium on Foundations of Computer
  Science (FOCS)}, pages 283--288, 2003.

\bibitem[Jain et~al.(2023)Jain, Raskhodnikova, Sivakumar, and Smith]{JainRSS23}
Palak Jain, Sofya Raskhodnikova, Satchit Sivakumar, and Adam~D. Smith.
\newblock The price of differential privacy under continual observation.
\newblock In \emph{Proceedings, International Conference on Machine Learning
  (ICML)}, volume 202, pages 14654--14678, 2023.

\bibitem[Jain et~al.(2012)Jain, Kothari, and Thakurta]{jainkt12}
Prateek Jain, Pravesh Kothari, and Abhradeep Thakurta.
\newblock Differentially private online learning.
\newblock In \emph{Conference on Learning Theory ({COLT})}, volume~23, pages
  24.1--24.34, 2012.

\bibitem[Kane et~al.(2010)Kane, Nelson, and Woodruff]{KaneNW10}
Daniel~M. Kane, Jelani Nelson, and David~P. Woodruff.
\newblock An optimal algorithm for the distinct elements problem.
\newblock In \emph{Proceedings, ACM Symposium on Principles of Database Systems
  (PODS)}, pages 41--52, 2010.

\bibitem[Leskovec et~al.(2014)Leskovec, Rajaraman, and Ullman]{LeskovecRU14}
Jure Leskovec, Anand Rajaraman, and Jeffrey~D. Ullman.
\newblock \emph{Mining of Massive Datasets, 2nd Ed}.
\newblock Cambridge University Press, 2014.

\bibitem[Lyu et~al.(2017)Lyu, Su, and Li]{LyuSL17}
Min Lyu, Dong Su, and Ninghui Li.
\newblock Understanding the sparse vector technique for differential privacy.
\newblock \emph{Proceedings, Very Large Databases (VLDB) Endowment},
  10\penalty0 (6):\penalty0 637--648, 2017.

\bibitem[Metwally et~al.(2008)Metwally, Agrawal, and Abbadi]{MetwallyAA08}
Ahmed Metwally, Divyakant Agrawal, and Amr~El Abbadi.
\newblock Why go logarithmic if we can go linear?: Towards effective distinct
  counting of search traffic.
\newblock In \emph{International Conference on Extending Database Technology
  ({EDBT})}, volume 261, pages 618--629. {ACM}, 2008.

\bibitem[Mir et~al.(2011)Mir, Muthukrishnan, Nikolov, and Wright]{MirMNW11}
Darakhshan Mir, S.~Muthukrishnan, Aleksandar Nikolov, and Rebecca~N. Wright.
\newblock Pan-private algorithms via statistics on sketches.
\newblock In \emph{Proceedings, ACM Symposium on Principles of Database Systems
  (PODS)}, page 37–48, 2011.

\bibitem[Pagh and Stausholm(2021)]{PaghS21}
Rasmus Pagh and Nina~Mesing Stausholm.
\newblock Efficient differentially private ${F}_0$ linear sketching.
\newblock In \emph{International Conference on Database Theory (ICDT)}, volume
  186, pages 18:1--18:19, 2021.

\bibitem[Perrier et~al.(2019)Perrier, Asghar, and Kaafar]{PerrierAK19}
Victor Perrier, Hassan~Jameel Asghar, and Dali Kaafar.
\newblock Private continual release of real-valued data streams.
\newblock In \emph{Network and Distributed System Security Symposium ({NDSS})},
  2019.

\bibitem[R{\'e}nyi(1961)]{Renyi61}
Alfred R{\'e}nyi.
\newblock On measures of entropy and information.
\newblock \emph{Berkeley Symp. on Math. Statist. and Prob.,}, pages 547--561,
  1961.

\bibitem[Smith et~al.(2020)Smith, Song, and Thakurta]{SmithST2020}
Adam~D. Smith, Shuang Song, and Abhradeep Thakurta.
\newblock The {F}lajolet-{M}artin sketch itself preserves differential privacy:
  Private counting with minimal space.
\newblock In \emph{Advances in Neural Information Processing Systems
  (NeurIPS)}, 2020.

\bibitem[Song et~al.(2018)Song, Little, Mehta, Vinterbo, and
  Chaudhuri]{SongLMVC18}
Shuang Song, Susan Little, Sanjay Mehta, Staal~A. Vinterbo, and Kamalika
  Chaudhuri.
\newblock Differentially private continual release of graph statistics.
\newblock \emph{CoRR}, abs/1809.02575, 2018.

\bibitem[Stanojevic et~al.(2017)Stanojevic, Nabeel, and Yu]{StanojevicNY2017}
Rade Stanojevic, Mohamed Nabeel, and Ting Yu.
\newblock Distributed cardinality estimation of set operations with
  differential privacy.
\newblock \emph{IEEE Symposium on Privacy-Aware Computing (PAC)}, pages 37--48,
  2017.

\bibitem[Thakurta and Smith(2013)]{SmithT13}
Abhradeep~Guha Thakurta and Adam Smith.
\newblock {(Nearly)} optimal algorithms for private online learning in
  full-information and bandit settings.
\newblock In \emph{Advances in Neural Information Processing Systems
  (NeurIPS)}, volume~26, 2013.

\bibitem[Wang et~al.(2022)Wang, Pinelis, and Song]{WangPS22}
Lun Wang, Iosif Pinelis, and Dawn Song.
\newblock Differentially private fractional frequency moments estimation with
  polylogarithmic space.
\newblock In \emph{International Conference on Learning Representations
  {(ICLR)}}, 2022.

\bibitem[Woodruff(2004)]{Woodruff04}
David~P. Woodruff.
\newblock Optimal space lower bounds for all frequency moments.
\newblock In \emph{Proceedings, ACM-SIAM Symposium on Discrete Algorithms
  (SODA)}, pages 167--175. {SIAM}, 2004.

\end{thebibliography}

\newpage
\appendix

\section{General lower bound reduction for small $\eps$}\label{sec:smalleps}

We describe an adaptation of a folklore reduction to our problem of interest that allows us to extend a lower bound for $\eps=1$ to any $\eps < 1$. The theorem is stated for item-level differential privacy, but applies to event-level differential privacy as well. In addition, this reduction also works for variants of the model previously discussed: offline mechanisms, the likes model, and the strict turnstile model. 

\begin{theorem}\label{thm:epsred}
    Let $\alpha: \R \to \R$ be a nondecreasing function. Let $T \in \mathbb{N}$, and $\eps, \delta \in [0,1]$ such that $\eps \geq \frac 1 T$. If for all $T\in \N$, every mechanism for $\countdistinct{}$ that is $(1,\delta)$-item-level-DP for streams of length $T$ has error parameter 
    at least $\alpha(T)$, then every mechanism for $\countdistinct{}$ that is $(\eps, \frac{\delta \eps}{2})$-item-level-DP for streams of length $T'$ has error parameter 
    at least $\frac{\alpha(\eps T')}{2\eps}$.
\end{theorem}

\begin{proof} 
    Fix $\eps\in(0,1]$ and  $\delta \in [0,1)$. Let $\ell = \lfloor1/\eps\rfloor$ and $T' \in \mathbb{N}$, where $T'$ is divisible by $\ell$.
    %
    We prove the contrapositive. Namely,  
    let $\cM'$ be a mechanism for $\countdistinct$ that is $\frac{\alpha(\eps T')}{2\eps}$-accurate and $(\eps, \frac{\delta \eps}{2})$-item-level-DP for streams of length $T'$. We use $\cM'$ to construct a mechanism $\cM$ for $\countdistinct$ that is $\alpha(T)$-accurate and $(1, \delta)$-item-level-DP for streams of length $T = 
    {T'/\ell}$.

    Given a stream $\dstream \in \univ_\pm^T$ of length $T= 
    {T'/\ell}$, create a new stream $\dstream'$ of length $T'$ with insertions and deletions from a larger universe $\univ' =  \univ \times [\ell]$ (so every item in $\univ$ corresponds to $\ell$ distinct items in $\univ'$), as follows:
    Initialize $\dstream'$ to be empty. 
    For $t \in [T]$, if $\dstream[t] = +i$, 
    append $+(i,1), \ldots, +(i,\ell)$ to $\dstream'$. Similarly, if $\dstream[t] = -i$, append $-(i,1), \ldots, -(i,\ell)$.
    If $\dstream[t] = \perp$, append $\ell$ stream entries 
    $\perp$ to  $\dstream'$. Finally, define the output of $\cM$ as follows: run 
    $\cM'$ on $\dstream'$ and,
    for each time step $t\in T$, output $\cM(\dstream)[t] = 
    \frac 1 \ell \cdot \cM'(\dstream')[t\cdot \ell]$.

    Changing a single item of $\dstream$ changes
    $\ell$ items of $\dstream'$. Therefore, by the privacy guarantee of $\cM'$, by group privacy (see \Lem{group_privacy}) and since $e^\eps-1\geq \eps$ for all $\eps$, we get that $\cM$  is 
   $(\eps^*,\delta^*)$
    -item-level-DP for streams of length $T = 
    {T'/\ell}$, where $\eps^*=\ell\eps\leq 1$ and  $\delta^*=\frac{\delta \eps}{2}\cdot \frac{e^{\ell\eps}-1}{e^\eps -1}\leq \frac{\delta \eps} {2}\cdot\frac 2 \eps =\delta$. Thus, $\cM$ is $(1,\delta)$-item-level-DP for streams of length $T$.
    
    Finally, we derive the accuracy guarantee of $\cM$
 from the accuracy guarantee of $\cM'$. By the construction of $\dstream',$ $$\countdistinct(\dstream')[t\cdot\ell] = \ell\cdot \countdistinct(\dstream)[t].$$ 
    By the accuracy of $\cM'$, on every input stream $\dstream'$ of length $T'$, with probability at least $0.99$, the maximum error of $\cM'$  is at most $\frac{\alpha(\eps T'
    )}{2\eps}$. 
    Therefore, with probability at least $0.99$, the maximum error of $\cM$ is at most 
    $ \frac 1 \ell \cdot \frac{\alpha(\eps T')}{2\eps} \leq \frac 1 {2\eps \ell} \cdot \alpha(\eps\cdot \ell\cdot T) \leq \alpha(T)$, as desired. This concludes the proof of \Thm{epsred}. 
    %
    %
\end{proof}

\section{Lower bounds for variants of the model}\label{sec:other_lower_bounds}

In this section, we prove lower bounds for $\countdistinct$ in more restricted turnstile streaming models. In \Thm{likes_lower_bound}, we prove that our item-level lower bound applies to the more restricted likes model, whereas in \Thm{strict_lower_bound}, we prove that both our event-level and item-level lower bounds apply to the strict turnstile model. 
In addition, our lower bounds apply even when considering only \emph{offline} mechanisms for $\countdistinct$, defined next.

\begin{definition}[Offline mechanisms for $\countdistinct$]\label{def:offline}
    A mechanism for $\countdistinct$ is an \emph{offline} mechanism if it first receives the entire input stream $x$ before outputting the vector of $\countdistinct$ estimates.
\end{definition}

\begin{definition}[Likes model]\label{def:likes}
    In the likes model, an element can be inserted into the stream only when it is absent and deleted from the stream $x$ only when it is present. Mathematically, for a stream $x$ of length $T$, for all $t\in[T]$, if $x[t] = +u$ for some element $u \in \uni$, then $f_u(x)[t-1] = 0$, and if $x[t] = -u$, then $f_u(x)[t-1] = 1$, where we define  $f_u(x)[0] = 0$  for convenience. 
\end{definition}

\begin{theorem}[Item-level lower bound in the likes model]\label{thm:likes_lower_bound}
     The item-level lower bound in \Thm{LB_CD_item} holds for all offline mechanisms for $\countdistinct$ in the likes model.
\end{theorem}

\begin{proof}

    The proof proceeds identically to the proof of \Thm{LB_CD_item}, with the following observations. 
    First, note that our reduction in \Alg{CD_item_reduction} uses the answers returned by the mechanism $\cM$ for $\countdistinct$ only \emph{after} $\cM$ has produced the entire stream of answers (it does not require online access to the outputs of $\cM$). Therefore, one can restate the reduction so that it only requires black-box access to an offline mechanism $\mathcal{M}$ for $\countdistinct$, and the proof of \Thm{LB_CD_item} would still hold.
    
     Next, observe that the stream $x$ constructed in the reduction in \Alg{CD_item_reduction} belongs to the likes model. This holds because, for each element, the stream entries belonging to that element alternate between insertions and deletions, with the first entry always being an insertion. Finally, the lower bound reduction from $\eps=1$ to $\eps < 1$ in \Thm{epsred} works both for offline mechanisms and streams in the likes model. 
\end{proof}

\begin{definition}[Strict turnstile model]
    In the \emph{strict turnstile model}, an element can be deleted only when it is present in the stream.  Mathematically, for a stream $x$ of length $T$, for all $t\in[T]$, if $x[t] = -u$ for some time step $t$ and element $u \in \uni$, then $f_u(x)[t-1] = 1$, where we define $f_u(x)[0] = 0$ for convenience. 
\end{definition}

\begin{theorem}[Lower bounds for the strict turnstile model]\label{thm:strict_lower_bound}
     The event-level lower bound of \Thm{insertion+deletion+eps} and the item-level lower bound of \Thm{LB_CD_item} hold for all offline mechanisms for $\countdistinct$ in the strict turnstile model.
\end{theorem}

\begin{proof}
    Since the likes model is a special case
    of the strict turnstile model, by \Thm{likes_lower_bound}, the item-level lower bound applies to the strict turnstile model as well. We now focus on the event-level lower bound. Again, the proof proceeds identically to the proof of \Thm{insertion+deletion+eps}, with the following observations. 
    
    First, note that our reduction in \Alg{innerprodred}, similarly to the reduction in \Alg{CD_item_reduction}, uses the outputs of the mechanism $\cM$ for $\countdistinct$ only \emph{after} the mechanism has produced the entire stream of answers. Therefore, one can restate the reduction so that it only requires black-box access to an offline mechanism $\mathcal{M}$ for $\countdistinct$.
    
    Next, the stream $x$ constructed in Step~\ref{step:construct_x_event} of the reduction in \Alg{innerprodred} belongs to the strict turnstile model, since each element deletion is preceded by an insertion of the same element. Finally, the lower bound reduction from $\eps=1$ to $\eps < 1$ in \Thm{epsred} works both for offline mechanisms and streams in the strict turnstile model. 
\end{proof}

\section{A concentration inequality for Gaussian random variables}\label{sec:concentration}

Recall that $\mathcal{N}(\mu, \sigma^2)$ denotes the Gaussian distribution with mean $\mu$ and standard deviation $\sigma$.

\begin{lemma}\label{lem:gauss_conc}
For all random variables $R \sim \cN(0,\sigma^2)$,
\begin{equation*}
    \Pr[|R| > \lambda] \leq 2e^{-\frac{\lambda^2}{2 \sigma^2}}.
\end{equation*}
\end{lemma}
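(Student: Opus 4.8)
The plan is to prove the one-sided bound $\Pr[R > \lambda] \le e^{-\lambda^2/(2\sigma^2)}$ and then combine it with the symmetric bound $\Pr[R < -\lambda] \le e^{-\lambda^2/(2\sigma^2)}$ via a union bound to get the factor of $2$. We may assume $\lambda > 0$, since for $\lambda \le 0$ the right-hand side is at least $2$ and the inequality is vacuous.

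First I would recall the moment generating function of a centered Gaussian: for $R \sim \mathcal{N}(0,\sigma^2)$ and any $s \in \R$, $\E[e^{sR}] = e^{s^2\sigma^2/2}$ (this is a standard fact, obtained by completing the square inside the Gaussian integral; I would either cite it or include the one-line computation $\frac{1}{\sqrt{2\pi}\sigma}\int_{\R} e^{sx - x^2/(2\sigma^2)}\,dx = e^{s^2\sigma^2/2}$). Then, for any $s > 0$, Markov's inequality applied to the nonnegative random variable $e^{sR}$ gives
\[
\Pr[R > \lambda] = \Pr\!\left[e^{sR} > e^{s\lambda}\right] \le e^{-s\lambda}\,\E\!\left[e^{sR}\right] = e^{s^2\sigma^2/2 - s\lambda}.
\]
Optimizing the exponent over $s > 0$, the minimizer is $s = \lambda/\sigma^2$, which yields $\Pr[R > \lambda] \le e^{-\lambda^2/(2\sigma^2)}$.

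Next, since $-R$ is also distributed as $\mathcal{N}(0,\sigma^2)$, the identical argument gives $\Pr[R < -\lambda] = \Pr[-R > \lambda] \le e^{-\lambda^2/(2\sigma^2)}$. Finally, a union bound over the two disjoint events $\{R > \lambda\}$ and $\{R < -\lambda\}$ gives
\[
\Pr[|R| > \lambda] \le \Pr[R > \lambda] + \Pr[R < -\lambda] \le 2 e^{-\lambda^2/(2\sigma^2)},
\]
as claimed.

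There is no genuine obstacle here: this is a textbook Chernoff-type tail bound, and the only "choices" are whether to cite the Gaussian MGF or derive it inline, and whether to prove the one-sided tail via the MGF method (as above) or by directly bounding the tail integral $\frac{1}{\sqrt{2\pi}\sigma}\int_\lambda^\infty e^{-x^2/(2\sigma^2)}\,dx$ using $e^{-x^2/(2\sigma^2)} \le e^{-\lambda^2/(2\sigma^2)} e^{-\lambda(x-\lambda)/\sigma^2}$ for $x \ge \lambda$. I would use the MGF method since it is the shortest and most self-contained. Note also that the lemma as stated is exactly \Lem{gauss_conc}, which is used earlier (e.g., referenced as \texttt{lem:gauss\_max}/the Gaussian concentration step in the proof of Theorem~\ref{thm:item-upper}); a remark that the same bound extends to a maximum of $N$ such variables by a further union bound, i.e. $\Pr[\max_i |R_i| > \lambda] \le 2N e^{-\lambda^2/(2\sigma^2)}$, may be worth including if that is the form actually invoked.
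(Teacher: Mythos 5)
Your proposal is correct, and in fact the paper never proves Lemma~\ref{lem:gauss_conc} at all: it is stated as a standard Gaussian tail bound, and only Lemma~\ref{lem:gauss_max} is proved (by exactly the union-bound step you mention at the end), so your Chernoff/MGF derivation is a perfectly reasonable way to fill in the missing argument. One small correction: your dismissal of the case $\lambda \le 0$ is not quite right --- the claim that ``the right-hand side is at least $2$'' holds only at $\lambda = 0$, while for $\lambda < 0$ the left-hand side equals $1$ and the right-hand side $2e^{-\lambda^2/(2\sigma^2)}$ can be arbitrarily small, so the inequality as written implicitly assumes $\lambda \ge 0$; this is harmless since the paper only invokes the lemma with positive thresholds (e.g., $\lambda = m > 0$ in the proof of Theorem~\ref{thm:item-upper}), but you should state the restriction rather than call the negative case vacuous.
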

\begin{lemma}\label{lem:gauss_max}
Consider $m$ random variables $R_1,\dots,R_m \sim \cN(0,\sigma^2)$. Then
\begin{equation*}
    \Pr[\max_{j \in [m]} |R_j| > \lambda] \leq 2 m e^{-\frac{\lambda^2}{2 \sigma^2}}.
\end{equation*}
\end{lemma}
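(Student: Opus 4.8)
The plan is to obtain Lemma~\ref{lem:gauss_max} as an immediate consequence of Lemma~\ref{lem:gauss_conc} together with a union bound, so I would first outline a proof of the single-variable tail bound and then combine.

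\emph{Single-variable tail bound (Lemma~\ref{lem:gauss_conc}).} Let $R \sim \cN(0,\sigma^2)$ and write $R = \sigma G$ with $G \sim \cN(0,1)$, so that $\Pr[|R| > \lambda] = \Pr[|G| > \lambda/\sigma] = 2\Pr[G > \lambda/\sigma]$ by symmetry of the Gaussian about $0$. For $t \ge 0$ the standard Chernoff bound for the Gaussian gives $\Pr[G > t] \le \inf_{s>0} e^{-st}\,\E[e^{sG}] = \inf_{s>0} e^{-st + s^2/2} = e^{-t^2/2}$, taking $s = t$. Hence $\Pr[|R| > \lambda] \le 2 e^{-\lambda^2/(2\sigma^2)}$, which is exactly the claimed inequality. (For $\lambda < 0$ the statement is vacuous, since then the right-hand side exceeds $1$.)

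\emph{Maximum over $m$ variables (Lemma~\ref{lem:gauss_max}).} Now let $R_1, \dots, R_m \sim \cN(0,\sigma^2)$; note that no independence assumption is needed. The event $\{\max_{j\in[m]} |R_j| > \lambda\}$ is the union of the events $\{|R_j| > \lambda\}$ over $j \in [m]$, so by the union bound and Lemma~\ref{lem:gauss_conc} applied to each $R_j$,
\[
\Pr\Big[\max_{j\in[m]} |R_j| > \lambda\Big] \;\le\; \sum_{j=1}^{m} \Pr\big[|R_j| > \lambda\big] \;\le\; \sum_{j=1}^{m} 2 e^{-\lambda^2/(2\sigma^2)} \;=\; 2m\, e^{-\lambda^2/(2\sigma^2)},
\]
as desired.

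There is no genuine obstacle here: the only judgment call is how much detail to include in justifying the single-Gaussian tail estimate (one could instead simply cite it as standard), and the passage to the maximum is an immediate application of finite subadditivity of probability.
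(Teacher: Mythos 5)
Your proof is correct and matches the paper's argument exactly: a union bound over the $m$ events $\{|R_j|>\lambda\}$ combined with the single-Gaussian tail bound of Lemma~\ref{lem:gauss_conc} (which the paper states without proof and you additionally verify via the standard Chernoff argument). Nothing further is needed.
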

\begin{proof}
By a union bound and \Lem{gauss_conc},
\begin{align*}
    \Pr[\max_{j \in [m]} |R_i| > \lambda] 
      &= \Pr[\exists i \in [m] \text{ such that } |R_i| > \lambda] \\
     &\leq \sum_{j=1}^m \Pr[ |R_i| > \lambda] \leq
    \sum_{j=1}^m  2e^{-\frac{\lambda^2}{2 \sigma^2}}  =  2m e^{-\frac{\lambda^2}{2 \sigma^2}}.
    \hspace{3cm} \hfill
    \qedhere
\end{align*}
\end{proof}

\end{document}